\newcommand{\tuwien}{Institute of Computer Engineering, TU Wien, Austria}
\newcommand{\ethz}{Institute for Theoretical Physics, ETH Zürich, Switzerland}
\renewcommand{\vec}[1]{\boldsymbol{#1}}
\newcommand{\E}[2]{\ensuremath{\mathbb E_{#1}\left[ #2 \right]}}
\newcommand{\Var}[1]{\ensuremath{\operatorname{Var}\left[ #1 \right]}}
\renewcommand{\exp}[1]{\ensuremath{\operatorname{exp}\left( #1 \right)}}
\newcommand{\mmd}[1]{\ensuremath{\operatorname{MMD}^2\left( #1\right)}}
\newcommand{\poly}[1]{\ensuremath{\operatorname{poly}\left( #1\right)}}
\newcommand{\prob}[2][]{\ensuremath{\operatorname{Prob}_{#1}\left( #2 \right)}}
\newcommand{\estMMD}{\ensuremath{\widehat{\operatorname{MMD}}^2}}
\newcommand{\MMD}{\ensuremath{{\operatorname{MMD}}^2}}
\newcommand{\Dir}[1][\vec\alpha]{\ensuremath{\operatorname{Dir}\left( #1 \right)}}
\newtheorem{definition}{Definition}
\newtheorem{lemma}{Lemma}
\newtheorem{theorem}{Theorem}
\newtheorem{corollary}{Corollary}
\let\section\section
\let\section\section	
\begin{document}

\title{Limits of quantum generative models with classical sampling hardness}
\author{Sabrina Herbst}
\email{sabrina.herbst@tuwien.ac.at}
\affiliation{\tuwien}
\author{Ivona Brandi\'c}
\affiliation{\tuwien}
\author{Adrián Pérez-Salinas}
\affiliation{\ethz}

\begin{abstract}
Sampling tasks have been successful in establishing quantum advantages both in theory and experiments. 
This has fueled the use of quantum computers for generative modeling to create samples following the probability distribution underlying a given dataset. 
In particular, the potential to build generative models on classically hard distributions would immediately preclude classical simulability, due to theoretical separations.
In this work, we study quantum generative models from the perspective of output distributions, showing that models that anticoncentrate are not trainable on average, including those exhibiting quantum advantage. 
In contrast, models outputting data from sparse distributions can be trained.
We consider special cases to enhance trainability, and observe that this opens the path for classical algorithms for surrogate sampling. 
This observed trade-off is linked to verification of quantum processes.
We conclude that quantum advantage can still be found in generative models, although its source must be distinct from anticoncentration. 
\end{abstract}
\maketitle

\section{Introduction}
Quantum computers are known to have extended capabilities with respect to their classical counterparts, which has been established through examples such as Shor factorization~\cite{Shor_1997}, Grover search~\cite{grover}, or experimental demonstrations of quantum advantages~\cite{Arute_2019, Broome_2013}. 
Inspired by the success of classical Machine Learning (ML), 
quantum machine learning (QML) explores recipes to process data through quantum resources and surpass current computational capabilities. 
The presence of data, however, incorporates subtleties for showing formal separations, since simulating a computational process and learning from data are fundamentally different tasks ~\cite{schreiber2023classical, gil-fuster2024relation,huang2021power}.
Still, in certain cases, it is possible to show robust separations~\cite{jerbi2023shadows, gyurik2024quantum, Sweke2021quantumversus, born-supremacy}, e.g., by planting the seed of a quantumly easy and classically hard problem inside a learning algorithm~\cite{liu2021rigorous}.  
For that, strong assumptions on data, as the central element of (Q)ML, are necessary. 
However, realistic data might be noisy, incomplete, imperfect or redundant, reducing the scope of applicability of these separations.
The central difficulty thus lies in finding practically useful algorithms that perform better or are more efficient compared to classical counterparts.

Quantum computers suit a natural interpretation as generative models, i.e., ML models learning data distributions to subsequently sample from them.
Sampling is inherent to quantum processes, since measurements are the only way to retrieve information from quantum states. 
In particular, the measurement process probabilistically generates samples based on the underlying state through Born's rule.
This property is exploited in generative QML (GQML) models, such as parameterized quantum circuits, Boltzmann machines or Gibbs states~\cite{barthe-gen, shenkurkin, kurkinshen, zoufal-phd}.
We recall that sampling tasks were the first framework where quantum advantage was found, with the examples of Boson sampling~\cite{aaronson2011computational}, Instantaneous Quantum Polynomial (IQP)~\cite{Bremner_2010, bremner2016average} circuits or random quantum circuits (RQC)~\cite{boixo, Arute_2019}. Using these circuits as subroutines open the potential for quantum-advantageous generative models~\cite{PhysRevA.108.042406, recioarmengol2025iqpoptfastoptimizationinstantaneous}.

In this work, we show that classical hardness of circuits has its reflections in QGML.  
Shortly, classical hardness requires \textit{anticoncentration}, that is, the underlying probability distribution is on average \textit{almost but not exactly} uniform.
As a consequence, most loss functions required to train the generative models will strongly concentrate, preventing any training. 
The result shows reminiscence of barren plateaus \cite{McClean_2018, larocca2024review, cerezo2025does}.
This observation is tightly related to verification of a quantum sampling task, and deeply rooted in the assumptions allowing for quantum advantage.

To show these results, we analytically study three abstractions of quantum circuits; (1) a circuit description corresponding to product states, (2) a family yielding pseudo-independent probability distributions, which we connect to quantumly-advantageous circuits, and (3) pseudo-independent sparse distributions, which are related in spirit to classically verifiable circuits with quantum advantage~\cite{aaronson2024verifiablequantumadvantagepeaked, zhang2025complexityhardnessrandompeaked}.
We show that the first two families lead to exponential concentration of the square distance between probability distributions, although only pseudo-independent states can show classical hardness. 
For the sparse case, loss functions do not concentrate, however, due to missing anticoncentration, classical sampling hardness is not established. 
We further provide numerical experiments to interpolate between analytical results and practically relevant circuits.

The concentration results we find have direct implications for GQML, targeting the subtle difference between sampling hardness and utility of the model. 
The lack of trainability, implied by the on-average impossibility to distinguish between the target function and the produced one, preclude the on-average usage of hard-to-sample generative models. 
While it is possible to find trainable instances, the sampling hardness guarantees are severely compromised in this case. 

The paper is organized as follows. 
\Cref{sec.background} provides relevant background on QGML and classical sampling hardness. 
\Cref{sec:setup-loss-function} states our theoretical framework and proves concentration for loss functions. 
\Cref{sec:numerical-exploration} numerically extends the theoretical statements. 
Connection to verification and implications for QML are detailed in \Cref{sec.verification} and \Cref{sec.implications_qml}. 
We conclude in \Cref{sec.conclusion}.

\section{Background}\label{sec.background}

\subsection{Quantum computers as generative models}

Quantum computers are easily interpreted as discrete sampling machines, in which the probability of sampling a possible outcome, e.g., a bitstring, follows Born's rule. Given a circuit $C$, this probability is given as 
\begin{equation}
    p_C(x) = \left\vert \bra x C \ket 0 \right\vert^2, \quad x \in \{0, 1\}^n.
\end{equation}
Interestingly, the task of sampling bitstrings from a quantum computer has been fertile in showing quantum advantages, both theoretically and experimentally, through, e.g., IQP~\cite{bremner2016average}, RQC~\cite{boixo, aaronson2016complexitytheoretic, Arute_2019} or Boson sampling~\cite{aaronson2011computational}. 

This perspective opens a path for using quantum computers as a generative ML model, where the goal is 
to learn the underlying distribution $q$ of a dataset of samples. 
For GQML, this is done, e.g., with Quantum Circuit Born machines~\cite{Benedetti_2019}, Quantum Boltzmann Machines~\cite{PhysRevX.8.021050} and Quantum Generative Adversarial Networks~\cite{PhysRevLett.121.040502}.
Any generative model based on a sampling task with provable quantum advantage has the potential to inherit the classical hardness, thus, no classical algorithm could - even approximately - substitute the quantum machine. 
This has, in particular, been explored for IQP circuits~\cite{PhysRevA.108.042406, recioarmengol2025trainclassicaldeployquantum}, which even allow for universality if ancillary qubits are used~\cite{kurkinshen}. 
In this scenario, classical sampling hardness -- in the assumptions of the original IQP hardness proofs -- is, to the best of our knowledge, an open problem. 

From a practical point of view, a (Q)GML model must be trainable to be useful, that is, there must exist an efficient algorithm that finds one particular instance of the model that accurately mimics the training data \cite{gil-fuster2024relation}. 
The notion of trainability has been widely addressed in quantum variational models, for instance in the context of barren plateaus (BP), yielding fundamental trade-offs between trainability and classical simulability~\cite{McClean_2018, cerezo2025does}.
In a nutshell, very general models with no proper bias behave essentially randomly and are not trainable due to the exponential-in-qubits concentration of most loss functions over the search space.

For all generative models, training requires an estimation of the difference between the current candidate and the target probability distribution.
We distinguish between explicit and implicit losses depending on whether samples or (estimated) probability vectors are used for calculation. 
In a quantum setting, implicit losses can be encoded into observables.  
If their expectation values can be efficiently estimated classically, 
the training can be delegated on classical surrogates~\cite{rudolph-generative-modelling, van-den-nest}, thus avoiding pitfalls when using current hardware. 

\subsection{Sampling hardness}

\begin{figure}[t!]
    \centering
    \begin{tikzpicture}
    \node at (0, 0) {\includegraphics[width=0.9\linewidth]{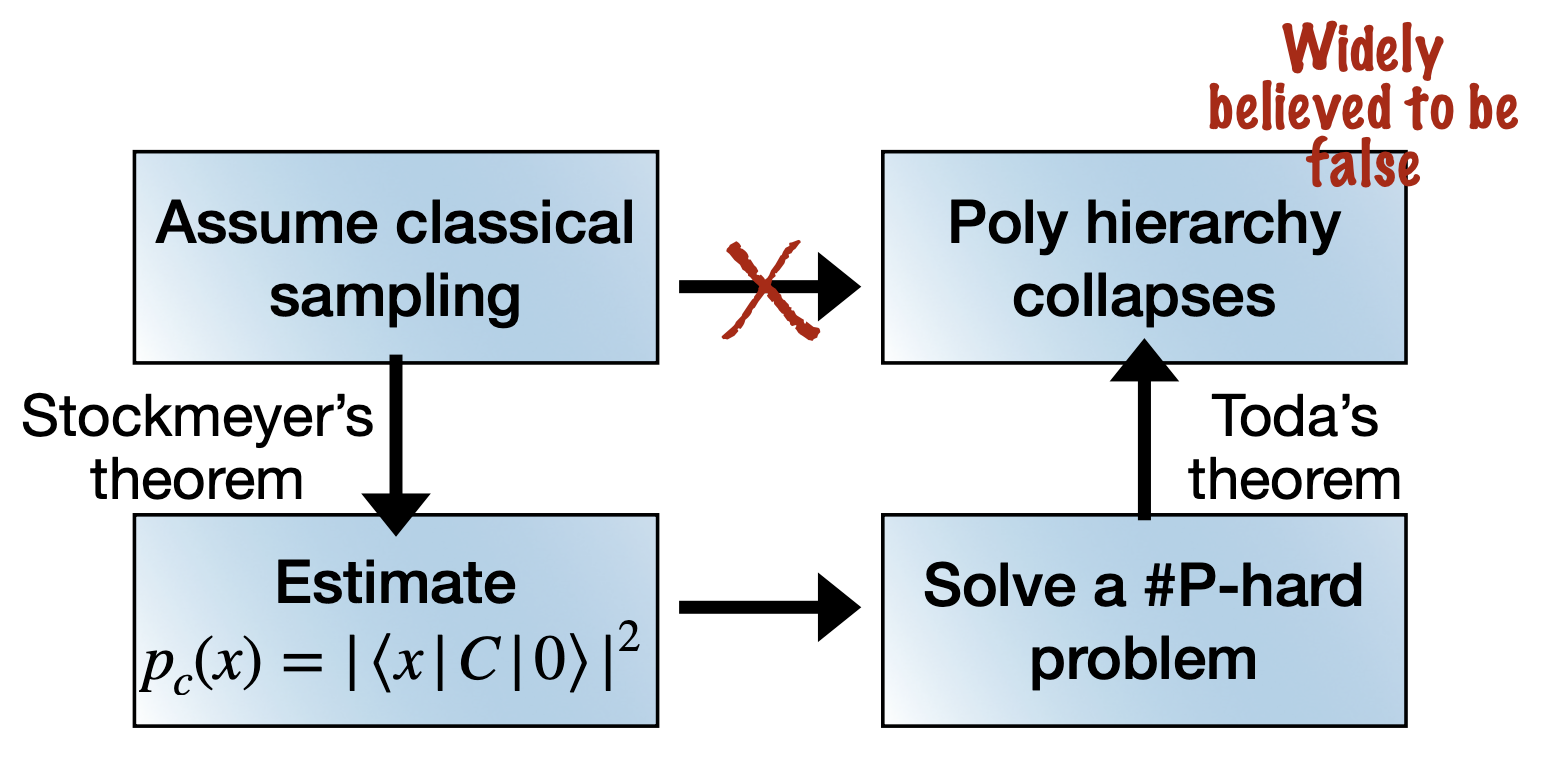}};
    \node (toda) at (1.5, -.3) {\cite{toda}};
    \node (stockmeyer) at (-1.5, -.3) {\cite{stockmeyer}};
    \end{tikzpicture}
    \caption{Schematic description for the proof of classical hardness on sampling tasks. 
    The initial step is assuming the existence of a classical machine outputting samples from the distribution generated by the quantum device (to $\epsilon$ error). 
    The samples are used as an oracle to approximate $p_C(x) = \left\vert \bra{x} C \ket 0\right\vert^2$, which encodes a \#P-hard problem. 
    Multiplicative estimation of this quantity would imply the collapse of the polynomial hierarchy, which is conjectured to be false. Hence, classical sampling cannot be possible. }
    \label{fig.scheme-hardness}
\end{figure}

As mentioned before, sampling has proven to be a fruitful ground for quantum advantage~\cite{bremner2016average,boixo, aaronson2016complexitytheoretic, Arute_2019,aaronson2011computational}. The hardness proofs all follow a common pattern, rooted in hard problems from complexity theory, which is schematically depicted in \Cref{fig.scheme-hardness}.
In particular, to show hardness, the quantum circuit must encode the solution of a \#P-hard problem into the probability of sampling a given bitstring.
By Stockmeyer's theorem~\cite{stockmeyer}, sampling from the circuit then allows estimating said probability.
The existence of a classical machine that samples from a probability distribution $\epsilon$-close to the one generated by the quantum machine, thus, implies a multiplicative estimation of the solution of a \#P-problem, and more importantly the collapse of the polynomial hierarchy via Toda's theorem~\cite{toda}. This last event is widely believed to be false under common assumptions in complexity theory, thus proving (under the same assumptions) classical hardness\footnote{Several subtleties in these proves are hidden due to space and narrative constraints. We refer the interested reader to the original sources for an in-depth discussion.}.

Crucially, the \#P-hard requirement previously considered is only compatible with the estimation of $p_C(x)$ to multiplicative precision, because the output of a quantum circuit is a number between $0$ and $1$, and not an integer number of solutions, as in the conventional description of a \#P problem. 

Further, sampling hardness in the considered examples is an average statement established through anticoncentration, usually presented in the literature in two different albeit equivalent forms~\cite{dalzell2022random}, namely
\begin{align}
    \prob{p_c(x) \geq \frac{y}{2^n}} & \geq \beta & \forall x; \;  y, \beta > 0 \label{eq.anticoncentration1}, \\
    2^{2n}\E{}{p_c(x)^2} & \geq \beta^\prime & \forall x; \; \beta^\prime > 1\label{eq.anticoncentration2}.
\end{align}
A paradigmatic example of a distribution of this kind is the Porter-Thomas distribution \cite{porter1956fluctuations}. 
Intuitively speaking, anticoncentration means that the typical probability distribution of a quantum model is \textit{almost but not exactly} uniformly distributed. 
Hence, with high probability, $p_C(x) \in \Theta(2^{-n})$. 
Under this assumption, additive precision, 
\begin{equation}
    \left\vert p_C(x) - \hat{p}_C(x)\right\vert<\epsilon, \quad \epsilon^{-1} \in \mathcal O\left(\poly n\right)
\end{equation}
becomes meaningless. 
In fact, the diagonal structure of IQP circuits allows for the existence of an efficient classical algorithm for estimating $p_C(x)$ to $1/\poly{n}$ additive precision~\cite{van-den-nest}. 
However, this estimation of $p_C(x)$ does not solve the hard problem lying at the core of quantum advantage.

\section{Concentration of loss functions}\label{sec:setup-loss-function}

In this section, we model three relevant examples of QGML architectures with three different families of probability distributions and analyze statistical properties of their loss.
More specifically, we consider i) product distributions, representing product quantum states, ii) pseudo-independent probability distributions, representing and drawing inspiration from RQC and IQP, and iii) sparse pseudo-independent distributions, representing an adapted version of peaked quantum circuits~\cite{aaronson2024verifiablequantumadvantagepeaked, zhang2025complexityhardnessrandompeaked}.

In summary, our results, shown in \Cref{tab:summary}, show that any loss on anticoncentrated distributions will exponentially concentrate. 
This implies that quantum models with sampling hardness rooted in anticoncentration will not be trainable. 
In addition, the values $p_C(x)$ will behave approximately as independent variables, requiring exponential precision to observe correlations, thus, hindering the applicability of such circuits.
For these reasons, such architectures will not provide a usable generative model on average.

\begin{table*}
    \begin{tabular}{||c||c|c|c||}\hline
        Distribution & Product, def. \ref{def.product_distribution} & Pseudo-independent, def. \ref{def.independent_distribution} & Peaked pseudo-independent, def. \ref{def.peaked_distribution} \\ \hline
        \parbox[b]{3cm}{Average behavior: \vspace{3mm} \\ Black lines are $\mu - \sigma$ \\ Blue lines are $\mu + \sigma$\vspace{10mm}} & \includegraphics[width = 4cm]{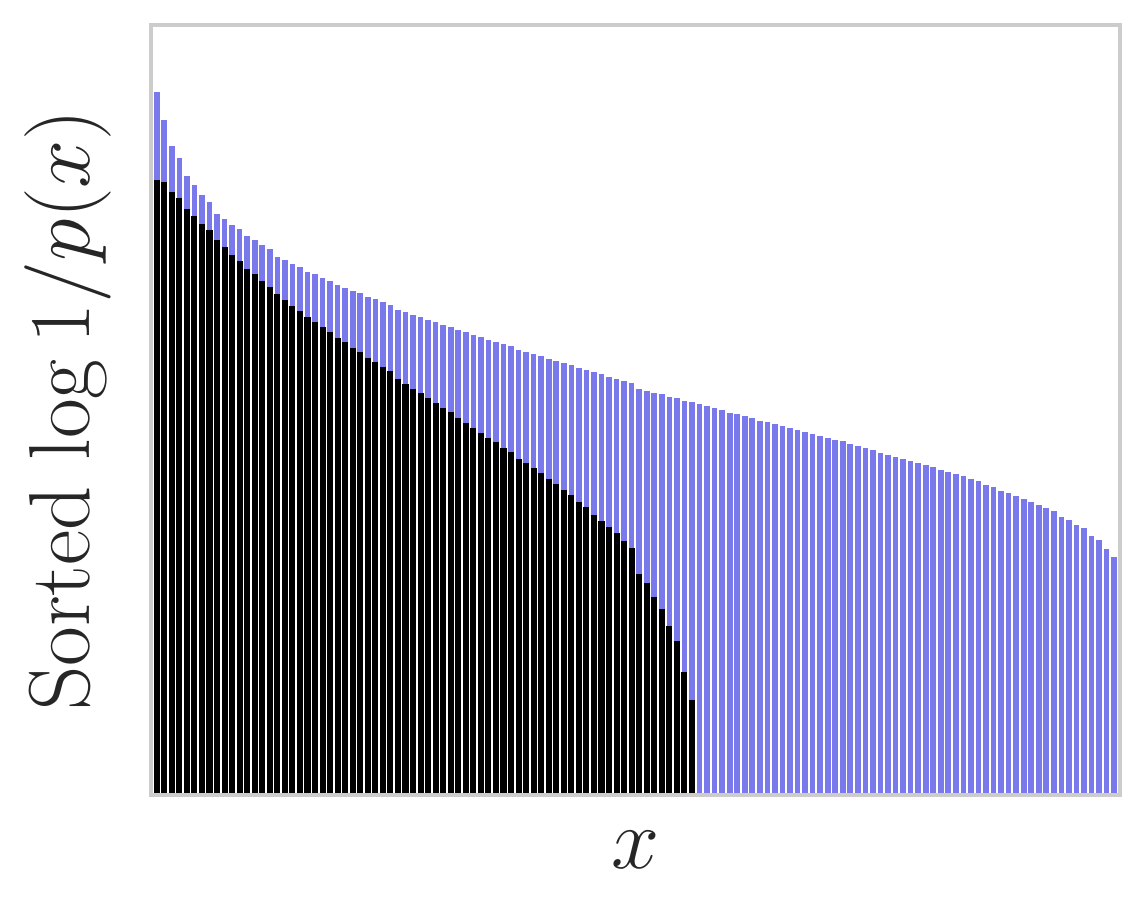} & \includegraphics[width = 4cm]{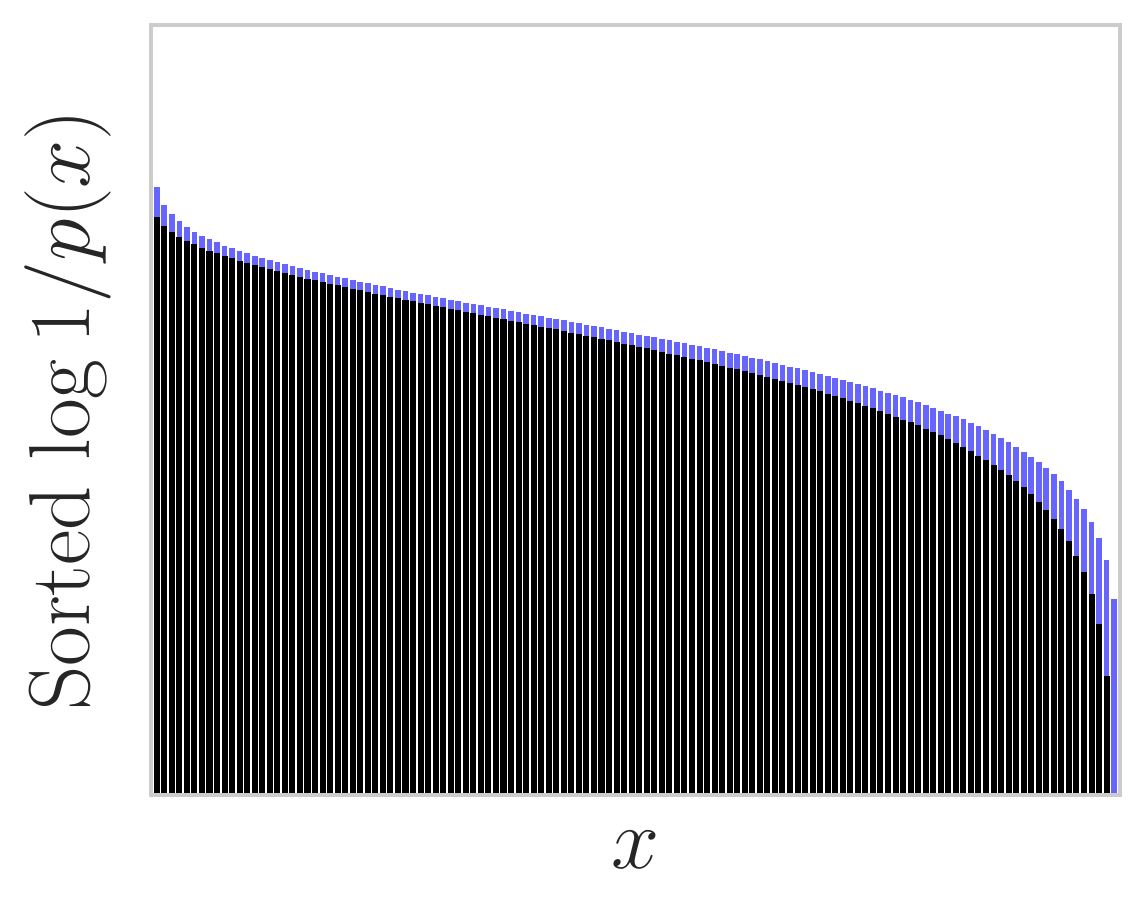} & \includegraphics[width = 4cm]{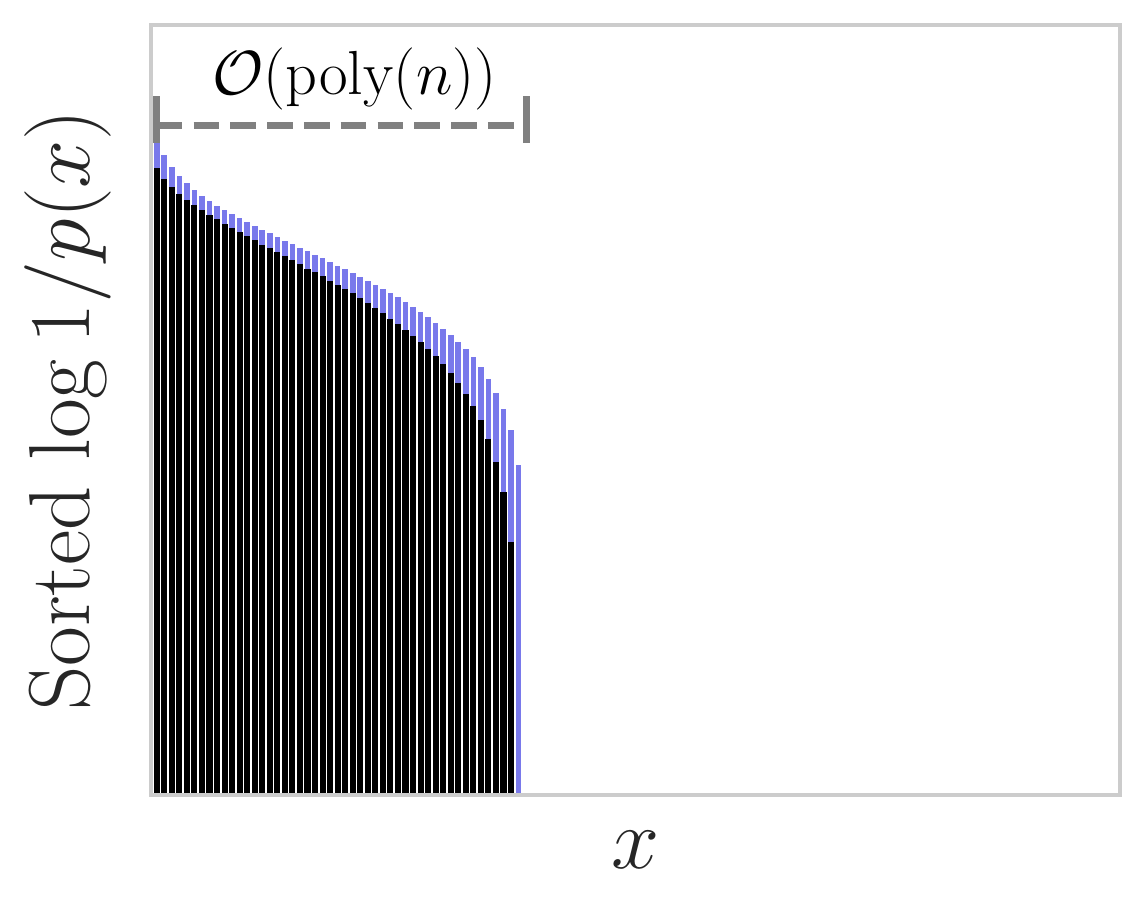} \\\hline
        Quantum Circuit & Product circuits & IQP~\cite{Bremner_2010}, RQC \cite{boixo} & Peaked circuits \cite{aaronson2024verifiablequantumadvantagepeaked} \\ \hline
        Anticoncentration & $\cross$,  prop. \ref{prop.concentrationproduct} & $\checkmark$, prop. \ref{prop.anticoncentrationindependent} & $ \cross$, prop. \ref{prop.concentrationpeaked} \\ \hline
        Loss concentration & $\checkmark$, prop. \ref{prop.msdproduct} & $\checkmark$, prop. \ref{prop.msdpseudo} & $\cross$, prop. \ref{prop.msdpeaked} \\ \hline 
        Classical sampling & \checkmark \cite{bravyi2021classical} & $\cross$  \cite{bremner2016average, boixo} & ?, see sec. \ref{sec.peaked_circuits} \\ \hline
        Clas. estimation $p_C(x)$ & \checkmark \cite{cirac2021matrix} & $\cross$ \cite{bremner2016average, boixo} & $\cross $ \cite{zhang2025complexityhardnessrandompeaked}\\ \hline
    \end{tabular}
    \caption{Properties of the three considered distributions in this work. 
    A typical example of each distribution lies within the blue area. 
    $\mu$ and $\sigma$ are element-wise mean and standard deviations of the sorted probability vectors $p_C(x)$. 
    Anticoncentration is the standard definition from \Cref{eq.anticoncentration1}. 
    Classical sampling and estimation of $p_C(x)$ correspond to the existence of efficient classical algorithms to perform these tasks.}
    \label{tab:summary}
\end{table*}

Our framework assumes that the target distribution is included in the family of distributions spanned by the generative model.
In fact, we assume that $q$ is a typical outcome of the generative model, mirroring the average-case hardness of the distribution families.
However, our results hold even in the case where $q$ is an outlier, since it will be approximately equally distant from all typical candidates $p$, see \Cref{sec.implications_qml}.

\subsection{Three analytical families of distributions}

We now introduce the different types of distributions covering the three extreme cases of relevance to generative QML. For each probability distribution, we study their (anti)concentration and the typical distance of two probability distributions of each kind, in \Cref{sec.preliminaries_loss_functions}. The main distance we consider is the squared distance (SD), see \Cref{def:msd}.

\subsubsection{Product distributions}\label{sec.def_product}

The first distribution we consider is the \textit{product distribution}. 
To generate an instance of a product distribution, we sample $n$ uniformly-random real numbers $\vec a, a_i \in [0, 1]$. 
Then, bitstrings are sampled bit by bit; $0$ with probability $a_i$, $1$ with probability $1 - a_i$ as a collection of Bernoulli processes.
This procedure is a direct recipe for an efficient classical algorithm sampling from this probability distribution, and it also returns the probability of sampling each bitstring $p_{\vec a}(x)$.
We formalize the distribution family as follows.

\begin{figure}
    \centering
    \includegraphics[width=.9\linewidth]{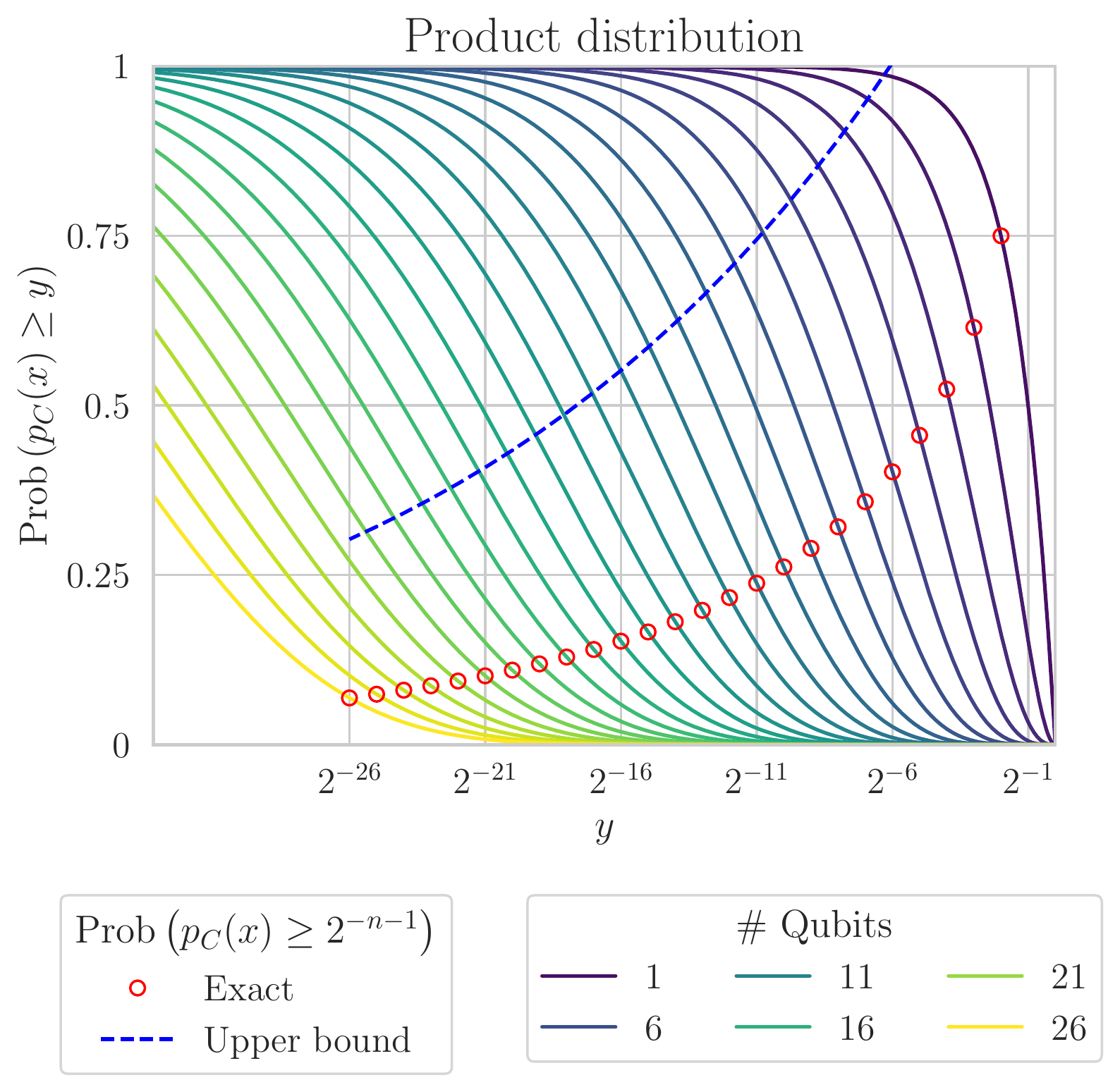}
    \caption{Concentration of the product distribution. The solid lines depict the quantity $\prob{p_C(x) \geq y}$, for different numbers of qubits $n \in [1, 26]$, the red circles highlight the values at $y = 2^{-(n + 1)}$, and the blue line is the upper bound from \Cref{prop.concentrationproduct}. The relevant observation for concentration is that these red circles render exponentially vanishing values as $n$ increases.}
    \label{fig.product_concentration}
\end{figure}

\begin{definition}[Product distribution]\label{def.product_distribution}

Let $\vec a \in [0, 1]^n$ be uniformly distributed on its domain. A bitstring $x \in \{ 0, 1\}^n$ is randomly distributed according to the product distribution if 
\begin{equation}
    p_{\vec a}(x) = \prod_{i = 1}^n \left(x_i + (-1)^{x_i}a_i\right).
\end{equation}

\end{definition}

The family of quantum circuits that generates this product distribution is just all quantum circuits composed by a layer of single-qubit unitaries, sampled from the single-qubit Haar-random distribution~\cite{diestel2014joys}. We use this example as the representative of all quantum circuits with low entanglement.
An immediate extension are matrix product states (MPS)~\cite{cirac2021matrix}, which fall back to product states for MPS with bond dimension one.
MPS are classically simulable, both for estimating $p_C(x)$ directly and sampling~\cite{bravyi2021classical}.

As a first step, we consider (anti)concentration properties of the product distribution.
Observe that $p_C(x)$ enjoys strong symmetry over $x$ due to the uniform distribution over the parameter $\vec a$. In particular, a change in one bitstring is equivalent to the map $a_i \mapsto 1 - a_i$. Consider, without loss of generality, $a_i \leq 1/2$. Then, those bitstrings with more $0$'s are geometrically more likely to appear than bitstrings with more $1$'s. 
This implies that the probability distribution will be biased towards a few bitstrings, which will become more prevalent with larger $n$.
Formally, this phenomenon reads as follows. 
\begin{restatable}[Concentration of product distribution]{proposition}{concentrationproduct}\label{prop.concentrationproduct}
Let $p_{\vec a}$ be a product distribution. This probability distribution concentrates, as
\begin{equation}
    \forall x: \quad \prob{p_{\vec a}(x) \geq \frac{y}{2^n}} \leq \frac{\exp{-\mathcal O(n)}}{\sqrt y}.
    \end{equation}
\end{restatable}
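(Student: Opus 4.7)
The plan is to reduce the tail estimate to a single canonical bitstring using symmetry, and then apply a fractional-moment Markov inequality tuned to produce exactly the $1/\sqrt{y}$ factor appearing in the target bound.

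First, I would exploit the reflection symmetry of the uniform law on $[0,1]$: since $1-a_i$ has the same distribution as $a_i$, each factor $(x_i + (-1)^{x_i} a_i)$ is distributed as $a_i$ for any fixed $x_i \in \{0,1\}$. Hence $p_{\vec a}(x)$ has the same law as $p_{\vec a}(0^n) = \prod_{i=1}^n a_i$ for every bitstring $x$, and it suffices to bound the upper tail of the product of $n$ i.i.d.\ uniforms. This step uses only the product structure of $p_{\vec a}$ and the independence of the $a_i$.

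Next, instead of taking logarithms and optimizing a Chernoff bound, I would apply Markov's inequality directly to $\sqrt{p_{\vec a}(x)}$, because the $1/2$-power is the moment that naturally generates the $1/\sqrt{y}$ factor. Independence together with the elementary integral $\mathbb{E}[\sqrt{a_i}] = \int_0^1 \sqrt{a}\,\mathrm d a = 2/3$ gives $\mathbb{E}[\sqrt{p_{\vec a}(x)}] = (2/3)^n$, and Markov then yields
\[
\prob{p_{\vec a}(x) \geq \tfrac{y}{2^n}} \leq \frac{\mathbb{E}[\sqrt{p_{\vec a}(x)}]}{\sqrt{y/2^n}} = \frac{(2/3)^n\, 2^{n/2}}{\sqrt{y}} = \frac{1}{\sqrt{y}}\left(\frac{2\sqrt{2}}{3}\right)^n.
\]
Since $2\sqrt{2}/3 = 0.9428\ldots < 1$, the prefactor is $\exp(-\mathcal O(n))$, matching the claimed bound.

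There is no real obstacle once the appropriate moment is identified: the $\sqrt{y}$ in the target essentially dictates using the $1/2$-power moment, after which the calculation is a one-line Markov bound. A fully optimized Chernoff argument over $t > 0$ would sharpen the constant in the exponent (with optimum near $t+1 = 1/\log 2$), but this refinement is unnecessary for the $\exp(-\mathcal O(n))$ statement and only obscures the argument.
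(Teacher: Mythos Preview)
Your proof is correct and considerably more direct than the paper's. Both arguments begin with the same symmetry reduction to $x=0^n$, but then diverge: the paper performs the change of variables $a_i=e^{-z_i}$, recognizes the resulting integral as an incomplete Gamma function, reinterprets it as a Poisson upper tail, applies the Poisson Chernoff bound, and finally simplifies via large-$n$ approximations to reach $\approx e^{-n/20}/\sqrt{y}$. Your route is a one-line fractional-moment Markov inequality with $t=1/2$, yielding $(2\sqrt 2/3)^n/\sqrt y$; this is rigorous for every $n$ (no asymptotic approximations) and in fact gives a slightly better exponent, since $-\ln(2\sqrt 2/3)\approx 0.059>1/20$. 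The paper's detour through the incomplete Gamma function does buy something elsewhere---it produces the exact CDF used in \Cref{fig.product_concentration}---but for the proposition as stated your argument is the cleaner one, and your remark that the optimal exponent is attained at $t+1=1/\ln 2$ correctly identifies how to sharpen it further.
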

The proof relies on finding the volume of the values of $\vec a$ such that $\prod _ia_i = y'\rightarrow \sum_{i} \log a_i = \log y'$. After a change of variables, this is equivalent to computing the volume of a simplex, which scales as $\mathcal O\left((y'/n)^n\right)$. The integral for all compatible simplices and an upper Chernoff bound suffices to simplify the expression to the obtained result. A graphical description of the obtained formulae is available in \Cref{fig.product_concentration}, and the proof can be found in \Cref{app:product-distribution}.

\subsubsection{Pseudo-independent distributions}\label{sec.def-pseudo-independent}
The second family of probability distribution is what we call \textit{pseudo-independent distributions}. These arise from i.i.d. sampling positive random variable following some distribution, and normalizing by the sum. $p_C(x)$ then corresponds to the normalized random variables.

\begin{definition}[Pseudo-independent distributions]\label{def.independent_distribution}
Let $Y_1, Y_2, \ldots Y_N \in [0, \infty)$, be a set of $N = 2^n$ independent and identically distributed random variables. A bitstring $x \in \{ 0, 1\}^n$ is randomly distributed according to a pseudo-independent distribution if 
\begin{equation}
    p(x) = \frac{Y_x}{\sum_{j = 1}^N Y_j}. 
\end{equation}
By definition, $p(x) \in [0, 1]$, $\sum_{x = 1}^N p(x) = 1$, and the marginal distributions are identical for all $x$.
\end{definition}

We call this family pseudo-independent because all possible correlations need to come from the normalization, and these correlations decrease as $N$ increases. In a nutshell, the average of all i.i.d. sampled variables needs to concentrate around the true mean of the underlying distribution, following the central limit theorem. Formally:   
\begin{corollary}\label{cor.approx_indep}
The random variables from \Cref{def.independent_distribution} are approximately independent, as
\begin{equation}
    \left\vert p(x) - \frac{Y_x}{\E{}{Y} N} \right\vert \leq \frac{Y_x}{\E{}{Y} N} \frac{\Var{Y}}{N k^2} 
\end{equation}
with probability at least $1- k^{-2}$, for $Y_x$ being independent variables. 
\end{corollary}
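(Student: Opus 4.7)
The plan is to apply Chebyshev's inequality to the partition function $S = \sum_{j=1}^N Y_j$ after a short algebraic rearrangement of the target quantity. Concretely, I would first factor
\begin{equation}
    p(x) - \frac{Y_x}{N\E{}{Y}} \;=\; \frac{Y_x}{S} - \frac{Y_x}{N\E{}{Y}} \;=\; \frac{Y_x}{N\E{}{Y}}\cdot\frac{N\E{}{Y}-S}{S},
\end{equation}
so that the quantity we must control is exactly the deterministic "as-if-independent" term $Y_x/(N\E{}{Y})$, multiplied by the relative fluctuation $|S - N\E{}{Y}|/S$ of the normalization. This factorization makes the prefactor appearing on the right-hand side of the corollary drop out for free, and isolates the only source of non-independence in $p(x)$.

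Next, I would exploit the i.i.d.\ assumption on $Y_1,\dots,Y_N$ to obtain $\E{}{S}=N\E{}{Y}$ and, crucially, $\Var{S}=N\,\Var{Y}$, the latter being where independence enters. Chebyshev's inequality applied to $S$ then gives, at a threshold chosen so that the failure probability equals $1/k^2$, a deviation bound of the order $\sqrt{N\Var{Y}}/k$ on $|S - N\E{}{Y}|$. On the complementary event of probability at least $1-k^{-2}$, the same bound also implies that $S$ in the denominator is close to $N\E{}{Y}$ and can be replaced by it, yielding a relative error of the claimed form $\Var{Y}/(Nk^2)$ once substituted into the factorization of step one.

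No step is a genuine obstacle: the argument is essentially one line of algebra plus a single invocation of Chebyshev. The only place requiring a little care is the replacement $S \mapsto N\E{}{Y}$ in the denominator, which is automatic on the Chebyshev event for $N$ sufficiently large (i.e.\ once $N\E{}{Y}$ dominates the Chebyshev deviation), and would need to be handled explicitly so that the inequality holds with the stated probability $1-k^{-2}$ and no additional low-probability failure mode is introduced.
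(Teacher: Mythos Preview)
Your proposal is correct and follows essentially the same route as the paper. The paper's argument (\Cref{le.diff_average_Y} and \Cref{cor.diff_average_Y_inv}) applies Chebyshev to $\hat\mu=S/N$ using $\Var{\hat\mu}=\sigma^2/N$, then handles the passage from $\lvert\mu-\hat\mu\rvert/\mu$ to $\mu\lvert 1/\mu-1/\hat\mu\rvert$ by first-order error propagation---exactly the denominator replacement you flag as the only delicate step---after which the factorization $\lvert p(x)-Y_x/(N\mu)\rvert=(Y_x/N)\lvert 1/\hat\mu-1/\mu\rvert$ gives the claim.
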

For a more detailed statement, we refer the reader to \Cref{le.diff_average_Y}, in \Cref{app:independent-distribution}. 
Notice that the pseudo-independence assumption will only hold if the variance of the underlying distribution is defined. One could in principle force this variance to increase with the number of dimensions to break this assumption. This is considered in \Cref{sec:numerical-exploration} under the form of \textit{Pareto distributions}. 

The pseudo-independent distributions are chosen as the representative of all quantum circuits whose output are \textit{almost but not exactly} uniformly distributed, such as IQP and RQC. 
In fact, these circuits approximately follow this definition based the original works on quantum advantage. 
It is commonly known that RQC follow a Porter-Thomas distribution~\cite{boixo}, that is
\begin{equation}
    \prob{p_C(x) \geq y} \approx \exp{-N y}, 
\end{equation}
which obviously anticoncentrates. 
A similar behavior can be observed for IQP circuits~\cite{boixo}. 
In more detail, the Porter-Thomas distribution is the marginal of a Dirichlet distribution~\cite{multivariate}, i.e., a Beta distribution~\cite{bailey1992distributional}. 
\Cref{def.independent_distribution} is then matched if the underlying variables are i.i.d. samples from a Gamma distribution. 
We visualize the connections in \Cref{fig:distribution-connections} and provide a detailed explanation in \Cref{app:pconnection-dirichlet-indepdent}.

\begin{figure}
    \centering
    \begin{tikzpicture}
    \node (scheme) at (0, 0) {\includegraphics[width = .8\linewidth]{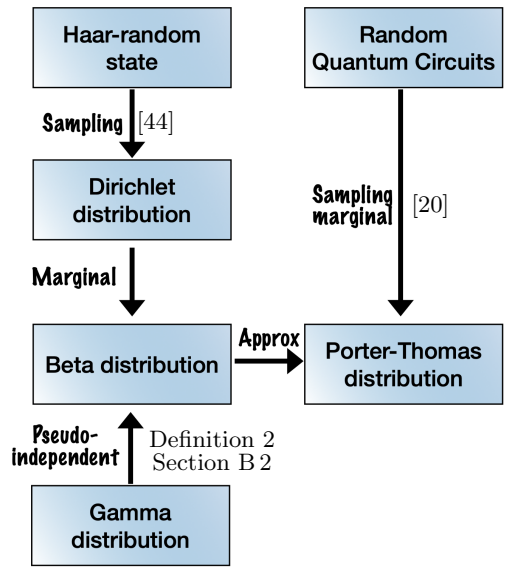}};
    \node (citeboixo) at (2.3, 1.00) {\cite{boixo}};
    \node (citeadrian) at (-1.35, 2.1) {\cite{bonetmonroig2024averagerandomnessverificationsets}};
    \node (appendix) at (-.6, -2.1) {\Cref{def.independent_distribution}};
    \node (appendix) at (-.6, -2.4) {\Cref{app:pconnection-dirichlet-indepdent}};
    
    \end{tikzpicture}
    \caption{Connection between sampling tasks with quantum advantage and pseudo-independent distributions. Quantum models with classical sampling hardness have approximated analytical descriptions matching our pseudo-independent probability distributions with underlying Gamma distribution. IQP circuits exhibit similar behavior.}
    \label{fig:distribution-connections}
\end{figure}

Very general assumptions suffice to establish anticoncentration for pseudo-independent distributions. Since the correlations appear only through normalization, each $p_C(x)$ approximately follows the underlying probability distribution.  
\begin{restatable}[Anticoncentration of pseudo-independent distribution]{proposition}{anticoncentrationindependent}\label{prop.anticoncentrationindependent}
Let $p$ be an instance of the pseudo-independent distributions. This probability distribution anticoncentrates, as
\begin{equation}
    \forall x: \quad \prob{p(x) \geq \frac{y}{2^n}} \in \Omega(1).
    \end{equation}
\end{restatable}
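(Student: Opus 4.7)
The plan is to exploit the approximate independence established in \Cref{cor.approx_indep} to reduce anticoncentration of $p(x)$ to a marginal statement about the underlying variable $Y$. Since the $Y_j$ are i.i.d., all marginals $p(x)$ are identically distributed, so it suffices to verify the claim for one arbitrary $x$. Writing $p(x) = Y_x/S$ with $S = \sum_{j=1}^N Y_j$ and $\mu := \E{}{Y}$, the key observation is that the event $\{p(x) \geq y/N\}$ is implied by the joint event $\{S \leq (1+\epsilon) \mu N\} \cap \{Y_x \geq (1+\epsilon) y \mu\}$ for any $\epsilon, y > 0$.

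First, I would control the denominator. Chebyshev's inequality (equivalently \Cref{cor.approx_indep}) yields $\prob{S > (1+\epsilon) \mu N} \leq \Var{Y}/(N \epsilon^2 \mu^2)$, which vanishes as $n \to \infty$ for any fixed $\epsilon > 0$.

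Next, I would lower-bound the marginal tail. Choosing $y$ small enough that $\theta := (1+\epsilon) y < 1$, the Paley--Zygmund inequality applied to the non-negative random variable $Y$ gives
\[
\prob{Y \geq \theta \mu} \geq (1-\theta)^2 \frac{\mu^2}{\E{}{Y^2}},
\]
a strictly positive constant independent of $n$, provided $Y$ has finite second moment. A union bound then produces
\[
\prob{p(x) \geq y/N} \geq (1-\theta)^2 \frac{\mu^2}{\E{}{Y^2}} - \frac{\Var{Y}}{N \epsilon^2 \mu^2} \in \Omega(1).
\]

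The main obstacle, or rather the essential caveat, is the implicit finite-variance assumption in \Cref{def.independent_distribution}. If $\E{}{Y^2} = \infty$, the Paley--Zygmund bound becomes trivial and $S$ is no longer tightly concentrated around $N \mu$, so the whole pseudo-independence reduction collapses. This heavy-tailed regime, relevant for the Pareto distributions announced in \Cref{sec:numerical-exploration}, would require a distinct argument tailored to the specific tail behavior.
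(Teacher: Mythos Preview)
Your proposal is correct and follows essentially the same strategy as the paper's proof: control the denominator $S$ via Chebyshev (\Cref{le.diff_average_Y}) and lower-bound the numerator tail via Paley--Zygmund, then combine. The only cosmetic difference is that you combine the two events additively via a union bound, whereas the paper writes the combination multiplicatively as $\prob{Y_i \geq \alpha\mu(1+1/k)}\bigl(1 - \sigma^2 k^2/(N\mu^2)\bigr)$; your additive route is in fact cleaner since it sidesteps the slight dependence between $Y_x$ and $\hat\mu$, and your Paley--Zygmund constant $\mu^2/\E{}{Y^2}$ is the correct one (the paper writes $\mu^2/\sigma^2$).
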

A detailed statement can be found in \Cref{app:independent-distribution}, and a graphical description in \Cref{fig.pseudo_concentration}. Notice that this statement captures the dominant term and not the correlations due to its small contribution, following \Cref{cor.approx_indep}.

\begin{figure}
    \centering
    \includegraphics[width=.9\linewidth]{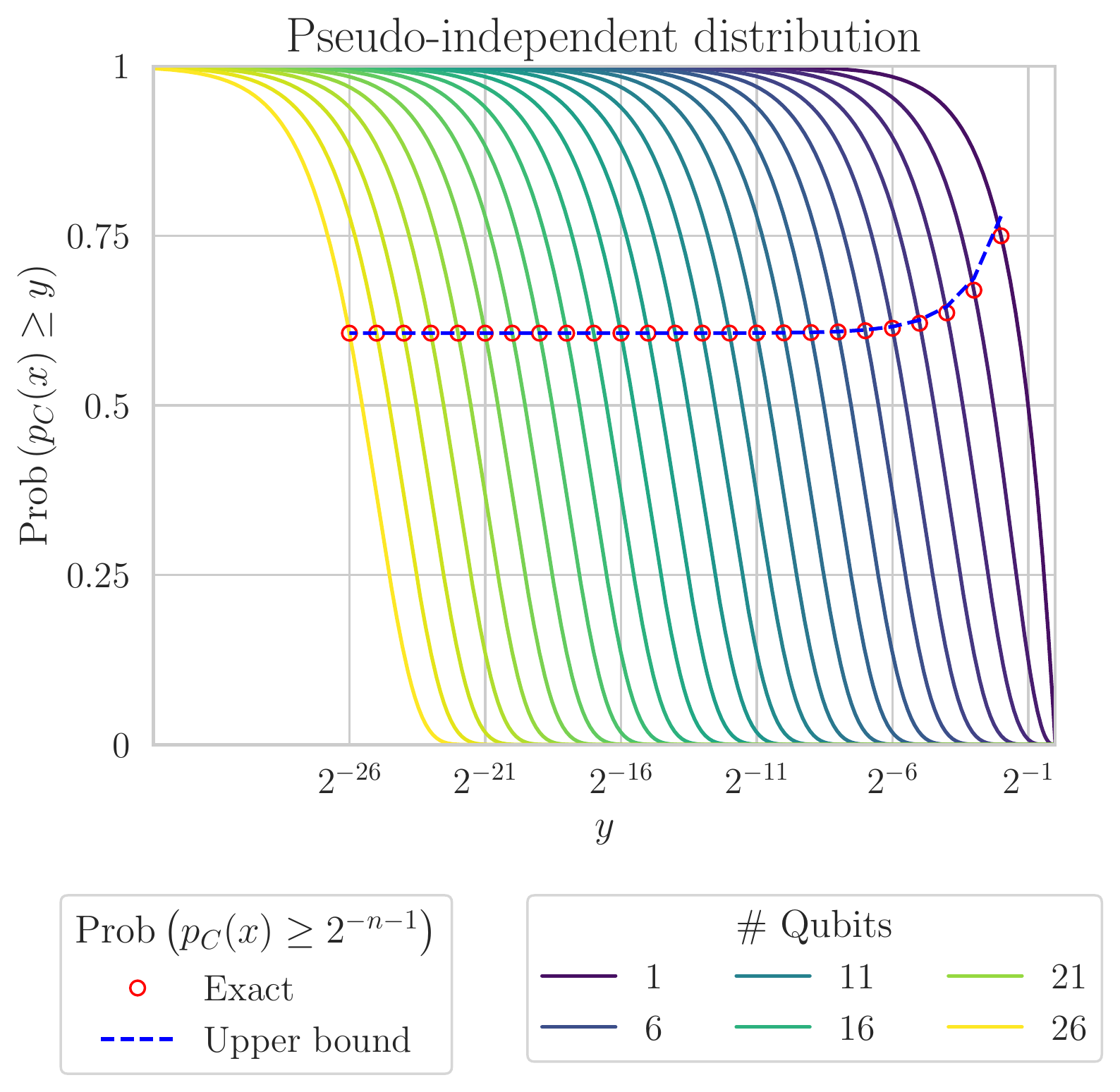}
    \caption{Concentration of the pseudo-independent distribution for random states, i.e., those following a Beta distribution. The solid lines depict the quantity $\prob{p_C(x) \geq y}$, for different numbers of qubits $n \in [1, 26]$, the red circles highlight the values at $y = 2^{-(n + 1)}$, and the blue line is the approximation from \Cref{prop.anticoncentrationindependent}. The relevant observation for concentration is that these red circles appear at constant values of $y$ independently of $n$.}
    \label{fig.pseudo_concentration}
\end{figure}

\subsubsection{Pseudo-independent peaked distributions}
Last, we consider \textit{pseudo-independent peaked distributions}. Those are similar to pseudo-independent distributions, but restricted to have sparse outcomes, where most of the bitstrings can never be sampled. We define them as follows.

\begin{definition}[Pseudo-independent peaked distributions]\label{def.peaked_distribution}
Let $\bar p(j)$ be the underlying probability distribution of a set of outcomes, $j \in \{0, 1, \ldots, K\}$ as given in \Cref{def.independent_distribution}. A bitstring $x \in \{ 0, 1\}^n$ is randomly distributed according to the peaked pseudo-independent distribution if 
\begin{equation}
    p(x) = \left\{\begin{matrix}
        \bar p(j) & j \in S\\
        0 & {\rm otherwise}
    \end{matrix} \right. ,
\end{equation}
for $S$, a random subset of $\{0, 1, \ldots, N \}$ of cardinality $K$, $K \leq N$.
\end{definition}

The choice of peaked pseudo-independent distribution is inspired by peaked quantum circuits~\cite{aaronson2024verifiablequantumadvantagepeaked}. 
In a conventional definition, there must exist one $x$ so that $p_C(x) \in \Omega(\operatorname{poly}^{-1} (n))$. 
We adapted this approach to have circuits whose outcomes concentrate in a polynomial number of bitstrings.  
Peaked circuits were proposed as novel candidates for quantum advantages that are classically verifiable~\cite{aaronson2024verifiablequantumadvantagepeaked}. In contrast, models with anticoncentration cannot be verified efficiently~\cite{hangleiter}.  
Classical estimation of the probabilities $p_C(x)$ of these circuits is not possible under common assumptions in complexity theory~\cite{zhang2025complexityhardnessrandompeaked, gharibyan2025heuristic}. 

Further, it is immediate that these probability distributions concentrate, which we formalize in the following.
\begin{restatable}[Concentration of peaked pseudo-independent distribution]{proposition}{concentrationpeaked}\label{prop.concentrationpeaked}
Let $p$ be a peaked pseudo-independent distribution. This probability distribution concentrates, as
\begin{equation}
    \forall x: \quad \prob{p_{\vec a}(x) \geq \frac{y}{2^n}} \leq \frac{K}{N} \in \mathcal O \left( \frac{\operatorname{poly}(n)}{2^n}\right).
    \end{equation}
\end{restatable}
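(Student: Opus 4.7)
The plan is to exploit the sparsity structure baked into \Cref{def.peaked_distribution}: by construction, $p(x)$ is identically zero unless $x$ lies in the randomly chosen support $S$, so the event $\{p(x) \geq y/2^n\}$ for any $y > 0$ is contained in the event $\{x \in S\}$. The whole statement therefore reduces to bounding $\prob{x \in S}$, which collapses to simple counting over the uniform choice of $S$.

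First, I would formalize this containment: for $y > 0$ and $x \notin S$ we have $p(x) = 0 < y/2^n$, hence
\begin{equation}
\prob{p(x) \geq y/2^n} \leq \prob{x \in S}.
\end{equation}
Next, since $S$ is selected uniformly at random among subsets of $\{0, 1, \ldots, N\}$ of cardinality $K$, a counting argument (equivalently, symmetry: every $x$ has the same probability of being included) yields $\prob{x \in S} = K/N$. Combining the two steps gives the stated bound $K/N$, and the asymptotic $\mathcal O(\operatorname{poly}(n)/2^n)$ is simply the substitution $N = 2^n$ and the hypothesis $K \leq \operatorname{poly}(n)$ that characterizes the peaked regime.

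The main obstacle is essentially cosmetic rather than technical: the bound is independent of the value of $y$ used in the anticoncentration threshold, which at first sight may look weaker than desired. The point to emphasize is that this is precisely the signature of concentration in the peaked case, since $p(x) \geq y/2^n$ fails for the overwhelming majority of bitstrings regardless of $y$, and the underlying pseudo-independent distribution on $S$ (inherited from \Cref{def.independent_distribution}) plays no role in the bound. I would briefly contrast this with \Cref{prop.anticoncentrationindependent}, where the same underlying pseudo-independent mechanism produced an $\Omega(1)$ probability because every bitstring is in the support, making clear that the exponential suppression here is driven entirely by the sparsity $K/N$ and not by the shape of the nonzero values.
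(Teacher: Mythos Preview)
Your proposal is correct and matches the paper's approach exactly: the paper also gives a one-line counting argument, observing that even if $p_C(x)=1$ on the support, the fraction of instances for which a fixed $x$ lies in $S$ is only $K/N$. Your write-up simply spells out the containment $\{p(x)\ge y/2^n\}\subseteq\{x\in S\}$ and the symmetry computation $\prob{x\in S}=K/N$ more explicitly than the paper does.
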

The proof is a counting argument. Even in the assumption that $p_C(x)=1$ for some $C$ and $x$, the fraction of instances contributing to the bitstring $x$ is only $K / N$. 

\subsection{Loss functions}\label{sec.preliminaries_loss_functions}
In this section we explore the implications that the three aforementioned families of probability distributions have on loss functions. 
In any training procedure, computing a loss is a necessary step, and, as a bare minimum, it is necessary to compare the loss functions of two configurations of the generative model, to decide which one is better.
To this end, the loss function needs to be estimated to a reasonable level of accuracy.
We show that the most commonly used loss functions exhibit strong concentration properties, and thus no efficient and informative estimation is possible. 

The results in this section are built on top of the square distance (SD) between two probability distributions, i.e., the candidate and target.
\begin{definition}[Squared distance (SD) over probability distributions]\label{def:msd}
    Let $p, q$ be probability distributions over the domain $\mathcal X = \{ 0, 1\}^n$. The square distance of the probability distributions is given by 
    \begin{equation}
        \Delta(p, q) = \sum_{x \in \mathcal X}\left(p(x) - q(x) \right)^2
    \end{equation}
\end{definition}
There exists a fundamental motivation to focus on this distance. 
First, the expectation value of this distance is related to the second statistical moment of the probability distribution, which immediately connects to the definition of anticoncentration from \Cref{eq.anticoncentration2}. 
Hence, the SD contains a direct link to classical hardness of sampling. 
Second, the SD is a simple measure for the similarity of two probability distributions on average, and it is a simple, yet effective, tool to analyze states with randomness, such as the ones in RQC or IQP, lying also at the core of the considered proofs for hardness. 

While useful for generic comparisons between probability distributions, the SD is not sensitive to details in differences that could provide more insightful information for a practitioner. 
To this end, other distances focusing on certain regions of the probability distribution have been proposed, e.g., by utilizing a kernel $k(x, y) \geq 0$ that weights pairs of data points for a more tailored loss function. 
The Maximum Mean Discrepancy (MMD)~\cite{JMLR:v13:gretton12a} distance is a commonly used kernel-based loss in the classical literature, and it has already been employed in the context of GQML~\cite{rudolph-generative-modelling, recioarmengol2025trainclassicaldeployquantum}. More formally, it is defined as follows. 
\begin{definition}[Squared population $\MMD$~\cite{JMLR:v13:gretton12a}]\label{def.mmd}
    Consider independent random variables $x,x'$ and $y,y'$, distributed according to their respective probability distributions $p$ and $q$, and an arbitrary kernel function $k(\cdot,\cdot) \geq 0$. The squared population $\MMD$ is defined as
    \begin{equation}
        \begin{aligned}
            \MMD_k(p,q) = & \\
             \mathbb{E}_{x,x'\sim p}&[k(x,x')] + \mathbb{E}_{y,y'\sim q}[k(y,y')] \\
            & -2\mathbb{E}_{x\sim p,y\sim q}[k(x,y)]
        \end{aligned}
    \end{equation}
\end{definition}
The MMD is a distance on its own for characteristic kernels, \cite{JMLR:v13:gretton12a,NIPS2007_3a077244}, such as the Gaussian or Laplace kernels, thus, vanishes for equal distributions, i.e.,
\begin{equation}
    \MMD_k(p, q) = 0 \Leftrightarrow p = q \quad \textrm{if } k \textrm{ is characteristic}.
\end{equation}

In the case of generative models defined over the domain of bitstrings, previous literature \cite{rudolph-generative-modelling, recioarmengol2025trainclassicaldeployquantum} has employed the Gaussian kernel shown as follows. Here, the 2-norm reduces to a simple Hamming distance, thus, counting distinct bits between $x$ and $y$.
\begin{equation}
    k_\varsigma(x, y) = \exp{-\frac{d_H(x, y)}{2 \varsigma^2}}, 
\end{equation}
Especially for large $\varsigma$, this leads to low-order correlations dominating the loss, which becomes apparent when transforming the $MMD^2$ into the alternative form, given via the Fourier character of $p$ and $q$.
\begin{multline}\label{eq.fourier_mmd}
    \mmd{p, q} = \\ \sum_{k = 0}^n \frac{(1 - \rho)^{n - k} (1 + \rho)^{k}}{2^n} \sum_{\substack{S \subseteq [n] \\ \vert S \vert = k}} \left( \hat P(S) - \hat Q(S)\right)^2, 
\end{multline}
In this representation, we define $\rho = \exp{-1/2\varsigma^2}$, and $\hat P(S)$ as the $S$-Fourier character of the probability distribution $p$. 
The details on how to transform the Hamming-distance into a Fourier-based representation are available in \Cref{app.mmd-fourier}. 
This representation immediately establishes a hierarchy in the Fourier spectrum, since the relative weights differ by a factor $(1 - \rho)^k(1 + \rho)^{-k} \leq \exp{-\rho k}$. 
Fourier components with low order become much more relevant, which also has its reflections on trainability~\cite{rudolph-generative-modelling}. 
As a consequence, the $\MMD$ is particularly useful in the case where the probability distributions of interest have strong components of low-frequency Fourier weights. 

\subsubsection{Product distribution}\label{sec.loss_product}

We now return to the product distributions from \Cref{def.product_distribution} and study the statistical properties of the SD between two instances of this family. Before giving the precise statement, we provide an intuition on why it seems plausible that the SD should concentrate.
The SD is computing squares of differences of several quantities smaller than 1. Thus, in principle, large values SD can be found. However, the concentration properties of the product distribution implies that a very small fraction of $p_C(x)$ will be considerably larger than $2^{-n}$. This fraction decreases exponentially with the number of qubits, translating into very small values of SD.

We formalize the result in the following proposition, which is detailed in \Cref{app:product-distribution}.
\begin{restatable}[SD of product distribution]{proposition}{msdproduct}\label{prop.msdproduct}
Let $p_{\vec a}, q_{\vec b}$ be two product distributions. Then,
\begin{equation}
    \Delta(p_{\vec a}, q_{\vec b}) \leq \frac{\exp{-\ln(3/2) n}}{\delta}
    \end{equation}
with probability at least $1 - \delta$.
\end{restatable}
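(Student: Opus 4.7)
The plan is to bound $\Delta(p_{\vec a}, q_{\vec b})$ in expectation and then promote the bound to a high-probability statement via Markov's inequality. Concretely, I would first observe that since $\Delta \geq 0$, it suffices to show that $\mathbb{E}[\Delta(p_{\vec a}, q_{\vec b})] \leq C \exp(-\ln(3/2)\, n)$ for some small constant $C$, after which Markov gives $\Delta \leq C \exp(-\ln(3/2)\, n)/\delta$ with probability at least $1-\delta$. The constant $C$ can be absorbed into the exponent at the cost of a negligible change in the base, yielding the stated bound.

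Next, I would expand the square and exploit independence of $\vec a$ and $\vec b$. Writing
\begin{equation}
\mathbb{E}\bigl[\Delta(p_{\vec a}, q_{\vec b})\bigr] = \sum_{x \in \{0,1\}^n} \Bigl( \mathbb{E}[p_{\vec a}(x)^2] + \mathbb{E}[q_{\vec b}(x)^2] - 2\, \mathbb{E}[p_{\vec a}(x)]\, \mathbb{E}[q_{\vec b}(x)] \Bigr),
\end{equation}
the problem reduces to computing the first two moments of $p_{\vec a}(x)$ for fixed $x$. Here the product structure from \Cref{def.product_distribution} is decisive: each factor $x_i + (-1)^{x_i} a_i$ is an independent random variable equal either to $a_i$ or $1-a_i$, with $a_i$ uniform on $[0,1]$. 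Since $\mathbb{E}[a_i] = \mathbb{E}[1-a_i] = 1/2$ and $\mathbb{E}[a_i^2] = \mathbb{E}[(1-a_i)^2] = 1/3$, independence across the $n$ coordinates yields $\mathbb{E}[p_{\vec a}(x)] = 2^{-n}$ and $\mathbb{E}[p_{\vec a}(x)^2] = 3^{-n}$, uniformly in $x$.

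Plugging these moments back in and summing over the $2^n$ bitstrings gives
\begin{equation}
\mathbb{E}\bigl[\Delta(p_{\vec a}, q_{\vec b})\bigr] = 2^n \bigl( 2 \cdot 3^{-n} - 2 \cdot 4^{-n} \bigr) \leq 2 \left(\tfrac{2}{3}\right)^n = 2\exp\bigl(-\ln(3/2)\, n\bigr).
\end{equation}
Combined with Markov's inequality applied to the non-negative random variable $\Delta$, this delivers the proposition up to the inconsequential factor of $2$.

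The computation is essentially routine once one notices the two facts that make it short, namely independence of $p_{\vec a}$ and $q_{\vec b}$ (which kills the cross term) and the per-site moment identity $\mathbb{E}[a_i^2] = 1/3 > 1/4 = \mathbb{E}[a_i]^2$ (which is the precise origin of the base $2/3$ in the exponent). The only mild subtlety, and the one I would flag as the place to be careful, is that the proposition is stated as a purely high-probability bound: one must either use Markov directly as above, or remark that sharper concentration of $\Delta$ around its mean is not required and would anyway follow from a second-moment computation of the same flavor.
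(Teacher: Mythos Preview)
Your proposal is correct and essentially identical to the paper's proof: compute $\mathbb{E}[\Delta]$ via the first two moments of $p_{\vec a}(x)$ (yielding $2^{-n}$ and $3^{-n}$ from the product structure), sum over bitstrings to get $2\bigl((2/3)^n - 2^{-n}\bigr)$, and then pass to a tail bound. The only cosmetic differences are that the paper computes the moments through a change of variables to gamma-function integrals (while explicitly acknowledging your direct route $\mathbb{E}[a_i]=1/2$, $\mathbb{E}[a_i^2]=1/3$ is simpler), and the main text says ``Chebyshev'' where you correctly invoke Markov on the nonnegative $\Delta$; both the paper and you share the same harmless factor-of-$2$ slack against the stated bound.
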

For the proof, it is only necessary to compute the second moment of $p_C(x)$, which decays exponentially. This is enough to invoke Chebychev's inequality and state this proposition. 

\subsubsection{Pseudo-independent distribution}\label{sec.loss_pseudo}

The pseudo-independent distributions, as we mentioned before, aim to capture all quantum models with classical sampling hardness.
As we have seen in \Cref{sec.def-pseudo-independent}, these distributions anticoncentrate, that is, $\E{}{N^2 p_c(x)^2} \geq \beta'$. This would suffice to set trivial yet easy bounds on the values of the SD. 
In particular, for a given $x$ 
\begin{multline}
    \E{}{(p(x) - q(x))^2} = \Var{p(x) - q(x)} = \\ 
    \E{}{p(x)^2} - \frac{1}{N^2}+ \E{}{q(x)^2} - \frac{1}{N^2} \\ \geq \frac{2 (\beta' - 1)}{N^2},
\end{multline}
implying
\begin{equation}\label{eq:expectation-scaling}
    \E{}{\Delta(p, q)} \in \Omega\left( \frac{1}{N}\right).
\end{equation}

\Cref{eq:expectation-scaling} already implies unfavorable scaling of the loss of the $MMD^2$ on pseudo-independent distributions.
We can extend this into a full concentration inequality as follows.
\begin{restatable}[SD of pseudo-independent distributions]{proposition}{msdpseudo}\label{prop.msdpseudo}
Let $p, q$ be two instances of the pseudo-independent distributions. Then, 
\begin{equation}
    \forall x: \quad \Delta(p_{\vec a}, q_{\vec b}) \leq \frac{\exp{-\mathcal O(n)}}{\delta}, 
    \end{equation}
    with probability at least $1 - \Theta(\delta)$.
\end{restatable}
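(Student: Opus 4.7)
The plan is to prove concentration of $\Delta(p, q)$ by bounding its expectation and invoking Markov's inequality. Since $p$ and $q$ are independent samples of the pseudo-independent family, the first step is to expand
\begin{equation}
    \E{}{\Delta(p, q)} = \sum_{x} \E{}{p(x)^2} + \E{}{q(x)^2} - 2\E{}{p(x)}\E{}{q(x)}.
\end{equation}
By the permutation symmetry of \Cref{def.independent_distribution}, the marginal distribution of $p(x)$ is independent of $x$; combined with $\sum_x p(x) = 1$, this forces $\E{}{p(x)} = 1/N$ for all $x$, and likewise for $q$. Hence the computation reduces to upper bounding the single scalar quantity $\E{}{p(x)^2}$.

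Next, I would import \Cref{cor.approx_indep} to replace the normalized variable $p(x) = Y_x / \sum_j Y_j$ with the genuinely independent surrogate $\tilde p(x) = Y_x / (\E{}{Y} N)$, up to a multiplicative error of order $\Var{Y}/(N k^2)$ that holds with probability at least $1 - k^{-2}$. Conditioning on this good event and using that $p(x)\leq 1$ on its complement, I get
\begin{equation}
    \E{}{p(x)^2} \leq \frac{\E{}{Y^2}}{\E{}{Y}^2 N^2}\left(1 + \mathcal{O}\left(\frac{1}{N}\right)\right) + k^{-2},
\end{equation}
choosing $k$ of order $N$ so that the correction is subdominant. Since the pseudo-independence assumption requires $\Var{Y}/\E{}{Y}^2$ to be bounded (otherwise the central-limit-type statement of \Cref{cor.approx_indep} degrades, as discussed for the Pareto case), the prefactor is $\Theta(1/N^2)$, and summing over $x$ yields
\begin{equation}
    \E{}{\Delta(p, q)} \in \mathcal{O}\!\left(\frac{1}{N}\right) = \exp{-\mathcal{O}(n)}.
\end{equation}

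Finally, Markov's inequality applied at threshold $\E{}{\Delta(p,q)}/\delta$ gives $\Delta(p, q) \leq \exp{-\mathcal O(n)}/\delta$ with probability at least $1 - \delta$; union-bounding with the rare failure of \Cref{cor.approx_indep} absorbs an additional $k^{-2} = \mathcal{O}(1/N^2)$ into the exceptional probability, which explains the stated $\Theta(\delta)$ rather than $\delta$ in the final success probability.

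The main technical obstacle is the mild dependence among $\{p(x)\}_x$ introduced by the normalization $\sum_j Y_j$; this is precisely what \Cref{cor.approx_indep} is designed to neutralize, provided we track the approximation error carefully and verify that the ratio $\Var{Y}/\E{}{Y}^2$ is $\mathcal{O}(1)$ (the tacit regularity condition behind the name ``pseudo-independent''). A secondary bookkeeping point is that Markov only gives a polynomial-in-$\delta$ concentration; if a stronger tail were desired, one would instead bound a higher moment of $\Delta$ and use Chebyshev-type arguments, but the statement of the proposition already commits to the Markov-level guarantee.
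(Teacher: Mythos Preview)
Your proposal is correct and shares the same core ingredients as the paper: replace the empirical normalizer $\hat\mu$ by the population mean $\mu$ via \Cref{cor.approx_indep}\,/\,\Cref{le.diff_average_Y}, compute the second moment for the resulting genuinely independent variables, and finish with Markov. The paper organizes the argument slightly differently: rather than bounding the unconditional expectation $\E{}{\Delta(p,q)}$ by conditioning on the good event and crudely controlling the complement, it decomposes the random variable itself via
\[
\left(\frac{Y_x}{N\hat\mu}-\frac{Y'_x}{N\hat\mu'}\right)^2 \leq 3\left[\left(\frac{Y_x-Y'_x}{N\mu}\right)^2 + \frac{Y_x^2}{N^2}\left(\frac{1}{\mu}-\frac{1}{\hat\mu}\right)^2 + \frac{{Y'_x}^2}{N^2}\left(\frac{1}{\mu}-\frac{1}{\hat\mu'}\right)^2\right],
\]
then bounds the independent piece by Markov on its mean and the two correction pieces directly by the Chebyshev-type tail of $\vert 1/\mu - 1/\hat\mu\vert$ (\Cref{cor.diff_average_Y_inv}), union bounding at the end. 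Your route buys a single global Markov step at the cost of the good/bad-event bookkeeping inside the expectation (which is where one must be a little careful: the per-$x$ bad-event contribution $k^{-2}$ gets multiplied by $N$ when summed, so it is cleaner to use $\sum_x p(x)^2\leq 1$ on the bad event rather than $p(x)\leq 1$ term-by-term; with your choice $k\sim N$ both work). The paper's route avoids that conditioning altogether and yields the explicit constant in \Cref{le.concentration_pseudoindependent}. Both lead to the same $\exp(-\mathcal O(n))/\delta$ conclusion.
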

For the proof, we utilize the upper bound on the deviation of the SD compared to the SD of fully independent distributions, as given by the pseudo-independence condition. Further, the equality of variances of sums of independent variables and sums of variances yields an exact form for the average of the SD. By invoking Markov's inequality we reach the probabilistic statement. We provide the full proof in \Cref{app:loss-function-concentration}.
These results imply concentration also for expectation values. In particular, for any diagonal Pauli string $\mathcal Z$,
\begin{equation}
    \Var{\bra 0 C \mathcal Z C \ket 0} \leq \mathcal O(\exp{-n}).
\end{equation}
Further details are provided in \Cref{def.var_Z} in \Cref{app:independent-distribution}.

To connect the SD to the MMD previously used in several work, we will recover the Fourier description from \Cref{eq.fourier_mmd}. The Parseval-Plancherel identity~\cite{plancherel1910contribution} states that 
\begin{equation}
    \sum_{k = 0}^n \frac{1}{2^n} \sum_{\substack{S \subseteq [n] \\ \vert S \vert = k}} \left( \hat P(S) - \hat Q(S)\right)^2 = \Delta(p, q). 
\end{equation}
In particular, for $\varsigma \rightarrow 0$, the MMD transforms into the SD. 
This admits the interpretation that the MMD we consider is a weighted average over different contributions, but on average equal to the SD. 
Following the same spirit, one could also create a modified version of the SD with added weights. 
In any case, the implicit symmetry of the pseudo-independent distributions implies that, on average, all of these quantities will be similar. 
Only strong biases on the target distribution or the generative model allow for a significant deviation from this mean trend. 
We can thus formulate an analogous result to \Cref{prop.msdpseudo}.  
\begin{restatable}[SD and Fourier-SD concentrate equally]{proposition}{msdmmd}\label{prop.msdmmd}
    \begin{equation}
        \mmd{p, q} \leq \frac{\exp{-\mathcal O(n)}}{\delta} 
    \end{equation}
    with probability at least $1 - \Theta(\delta)$.
\end{restatable}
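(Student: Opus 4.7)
The plan is to reduce the concentration of the MMD to that of the SD already established in \Cref{prop.msdpseudo}, by comparing their expectation values through the Fourier representation in \Cref{eq.fourier_mmd}. Since $\mmd{p,q} \geq 0$, a bound on $\E{}{\mmd{p,q}}$ combined with Markov's inequality will deliver the stated high-probability bound with $\delta^{-1}$ scaling.

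First, I would compute $\E{}{(\hat P(S) - \hat Q(S))^2}$ for each $S \subseteq [n]$. For $S = \emptyset$ this vanishes by normalization of $p$ and $q$. For $S \neq \emptyset$, I invoke the approximation $p(x) \approx Y_x / (N\, \E{}{Y})$ from \Cref{cor.approx_indep}: expanding $\hat P(S) = \sum_x (-1)^{S \cdot x} p(x)$, squaring, and using the i.i.d.\ property of the $Y_x$, the double sum over $x, x'$ collapses because $\sum_{x \neq x'} (-1)^{S \cdot (x + x')} = -N$ when $S \neq \emptyset$, leaving only a diagonal contribution of order $\Var{Y}/(N\,\E{}{Y}^2)$. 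A key observation is that by permutation symmetry of the i.i.d.\ variables, this expected squared Fourier coefficient $v$ is independent of $|S|$.

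Second, plugging $v$ into both Fourier sums yields $\E{}{\mmd{p,q}} = 2v \sum_{k=1}^{n} \binom{n}{k} w_k = 2v\bigl(1 - (1-\rho)^n/2^n\bigr)$ and $\E{}{\Delta(p,q)} = 2v(1 - 1/2^n)$, where $w_k = (1-\rho)^{n-k}(1+\rho)^k/2^n$ and the weights satisfy $\sum_{k=0}^n \binom{n}{k} w_k = 1$ by the binomial theorem. The two expectations agree up to an exponentially small correction, so \Cref{prop.msdpseudo}---or, more directly, the computation inside its proof that $\E{}{\Delta(p,q)} \in \exp{-\mathcal O(n)}$---implies $\E{}{\mmd{p,q}} \in \exp{-\mathcal O(n)}$. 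Markov's inequality then closes the argument.

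The main obstacle is controlling the sub-leading corrections arising because $p(x) \neq Y_x/(N\,\E{}{Y})$ exactly. \Cref{cor.approx_indep} only bounds the deviation between a single $p(x)$ and its i.i.d.\ surrogate with probability $1 - k^{-2}$, so one has to argue---mirroring the treatment in the proof of \Cref{prop.msdpseudo}---that the fluctuations of the normalizing sum $\sum_j Y_j$ around $N\,\E{}{Y}$ contribute only an additional $\mathcal O(1/N)$ correction to each Fourier moment rather than amplifying it. A minor secondary check is that $(1-\rho)^n$ stays negligible relative to the $\exp{-\mathcal O(n)}$ scaling, which is immediate for any fixed bandwidth $\varsigma$ since $\rho < 1$.
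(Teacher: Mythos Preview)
Your proposal is correct and follows essentially the same route as the paper: compute $\E{}{(\hat P(S)-\hat Q(S))^2}$, observe by the i.i.d.\ symmetry that it is the same constant $\sim \sigma^2/(N\mu^2)$ for every nonempty $S$, sum the Fourier weights via the binomial theorem to $1$, and finish with Markov's inequality. The paper's proof is terser and does not separately track the $S=\emptyset$ term or the normalization correction from \Cref{cor.approx_indep}, whereas you flag both explicitly; these are harmless refinements rather than a different approach.
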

A proof is available in \Cref{app:independent-distribution}.

\subsubsection{Peaked distribution}\label{sec.loss_peaked}
The peaked pseudo-independent distributions offer clear advantages in terms of concentration of the loss functions. 
The reason is simple. 
The SD measures distances between each $x$ by adding all $p(x) - q(x)$. 
If the distributions are $K$-sparse, with $K \ll N$, two typical distributions will have supports with no or very little overlap. 
Hence, the number of bitstrings of interest will grow as $\mathcal O(K)$, which leads to a much softer upper bound on the averages of the SD.
\begin{restatable}[SD of peaked pseudo-independent distributions]{proposition}{msdpeaked}\label{prop.msdpeaked}
Let $p_{\vec a}, q_{\vec b}$ be two instances of the peaked pseudo-independent distributions. Then, 
\begin{equation}
    \forall x: \quad \Delta(p_{\vec a}, q_{\vec b}) \leq \frac{1}{K \delta}, 
    \end{equation}
    with probability at least $1 - \Theta(\delta)$.
\end{restatable}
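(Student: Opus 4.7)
The plan is to mirror the strategy used for \Cref{prop.msdpseudo}: bound $\E{}{\Delta(p,q)}$ directly and apply Markov's inequality. The new ingredient is sparsity---with only $K\ll N$ bitstrings carrying non-zero probability, the sum defining the SD effectively contains $\mathcal O(K)$ rather than $\mathcal O(N)$ contributing terms, which is precisely what turns the $1/N$ scaling of the pseudo-independent case into the $1/K$ scaling claimed here.

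First, I would decompose the SD according to the overlap of the random supports $S_p, S_q$, each of cardinality $K$, which are independent uniformly random subsets of $\{0,\dots,N\}$:
\begin{equation}
\Delta(p,q) = \sum_{x\in S_p\cap S_q}(p(x)-q(x))^2 + \sum_{x\in S_p\setminus S_q}p(x)^2 + \sum_{x\in S_q\setminus S_p}q(x)^2,
\end{equation}
since bitstrings outside $S_p\cup S_q$ contribute nothing to the SD.

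Next, I would bound each piece in expectation using the tower property, with the outer expectation over the random supports and the inner expectation over the pseudo-independent values. Conditioned on the supports, the non-zero values of $p$ (and symmetrically of $q$) constitute a pseudo-independent distribution over $K$ outcomes in the sense of \Cref{def.independent_distribution} with $N\to K$; hence by \Cref{cor.approx_indep} each non-zero value has mean $1/K$ and second moment $\mathcal O(1/K^2)$. Combining with the standard counting identities $\E{}{|S_p\cap S_q|}=K^2/N$ and $\E{}{|S_p\setminus S_q|}=K(1-K/N)$ gives
\begin{equation}
\E{}{\Delta(p,q)} \leq \mathcal O\!\left(\frac{K^2}{N}\cdot\frac{1}{K^2}\right) + \mathcal O\!\left(K\cdot\frac{1}{K^2}\right) = \mathcal O\!\left(\frac{1}{K}\right).
\end{equation}
Markov's inequality applied to the non-negative random variable $\Delta(p,q)$ then yields $\Delta(p,q)\leq \E{}{\Delta(p,q)}/\delta \leq 1/(K\delta)$ with probability at least $1-\Theta(\delta)$, as claimed.

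The main technical obstacle is disentangling the nested randomness: the random supports $S_p, S_q$ and, conditioned on them, the pseudo-independent values assigned to those supports. The tower property handles this, but one must verify that \Cref{cor.approx_indep} still provides adequate control at the reduced scale $K$ rather than $N$, and that the normalization-induced correlations among the non-zero values contribute only subleading $\mathcal O(1/N)$ corrections that do not spoil the leading $\mathcal O(1/K)$ behavior. Everything else is routine second-moment computation closely paralleling the proof of \Cref{prop.msdpseudo}.
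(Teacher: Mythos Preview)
Your proposal is correct and follows essentially the same route as the paper: reduce to the pseudo-independent analysis with $N$ replaced by $K$, account for the random support overlap, and finish with Markov. The one technical difference is in how the overlap is handled. You take expectations directly via the tower property, using only the first moment $\E{}{|S_p\cap S_q|}=K^2/N$, which is clean and sufficient. The paper instead tracks the overlap count $J$ as a hypergeometric random variable, approximates it by a Poisson with parameter $K^2/N$, and invokes a Chernoff bound to argue $J=0$ with probability at least $1-K^2/N$; the overlap correction then appears as an additive $K^2/N$ in the final probability bound rather than being absorbed into the expectation. Your route is arguably more direct; the paper's makes the negligibility of the overlap term more explicit. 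One small slip: the normalization-induced corrections you flag at the end are $\mathcal O(1/K)$, not $\mathcal O(1/N)$, since the pseudo-independence now lives on $K$ outcomes---but this does not affect the leading $\mathcal O(1/K)$ scaling you need.
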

The proof can be found in \Cref{app:loss-function-concentration}. The steps for the proof are similar to the case of the pseudo-independent distribution, with the correction of the overlap between supports. This follows a hypergeometric distributions, which can be approximated by a Poisson distribution to obtain the simple bounds. The corrections are negligible in comparison to the trend here shown.

\section{Numerical exploration}\label{sec:numerical-exploration}
While analytical tools provide strong evidences for overall scaling behavior of loss functions, they lack adaptability to real-case scenarios relevant for practitioners. 
For example, one could create circuits that follow the IQP structure but not the assumed distribution of parameters used in the analytical proofs~\cite{bremner2016average, boixo}. 
In this event, theorems do not longer apply due to broken assumptions.
However, similar qualitative behavior to the one shown in the analytical development might be observed. 
In this section, we delve into numerical extensions of the considered models to extend the scope of the analytically tractable examples previously discussed. 

\subsection{Experimental Setup}

In our numerical results, we compute the averages of MSD and \MMD\ of different distributions for number of qubits $n\in[2,\dots,13]$. 
In each experiment, $10^5$ pairs of randomly generated distributions are generated. 

We consider the following distributions. 

\paragraph{Product IQP --} As mentioned in \Cref{sec.def_product}, the product distribution corresponds to a distribution of tensor product states. We will use IQP circuits with only single-qubit operations as a proxy for these states. Those circuits give rise to probabilities of the form
\begin{equation}
    p_{\vec\theta}(x) = \prod_{i=1}^n \vert \bra{x_i} R_x(\theta_i)\ket 0\vert^2,  
\end{equation}
yielding $p_{\vec \theta}(x)$ as a product of sines and cosines, which admit a transformation to match \Cref{def.product_distribution}. 

\paragraph{Matrix Product States (MPS) --} 
MPS are a family of classically tractable states with low entanglement, which have demonstrated success in finding ground states of interesting problems in many-body physics, especially in one-dimensional problems~\cite{cirac2021matrix}. 
The complexity of an MPS is governed by its bond dimension $\chi$, which controls qubit correlations. 
In particular, the entanglement is said to follow an area-law, if a state can be faithfully represented with $\chi \in \mathcal{O}(\poly n)$.  

While MPS were originally designed to compute expectation values, they also allow for a sampling procedure~\cite{ferris2012perfect}. 
Thus, we employ them as a practical extension of our product distribution, incorporating slightly entangled states.
Per definition, setting $\chi=1$ corresponds to the product distribution from \Cref{def.product_distribution}, thus, it is the same as the product IQP outlined above.
To extend the analytic results, we choose $\chi = n$, and generate random MPS using \texttt{quimb}~\cite{gray2018quimb}. 

\paragraph{IQP Circuits --} IQP circuits are one of the paradigmatic examples of classical hardness. 
In the context of this paper, they correspond to pseudo-independent distributions, where each variable follows a Gamma distribution (see \Cref{app:pconnection-haar-dirichlet}). 
IQP circuits are composed by 1) a layer of Hadamard gates, 2) an arbitrary diagonal gate, typically composed by several diagonal operations, and 3) another layer of Hadamard gates.  

For our numerical calculations, we generate random IQP circuits using \texttt{Pennylane}~\cite{bergholm2022pennylaneautomaticdifferentiationhybrid} and the \texttt{IQPopt}~\cite{recioarmengol2025iqpoptfastoptimizationinstantaneous} package. 
We set the \texttt{max\_weight} parameter to 2, corresponding to the maximum number of qubits on which each individual diagonal operation acts. 
The circuits are composed by all possible such pairings without geometrical restrictions.  
The employed operations are rotations around the x-axis with angles $\theta$ sampled uniformly over $[0, 2\pi]$. 

We highlight that this family of IQP does not correspond to a fully pseudo-independent distribution, as this would require an exponential number of independent angles. 
Despite that, this restricted version already suffices to prove classical hardness on sampling, which is why we employ the model as a practically useful and considerably cheaper approximation of the full family.
In fact, restricted versions suffice to prove classical hardness on sampling \cite{bremner2016average}. 

\begin{figure}

    \includegraphics[width=.9\linewidth]{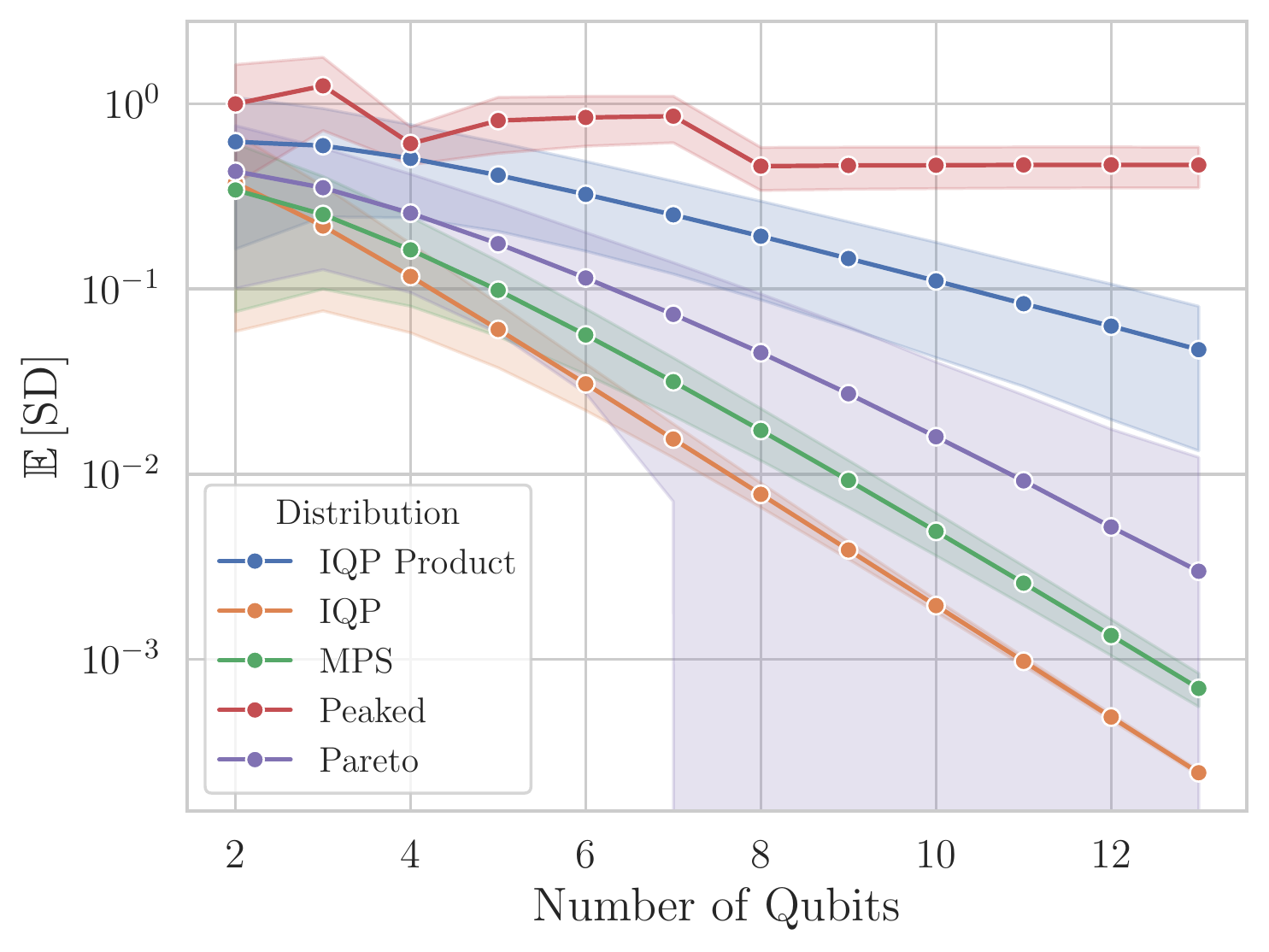}
    \caption{Average of the SD for the considered models. As expected, all of the averages decay exponentially with the number of qubits, with the exception of the peaked distributions. The analytical bounds are satisfied in all the considered examples. For the Pareto distribution, exponential decay is also observable, even though the analytical bound is meaningless. }
    \label{fig:numerics-msd}

    \includegraphics[width=.9\linewidth]{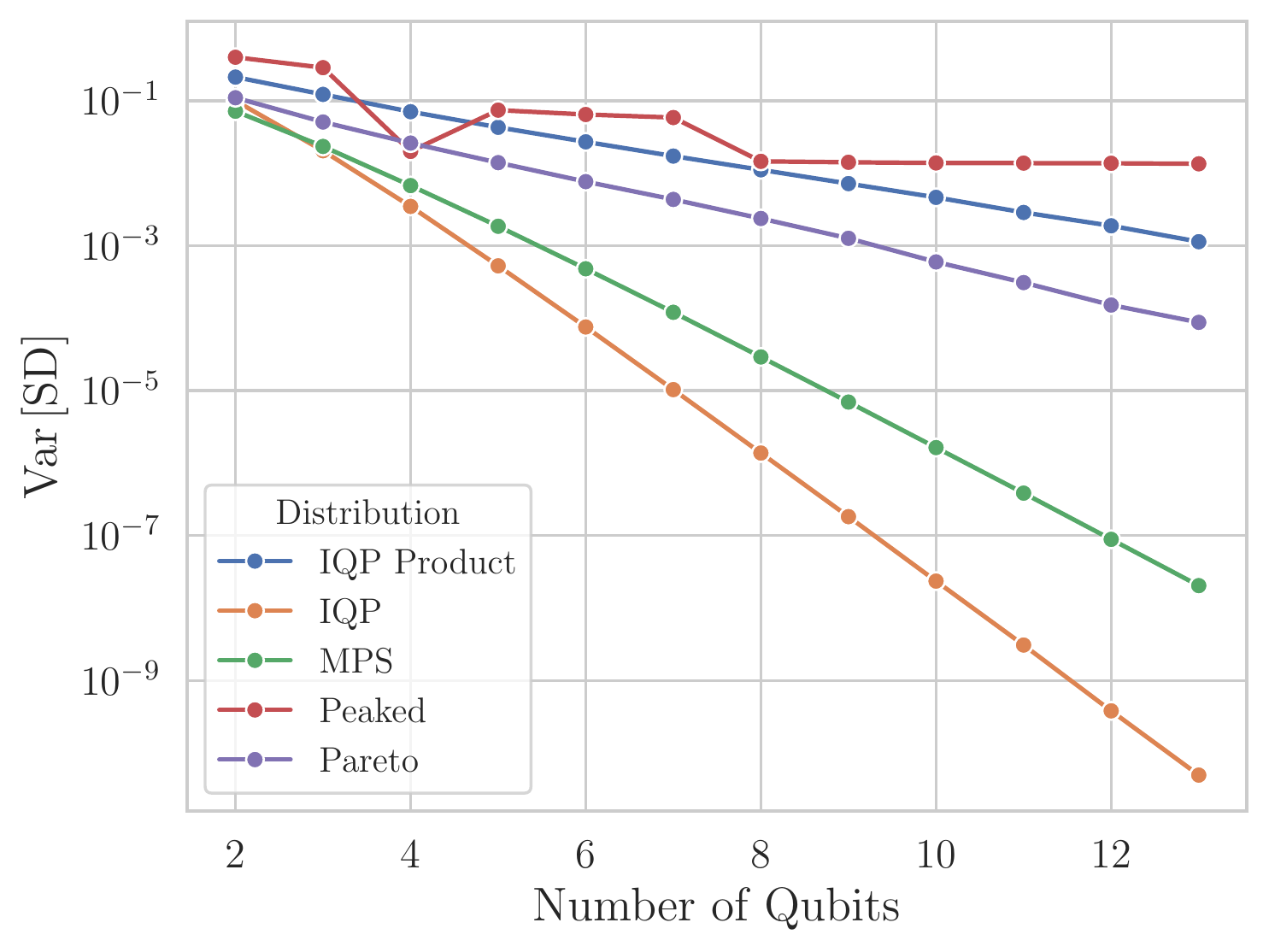}
    \caption{Variance of the SD for the considered models. The variances also decay exponentially with the number of qubits. In the case of IQP circuits, its strongly decaying bias is a consequence of pseudo-independence. }
    \label{fig:numerics-msd-variance}
\end{figure}

\begin{figure}
        \includegraphics[width=.9\linewidth]{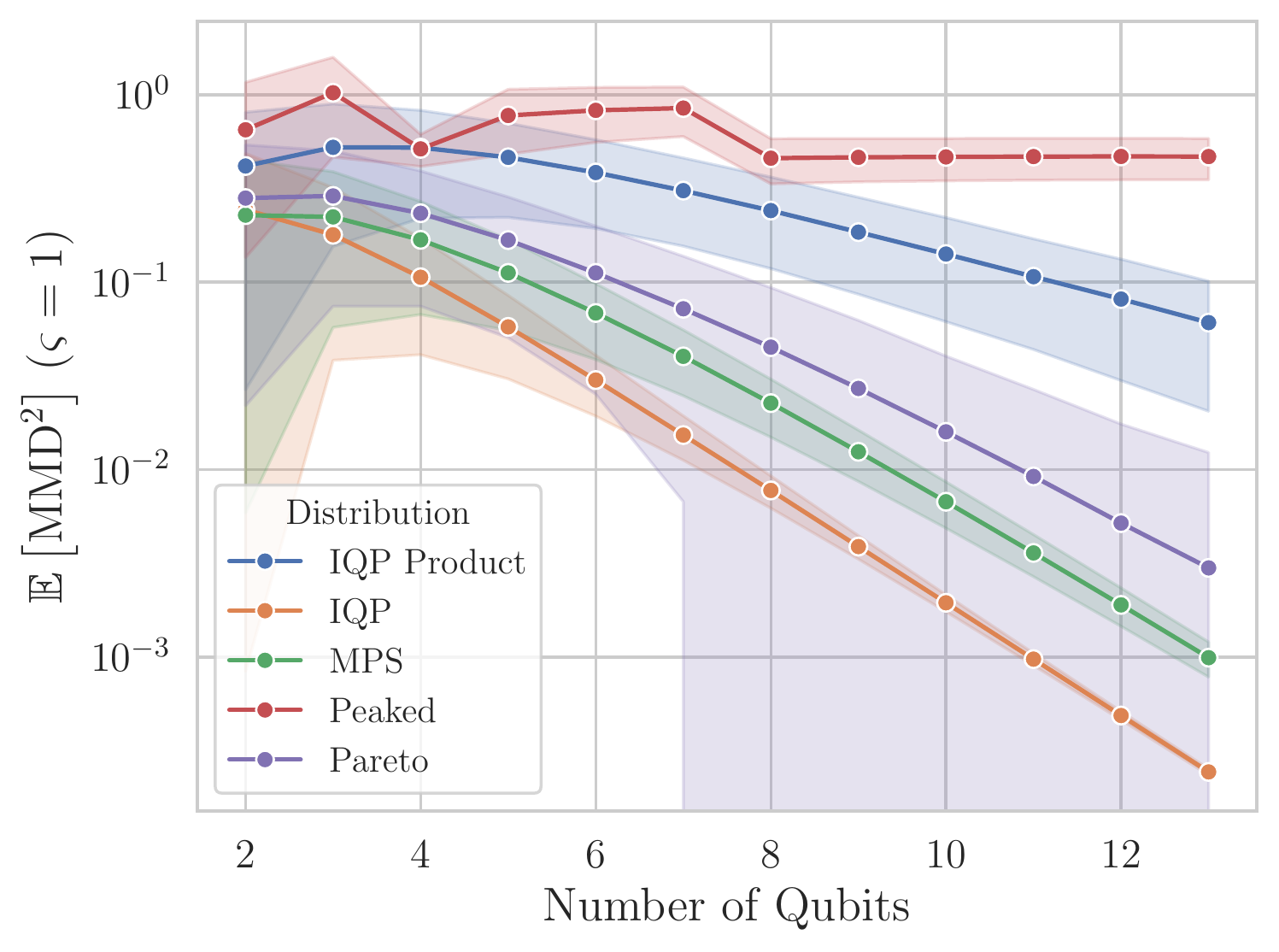}
        \caption{Average of the $\MMD$ with $\varsigma\in O(1)$ for all considered models. The behavior is similar to that of SD, with exponential decay in all cases, highlighting the randomness of the probability distributions of interest. }
        \label{fig:mumerics-mmd}

        \includegraphics[width=.9\linewidth]{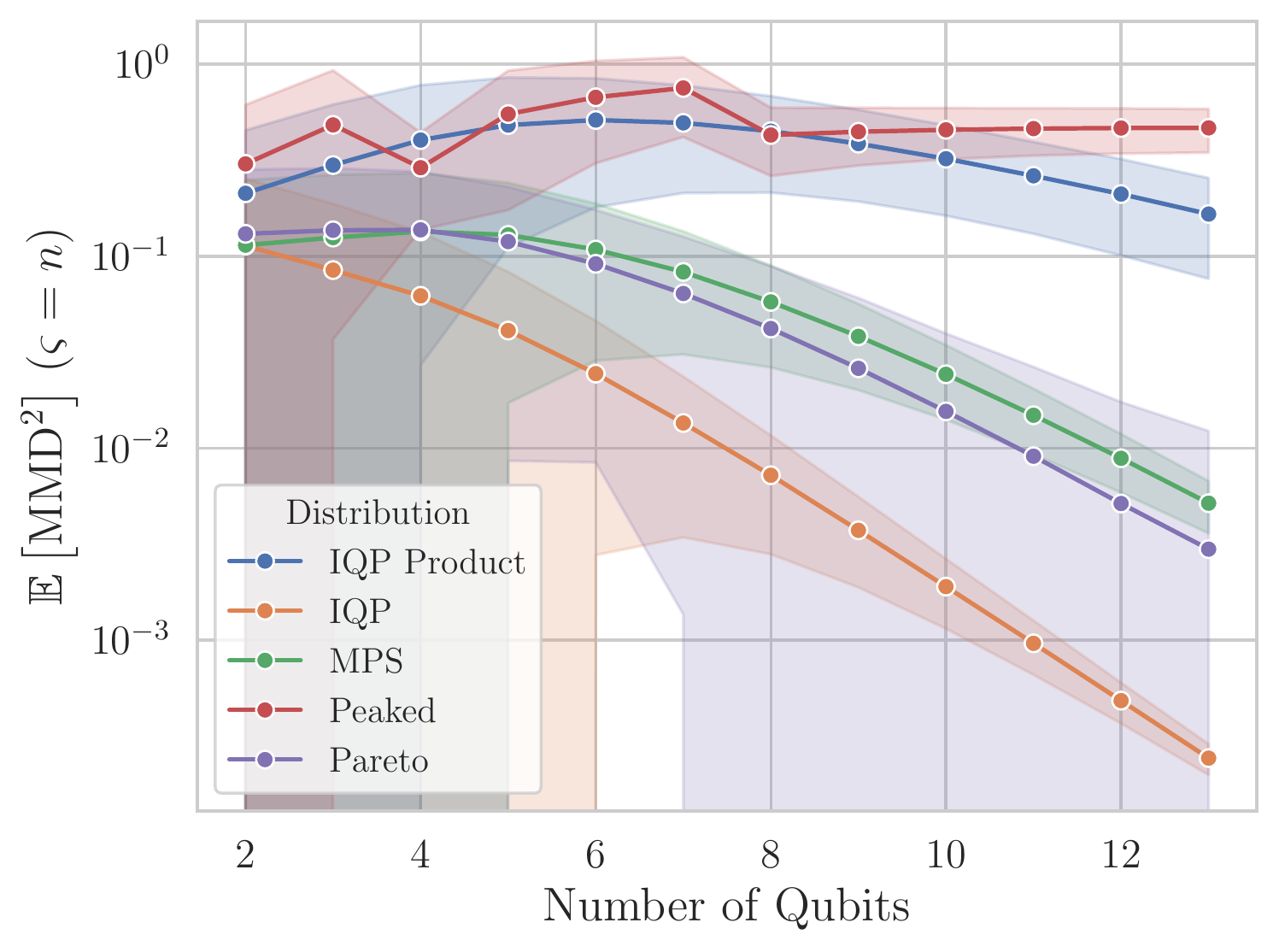}
        \caption{Average of $\MMD$ with $\varsigma\in O(n)$ for all considered models. In this experiments, IQP product does not decay as strongly as other approaches. This behavior is due to the special properties of $\MMD$, which is highly sensitive to low-order correlations. Those are dominant for product circuits. }
        \label{fig:mumerics-mmdoon}
\end{figure}

\paragraph{Pareto Distribution --} 
To extend our analysis of pseudo-independent distributions, we additionally consider a Pareto distribution, that is, a random distribution whose density function decays as $\mathcal{O}(Y^{-\alpha})$ (see \Cref{def:pareto}). 
This distribution is interesting as an example of a pseudo-independent distribution that does not decay as quickly as that of Haar-random states. The variance is given by
\begin{equation}
    \sigma = \frac{\alpha}{(\alpha - 1)^2 (\alpha - 2)}.
\end{equation}
In our numerical example we choose the value $\alpha = 2$, which eliminates any tightness in the analytical bounds, while maintaining a well-behaved probability distribution for unbounded values of $Y$. 
The interest in analyzing this distribution is to gain insight into the behavior of the loss functions, where analytical guarantees fail. 
A decay scaling as $\mathcal{O}(N^{-1})$ would discourage the use of pseudo-independent distributions as generative models generally.

To the best of our knowledge, there is no known corresponding quantum circuit. 
However, this is not strictly required per our formalism as it suffices to have a description of the probability distributions. 
We use \texttt{scipy}~\cite{2020SciPy-NMeth} to generate the random variables accordingly.

\paragraph{Peaked Circuits} Finally, we generate peaked probability distributions by considering IQP circuits on a subset of $\log(n)$-qubits, and randomly distribute them in the $n$-qubit Hilbert space. This setting corresponds to the pseudo-independent peaked distributions from \Cref{def.peaked_distribution}, with $K \in \Theta(n)$.

\subsection{Results}
We first analyze the behavior of the SD for the considered distributions in \Cref{fig:numerics-msd}. As expected, the IQP and IQP-product states show an exponential decay in the expectation value of the SD, in agreement to our theoretical predictions. Further, the MPS distribution serves as an intermediate step between IQP and IQP product. For the case of peaked circuits, the scaling follows the expected trend, with artificial steps due to the rounding of $\log(n)$. The Pareto distribution we consider does not yield any behavior significantly different from the other considered ones. In spite of the inexistence of a bound in $\E{}{\Delta(p, q)}$, the numerical experiments show a similar exponential behavior as IQP or IQP product. We highlight an interesting fact of these results. The IQP circuits do not only show an exponential decay on the average, but also strong concentration properties as the number of qubit increases, in agreement with our assumption of pseudo-independence. This observation is more clearly visible in \Cref{fig:numerics-msd-variance} 

In alignment with previous works \cite{rudolph-generative-modelling}, we explore the $\MMD$ for $\varsigma\in O(1)$ in \Cref{fig:mumerics-mmd}, and for $\varsigma\in O(n)$ in \Cref{fig:mumerics-mmdoon}. For $\varsigma = 1$, we observe similar trends as the ones explained for the MSD, showing that their average behaviors are identical, and that $\MMD$ will not provide on average trainability advantage over the SD. For $\varsigma = n$, there is a change for IQP product and MPS states. As discussed in \Cref{sec.preliminaries_loss_functions}, our $\MMD$ is sensitive to low-order correlations, present in states with low entanglement. It hints that these models might be easier to train than IQP, but also classical methods exists for sampling. 

\section{Verification}\label{sec.verification}

In this manuscript, the discussion hovers over the concept of loss-function concentration as a proxy to trainability. 
We utilize this narrative for its connection to barren plateaus \cite{cerezo2025does, larocca2024review}, where exhaustive research established fundamental trade-offs between classical advantage and trainability. 
In the context of GML, trainability means finding a candidate distribution $p$ that best approximates $q$. 
A bare minimum requirement to accomplish this task is being able to distinguish $p$ from $q$. 
This task is known as \textit{verification} \cite{valiant-valiant, hangleiter}.
In this section, we provide an interpretation of our findings under the lens of verification. 

In quantum computing, verification is phrased as the task to certify that a quantum computation with promised advantage is actually a quantum computation and not a classical approximation of it. 
In the common setup of verification, a full description of a target distribution $q$ is given, together with sampling access to a trial distribution $p$.
We need to distinguish the cases
\begin{equation}
    \left\{ \begin{matrix}p = q & \\ 
    \|p-q\|_1& \ge\epsilon \end{matrix} \right. \quad , 
\end{equation}
with probability at least $2/3$\footnote{As in many examples, this success probability only needs to be bounded away from $1/2$, and it can be boosted to arbitrarily close to $1$ using repetition}, where $\norm{\cdot}_1$ is the Total Variation Distance (TVD).
The minimal sample complexity to distinguish two distributions is tightly related to the norm of its truncated probability vector~\cite{valiant-valiant}. 
For quantum distributions with classical sampling hardness, anticoncentration is a key ingredient for advantages, but also an insurmountable obstacle for verification~\cite{hangleiter}.
In short, exponentially many samples are needed due to the output distributions being \textit{almost but not exactly} uniformly distributed. 

To bridge the gap between training and verification, we need to circumvent the need for a complete description of $q$, which is typically available in the form of samples in any ML setting. 
However, for any distance, the triangle inequality implies 
\begin{equation}
\left\vert \norm{p - u} - \norm{q - u}\right\vert \leq \norm{p - q} \leq \norm{p - u} + \norm{q - u}, 
\end{equation}
where $u$ is the uniform distribution, from which we have a complete description. 
Therefore, independent verification of the distributions $p$ and $q$ suffice to provide bounds on the distances between them which is needed for training, with constant overhead. Notice that $\Vert p - u \Vert$ (resp. $q$) is related to the computation of $\E{}{p(x)^2}$, since $u$ is the average distribution of all $p$, thus connected to anticoncentration. 

To complete the connection, we can consider the TVD as a loss function. In particular, it holds that
\begin{equation}
    \Vert p - q\Vert_1 \leq \sqrt{N \Delta(p, q)}, 
\end{equation}
which could open the possibility to use the TVD as a loss function for generative models due to its exponentially larger values. 
The bound implies that $\E{}{\norm{p - q}_1} \in \mathcal O(1)$, and prevents us from using the mean to show concentration. 
However, this naive approach will not change the overall picture. 
The value of the loss function itself is not relevant for training, but its variance. 
From verification, this variance must decay exponentially with the number of qubits \cite{hangleiter}. 
Also, the concentration properties of the SD, stemming from small average values, impose variances of the same order in the TVD. In particular, for pseudo-independent distributions
\begin{equation}
    \Var{\Vert p - q \Vert} \in \mathcal O(\exp{-n}), 
\end{equation}
as detailed in \Cref{le.var_abs_diff}, \Cref{app:loss-function-concentration}. We also provide a numerical exploration of the TVD for our experiments in \Cref{app:extension-numerics}.

\subsection{Peaked circuits}\label{sec.peaked_circuits}
Since verification is a necessary previous step to train a generative model, it is interesting to look into circuits whose outputs can be verified, such as peaked circuits \cite{zhang2025complexityhardnessrandompeaked, gharibyan2025heuristic}. 
To the best of our knowledge, such circuits are, however, hard to find. 
An example is to concatenate two random circuits and post-select the second one so that the final state is peaked \cite{aaronson2024verifiablequantumadvantagepeaked}, a procedure that brings exponential overhead.

Peaked circuits do not admit estimation of $p(x)$ to multiplicative precision in general.
Otherwise, Shor's algorithm, an example of a peaked circuit \cite{Shor_1997}, could be dequantized to factorize numbers with classical machines. 
Regarding sampling, there must be a classical algorithm that produces samples coming from the distribution created by any peaked circuit.
This is the case because peaked circuits concentrate their outputs in a small support, and a classical algorithm could in principle store in memory all of the outputs to sample them efficiently. 
However, because estimating $p_C(x)$ is hard, no purely classical recipe to find this classical sampling surrogate seems possible. 

\subsection{Distinguishing distributions with \MMD}\label{sec.estimation}
Finally, it is possible to show that similar constraints to the ones found in verification apply to estimating a loss function and distinguishing between distributions.
The $\MMD$ has been considered for this task in previous works~\cite{JMLR:v13:gretton12a}.
While the true $\MMD$ is not accessible, it is possible to compute an unbiased
estimator of it, \estMMD, from $l$ samples from the candidate distribution $p$, and $m$ data points from $q$.
The estimator converges as follows,
\begin{multline}\label{eq.statistical_test}
        \prob{|\estMMD - \MMD| > t } \leq \\ \exp{-\frac{t^2 (m + l)}{4K^2}}, 
\end{multline}
where $K$ is an upper bound to the absolute value of the kernel, for the Gaussian kernel $K \leq 1$. 
This, in turn, allows us to devise a statistical test for equality.
\begin{corollary}[Statistical test for \estMMD]\label{cor.stat-test-mmd}
    A hypothesis test of level $\alpha$ for the null hypothesis $p=q$ has the acceptance region
    \begin{equation}\label{eq:statistical-test}
        \estMMD \le \sqrt{\frac{8 \log(\alpha^{-1})}{m + l}}.
    \end{equation}
\end{corollary}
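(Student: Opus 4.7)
The plan is to directly invoke the concentration inequality in \Cref{eq.statistical_test} under the null hypothesis and solve for the threshold that gives a significance level of $\alpha$. Under $p = q$, the population MMD vanishes for characteristic kernels, $\MMD(p,q) = 0$, so the estimator $\estMMD$ must itself be small with high probability. A natural one-sided test thus rejects the null when $\estMMD$ exceeds some threshold $c(\alpha)$, and accepts otherwise.

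First, I would set $\MMD = 0$ in \Cref{eq.statistical_test}, which reduces the tail bound to $\prob{\estMMD > t} \leq \prob{|\estMMD| > t} \leq \exp(-t^2 (m+l)/(4 K^2))$. Second, I would specialize to the Gaussian kernel used throughout the paper, which is bounded by $K \leq 1$, yielding a clean form depending only on $t$, $m$ and $l$. Third, I would impose that the probability of false rejection is at most $\alpha$, i.e.\ set the right-hand side of the tail bound equal to $\alpha$ and solve for $t$:
\begin{equation}
\exp\!\left(-\tfrac{t^2 (m+l)}{4}\right) \;=\; \alpha
\quad\Longrightarrow\quad
t \;=\; \sqrt{\tfrac{4\log(\alpha^{-1})}{m+l}}.
\end{equation}
Fourth, the stated bound $\sqrt{8\log(\alpha^{-1})/(m+l)}$ is recovered by accounting for the two-sided form in which the concentration inequality of \Cref{eq.statistical_test} is typically stated for U-statistic estimators of $\MMD$, or equivalently by using a slightly looser constant to cover symmetric deviations of $\estMMD$ around $\MMD$. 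Either route yields the acceptance region in \Cref{eq:statistical-test}.

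The main obstacle is bookkeeping of the constant, rather than any real mathematical difficulty: one must be careful whether \Cref{eq.statistical_test} is used as a one-sided or two-sided statement, and which normalization (involving $m$, $l$, or the harmonic mean $ml/(m+l)$) is chosen when translating the Hoeffding-type bound for the U-statistic version of $\estMMD$ into the form used here. Once this is fixed consistently with the constant appearing in \Cref{eq.statistical_test}, the corollary follows immediately as a one-line algebraic rearrangement.
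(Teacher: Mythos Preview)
Your proposal is correct and follows essentially the same route as the paper: set $\MMD=0$ under the null, apply the concentration inequality for $\estMMD$ (the paper's \Cref{thm.bound-on-statistic}, equivalently \Cref{eq.statistical_test}), equate the right-hand side to $\alpha$, and solve for the threshold $t$. The paper's own proof is a one-liner doing exactly this, and like you it does not dwell on the constant; your remark that the discrepancy between $4$ and $8$ is a matter of bookkeeping (one- versus two-sided tails, choice of $K$) is the right diagnosis.
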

Obtaining sufficiently large values for the estimator is then necessary to distinguish between $p$ and $q$. For exponentially vanishing values of the true $\MMD$, such distinction will be very unlikely with $m,l\in\poly{n}$. The vast majority of statistical tests will return $p=q$, making the search for an optimal $p$ unfeasible. 
This is the typical scenario for a distribution probability showing anticoncentration.
We refer to \Cref{app:bounds-mmd} for further details on these results.

\section{Implications for QML}\label{sec.implications_qml}
The results detailed in this manuscript have obvious implications for GQML.
Certain models with quantum advantage, such as IQP or RQC, are not trainable, and suffer from the same problems as many other variational models \cite{McClean_2018, cerezo2025does}. 
Only in the assumption that a strong bias exists to achieve an efficient and informative training the model becomes useful. 
Otherwise, training will not be possible, as most of the possible output distributions are \textit{almost but not exactly} uniform.

To escape from this phenomenon, one could first consider a target probability distribution $q$ which is an outlier of the generative model, hence
\begin{equation}
    \Delta(q, u) \gg N^{-1}, 
\end{equation}
implying that $q$ is easily distinguishable from $u$ (the uniform distribution). In this case, the triangle inequality implies
\begin{multline}
    \left( \Delta(q, u)^{1/2} - \Delta(p, u)^{1/2}\right)^2 \\ \leq \Delta(p, q) \leq \\ \left( \Delta(q, u)^{1/2} + \Delta(p, u)^{1/2}\right)^2.
\end{multline}
Therefore, $\Delta(p, q)$ shows concentration inherited from the concentration on $\Delta(p, u)$, even though the average is increased to $\approx \Delta(q, u)$. 

Biased models have been suggested in previous works \cite{recioarmengol2025trainclassicaldeployquantum, rudolph-generative-modelling}, where data coming from probability distributions with sparse Fourier spectrum could be trained using $\MMD$. 
A signature for this phenomenon was seen in \Cref{sec:numerical-exploration}. 
This is the case for a product distribution, whose Fourier transform is given by
\begin{equation}
    \hat P_{\vec a}(S) = \prod_{i = 1}^n (2a_i - 1)^{S_i}, 
\end{equation}
which implies that low-weight $S$ have geometrically larger Fourier components than high-weight $S$. 
This nicely aligns with the Gaussian kernel, and the strong bias will make these models trainable. 
In fact, 
\begin{align}
    \varsigma \in \mathcal O(n) \rightarrow & \E{}{\Delta(p, q)} \in \Omega(1), \\
    \varsigma = n \rightarrow & \E{}{\Delta(p, q)} \geq 2 e^{-1/6}
\end{align}
It is, however, important to highlight that these instances do not belong to the average case of models with quantum advantage rooted in anticoncentration, as they are classically representable, as discussed in \Cref{sec.def_product}. 

At the core, ML consists in learning a distribution from data, that is, approximating it up to a certain accuracy. 
We remark that distribution learning is not equivalent to verification, in the sense of distinguishing two instances. 
In particular, there exist distribution classes that are learnable but not verifiable \cite{goldreich}, for instance the product distribution. 
Hence, the impossibility to verify a probability distribution does not immediately preclude learning it. 

Finally, ML has lately investigated the \textit{simplicity bias} \cite{shah2020pitfallssimplicitybiasneural} as a feature that allows for generalization of the training dataset. 
This bias is necessary to ensure that a small training dataset must be representative enough to allow for the learning of the entire distribution. 
Probability distributions exhibiting anticoncentration do not comply with the simplicity bias. 
In light of our results, it is impossible to use $\poly n$ many samples to learn its underlying distribution, assuming a typical output from a generative model with anticoncentration. 
Consider $p_C(x) \approx 2^{-n}$ of this kind. 
A dataset of size $\poly{n}$ sampled from this distribution will be a list of unique samples. 
The available information does not suffice to distinguish the underlying distribution between the uniform or the true $q$, or even from a somewhat uniform distribution with support over a subset of bitstrings. 
In consequence, samples will only be useful if the underlying probability distribution is sufficiently peaked, that is, concentrated, at the expense of losing the sampling hardness that motivates the use of such a quantum model.

\section{Conclusion}\label{sec.conclusion}
In this paper, we have analyzed the statistical properties of typical loss functions of quantum generative models, in particular those exhibiting anticoncentration and quantum advantage. 
To this end, we gave three models of families of probability distribution, each one corresponding to a typical example of quantum circuits, namely i) product distributions, related to tensor-product circuits, ii) pseudo-independent distributions, related to IQP or RQC, and iii) peaked pseudo-independent distributions, related to peaked circuits. 
We showed that typical loss functions of generative models corresponding to the first and second kind of distributions, have exponentially vanishing average values. 
Therefore, it is extremely unlikely that any statistical test can even distinguish between two such distributions and the models become effectively untrainable.
Our analytical results are complemented with numerical experiments to bridge the gap for a broader set of quantum circuits.

The immediate consequence is that quantumly advantageous generative models whose advantage stems from anticoncentration will not be usable in practice. 
The reason for our observation lies at the foundation of this kind of quantum advantage, where multiplicative errors on approximating certain exponentially-vanishing quantities are required. 
As a result, the typical outcome of such a model will be \textit{close but not exactly} uniformly distributed, which questions the applicability of such models. 

Our results prevent the use of such circuits for GQML, unfortunately leading to the loss of a candidate for advantageous generative models.
However, there are other possible sources for advantages that survive our analysis. 
A clear candidate are peaked circuits, whose distributions strongly concentrate on small supports whose probability densities $p_C(x)$ are know to not be classical estimable. 
Nevertheless, the properties of peaked circuits might make their output distributions classically learnable from data. 
Another possible candidate for useful and, potentially, trainable models are models targeting datasets with a strong bias. However, several examples throughout the manuscript show that keeping the sampling advantage while biasing the model is a non-trivial task. 

The results here presented encourage us to reflect on the applicability of QGML. 
Classical hardness as stated in anticoncentrated models necessarily imply trainability issues. 
However, interesting generated data should obey a specific purpose, and not be approximately uniform.
In line with this discussion, generative models should be addressed from an angle complementary to classical hardness from multiplicative errors. 

\section*{Code and data availability}

The code used in this work is available in the GitHub repository at~\cite{githubcode}.

\acknowledgments
The authors would like to thank Yuxuan Zhang, Maria Schuld and Vincenzo De Maio for valuable discussions. SH is a recipient of the Andreas Dieberger-Peter Skalicky scholarship and acknowledges financial support by the netidee funding campaign (grant 7728). This work has been partially funded through the Themis project (Trustworthy and Sustainable Code Offloading), Austrian Science Fund (FWF): Grant-DOI 10.55776/PAT1668223, the Standalone Project Transprecise Edge Computing (Triton), Austrian Science Fund (FWF): P 36870-N, and by the Flagship Project HPQC (High Performance Integrated Quantum Computing) \# 897481 Austrian Research Promotion Agency (FFG). APS acknowledges the Swiss National Science Foundation for its support through the grant TMPFP2\_234085.

\bibliography{refs.bib}

\onecolumngrid
\newpage
\appendix

\section{Preliminaries}
We begin with several definitions, which are used in subsequent proofs.

\begin{definition}[Gamma function~\cite{abramowitz1965handbook}]\label{def:gamma-func}
    The Gamma function is defined as 
    \begin{equation}
        \Gamma(z) = \int_0^{\infty} t^{z-1}e^{-t}dt
    \end{equation}
    for any real $z > 0$. For positive integers, it takes the form

    \begin{equation}
        \Gamma(z) = (z-1)!
    \end{equation}
\end{definition}

\begin{definition}[Incomplete lower Gamma function~\cite{abramowitz1965handbook}]\label{def:incomplete-lower-gamma}
    The incomplete lower Gamma function is defined as

    \begin{equation}
        \gamma(a,x) =\int_0^{x} e^{-t}t^{a-1}dt
    \end{equation}

    with $a\in\mathbb{C}, \Re(a)>0$. 
\end{definition}

\begin{definition}[Gamma distribution]\label{def:gamma-distribution}
    The probability density function of a Gamma distribution with shape $\alpha > 0$ and scale $\theta > 0$ is defined as 

    \begin{equation}
        f_{\alpha,\theta}(x) = \frac{1}{\Gamma(\alpha)\theta^\alpha}x^{\alpha - 1}e^{-x/\theta}.
    \end{equation}
    Its cumulative distribution is proportional to the incomplete lower Gamma function.
\end{definition}

\begin{definition}[Dirichlet distribution~\cite{multivariate}]\label{def:dirichlet}
	An $N$-dimensional random variable $\vec x$ follows a Dirichlet distribution if it follows a probability density function given by
	\begin{equation}
		f(\vec x) = \frac{\Gamma(\sum_i\alpha_i)}{\prod_{i = 1}^N \Gamma(\alpha_i)} \prod_{i = 1}^N x_i^{\alpha_i - 1}, 
	\end{equation}
	for $\alpha_i \geq 0$.
	The marginal probabilities follow Beta distributions with $\alpha_0 = \sum_{i = N} \alpha_i$.
\end{definition}

\begin{definition}[Beta distribution~\cite{bailey1992distributional}]\label{def:beta}
A random variable $x$ follows a Beta distribution if it follows a probability density function given by
\begin{equation}
    f(x) = \frac{\Gamma(\alpha + \beta)}{\Gamma(\alpha) \Gamma(\beta)} x^{\alpha - 1} (1 - x)^{\beta - 1}, 
\end{equation}
    where $\Gamma(\cdot)$ is the Gamma function. Beta distributions are the marginals of Dirichlet distributions. 
\end{definition}

\begin{definition}[Symmetric Dirichlet distribution]\label{def:sym-dirichlet}
	A Dirichlet distribution is symmetric if $\vec\alpha = \alpha \vec 1$. 
\end{definition}

\begin{definition}[Poisson distribution]\label{def:poisson}
    A random variable $X$ is said to follow a Poisson distribution with $\lambda > 0$ if its probability mass function is given by

    \begin{equation}
        \Pr(X=k) = \frac{\lambda^k e^{-\lambda}}{k!}
    \end{equation}
\end{definition}

\begin{definition}[Pareto distribution]\label{def:pareto}
    A random variable $X$ is said to follow a Pareto distribution with $\alpha > 1$ if its probability mass function is given by
    \begin{equation}
        \Pr(X=p) = \frac{\alpha}{(1 + x)^{\alpha + 1}}.
    \end{equation}
\end{definition}

\begin{definition}[Gini coefficient]\label{def:gini}
    The Gini coefficient of a positive random variable $Y$ is given by
    \begin{equation}
        G(Y) = \frac{\E{Y,Y'}{\vert Y - Y'\vert }}{2 \E{}{Y}},
    \end{equation}
    with $Y'$ being a random variable independent from $Y$ but following the same probability distribution. 
\end{definition}

\section{Connection between pseudo-independent distributions and Haar-random states}\label{app:pconnection-haar-pseudo}

\subsection{Connection between Dirichlet distributions and Haar-random states}\label{app:pconnection-haar-dirichlet}
It is a known fact that the probabilities $p(x)$ of measuring a bitstring $x$ after preparing a Haar-random state follow a symmetric Dirichlet distribution \cite{boixo, olkin1964multivariate}, with $\alpha = 1$. This implies that the marginal probability sampling each different $x$ follows a probability distribution given by
\begin{equation}
    \prob{p(x) = y} = N (1 - y)^{N - 2}, 
\end{equation}
which is a Beta distribution~\cite{bailey1992distributional}. 
The Porter-Thomas distribution commonly discussed in the supremacy experiments is just an approximation of this distribution
\begin{equation}
    \prob{p_C(x) \geq y} = N \int_{y}^1 (1 - t)^{N - 1} dt = (1 - y)^N \lesssim \exp{-N y},
\end{equation}
Notice that the Porter-Thomas distribution is by definition incorrect since $y$ is unbounded. However, it still returns accurate descriptions. 

It is important to give relevance to the marginals of the distributions with respect to the distribution itself. It is somewhat accepted in the folklore that a Porter-Thomas shape is the signature for \textit{quantumness} \cite{boixo}. 
However, sampling a constant number of elements from any distribution admitting an analytical description is efficient \cite{bouland2018quantum}. 
We remark that this behavior does not suffice to identify the probability distribution as sourced from a quantum device. 
In fact, sampled elements from marginals do not even constitute a probability distribution because they are not constrained to sum up to one. 
We conjecture that the true source of \textit{quantumness} is not visible in the marginals, but in the correlations of the sampling probabilities. 
These correlations must be exponentially small, and thus only visible when exponentially many samples are obtained. 
Along these lines, recent works on lower bounds for the query complexity of probability distributions might open interesting directions to shed light on classical sampling methods \cite{chewi2023query}.  

\subsection{Connection between pseudo-independent and Dirichlet distributions}\label{app:pconnection-dirichlet-indepdent}   

\begin{theorem}
    Drawing $X_i$ i.i.d. according to a Gamma distribution with shape $\alpha$ and rate $1$ from \Cref{def:gamma-distribution} results in the pseudo-independent random variables distributed according to a symmetric Dirichlet distribution from \Cref{def:sym-dirichlet}.
\end{theorem}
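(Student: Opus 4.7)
The plan is to carry out the classical change-of-variables argument that identifies the joint law of normalized i.i.d.\ Gamma variables with the Dirichlet distribution. Since the $X_i$ are independent $\operatorname{Gamma}(\alpha, 1)$, their joint density on $(0,\infty)^N$ factorizes as
\begin{equation}
f_{X}(x_1, \ldots, x_N) = \frac{1}{\Gamma(\alpha)^N} \left( \prod_{i=1}^N x_i^{\alpha - 1} \right) \exp{-\sum_{i=1}^N x_i}.
\end{equation}
I would then introduce the transformation $S = \sum_{j=1}^N X_j$ and $Y_i = X_i / S$ for $i = 1, \ldots, N-1$, with $Y_N = 1 - \sum_{i=1}^{N-1} Y_i$ following from the normalization constraint. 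The inverse map sends $(Y_1, \ldots, Y_{N-1}, S)$ to $(S Y_1, \ldots, S Y_N)$, and a standard determinant calculation shows that its Jacobian equals $S^{N-1}$.

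Substituting into $f_X$ and regrouping the powers of $S$ gives, on the simplex times $(0,\infty)$,
\begin{equation}
f_{Y, S}(y_1, \ldots, y_{N-1}, s) = \frac{1}{\Gamma(\alpha)^N}\, s^{N\alpha - 1}\, e^{-s} \prod_{i=1}^N y_i^{\alpha-1}.
\end{equation}
The density factorizes: the $s$-part is proportional to the $\operatorname{Gamma}(N\alpha, 1)$ density, and the $y$-part depends only on the simplex coordinates. Integrating out $s$ using $\int_0^\infty s^{N\alpha - 1} e^{-s}\, ds = \Gamma(N\alpha)$ leaves the marginal of $(Y_1, \ldots, Y_{N-1})$ as
\begin{equation}
f_Y(y_1, \ldots, y_{N-1}) = \frac{\Gamma(N\alpha)}{\Gamma(\alpha)^N} \prod_{i=1}^N y_i^{\alpha - 1},
\end{equation}
which is exactly the symmetric Dirichlet density from \Cref{def:sym-dirichlet} with all parameters equal to $\alpha$. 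Moreover, the factorization shows that $S$ is independent of $(Y_1, \ldots, Y_N)$, a useful by-product.

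There is no real obstacle here: the only step requiring any care is verifying that the Jacobian of the chosen parametrization is $S^{N-1}$ and that the exponents of $S$ combine correctly to $N\alpha - 1$ after collecting $\prod x_i^{\alpha-1} = S^{N(\alpha-1)} \prod y_i^{\alpha-1}$ with the extra $S^{N-1}$ from the change of variables. The conclusion that $Y_i = X_i / \sum_j X_j$ matches \Cref{def.independent_distribution} (pseudo-independent variables with underlying distribution Gamma) is then immediate by inspection, closing the connection to the symmetric Dirichlet distribution claimed in the theorem.
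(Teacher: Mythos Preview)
Your proposal is correct and follows essentially the same change-of-variables argument as the paper: introduce $(Y_1,\ldots,Y_{N-1},S)$, compute the Jacobian $S^{N-1}$, and factor the resulting density into a Gamma part in $S$ and a symmetric Dirichlet part in $Y$. The only difference is cosmetic---the paper spells out the Jacobian computation via explicit row reduction, whereas you cite it as standard---but the method and conclusion are identical.
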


\begin{proof}
    We consider i.i.d. random variables drawn from a Gamma distribution with shape $\alpha$ and rate 1 $X_1, \dots, X_N \sim^{i.i.d.} \Gamma(\alpha, 1)$ with their densities given as in \Cref{def:gamma-distribution}.

    We consider the normalized vector 

    \begin{equation}
        Y_i = \frac{X_i}{\sum_i X_i}
    \end{equation}

    and let $S = \sum_i X_i$. The joint probability distribution of the random variables is given as 

    \begin{equation}
        p(X_1,\dots,X_N) = \prod_i p(X_i) = \prod_i \frac{1}{\Gamma(\alpha)} X_i^{\alpha - 1}e^{-X_i}
    \end{equation}

    We now shift the perspective and consider $X_i = Y_iS$. For $i=N$, we instead look at $X_N = S(1-\sum_{i\in [1,N-1]}Y_i)$. We shift from $p(X_1,\dots,X_N)$ to $p(Y_1S,\dots, Y_{N-1}S, (\sum_{i\in[1,N-1]}Y_i)S$, and consider the Jacobian first.

    \begin{equation}
        J = \frac{\partial(X_1,\dots,X_n)}{\partial(Y_1,\dots,Y_{N-1},S} 
    \end{equation}
    Its determinant will encode information on how to transform variables from one to another description. 
    We can now derive the partial derivatives for $i\in[1,N-1]$ as $\frac{\partial X_i}{\partial Y_i} = S$, $\frac{\partial X_j}{\partial Y_i} = 0$, for $i \ne j$ and  $\frac{\partial X_i}{\partial S} = Y_i$. For $X_N$ we see $\frac{\partial X_N}{\partial Y_i} = -S$ and $\frac{\partial X_N}{\partial S} = Y_n$. By standard properties of the determinant, we can pull out factor of $S$ from one row, and multiply it onto the determinant afterward. We do this for all $N$ columns, leaving the Jacobian as 

    \begin{equation}
        J = \begin{bmatrix}
            1 & 0 & \dots &  0 & \frac{Y_1}{S} \\
            0 & 1 & \dots &  0 & \frac{Y_2}{S} \\
            \vdots & \vdots & \ddots & \vdots & \vdots \\
            0 & 0 & \dots & 1 & \frac{Y_n}{S} \\
            -1 & -1 & \dots & -1 & \frac{Y_n}{S}
        \end{bmatrix}
    \end{equation}

    We add all rows to the last row, resulting in an upper triangular matrix, whose determinant is equal to the product of diagonal entries, which is $1/S$. Multiplying by the previous factor of $S^N$ reveals the determinant as 
    \begin{equation}
        \left\vert \det(J) \right\vert = S^{N - 1}
    \end{equation}.

    We now turn to the joint probability distribution as

    \begin{align}
        p(Y_1,\dots,Y_{N},S) & = \prod \frac{1}{\Gamma(\alpha)}(SY_i)^{\alpha-1}e^{-SY_i} S^{N-1}\\
        &= \frac{1}{\prod_{i=1}^{N}\Gamma(\alpha)}S^{N-1}S^{N(\alpha - 1)}e^{S} \prod Y_i^{\alpha-1}
    \end{align}

    We can now split this into two parts, one dependent only on $S$ and one on $Y$. 

    \begin{align}
        &p(S) = \frac{1}{\Gamma(N\alpha)}S^{N\alpha-1}e^S \\
        &p(Y_1,\dots,Y_n) = \frac{\Gamma(N\alpha)}{\prod_{i=1}^{N}\Gamma(\alpha)} \prod Y_i^{\alpha - 1}
    \end{align}

    It follows that the joint distribution of the pseudo-independent random variables follows a symmetric Dirichlet distribution $\{Y_1,\dots,Y_N\} \sim \Dir$, and $S$ follows the Gamma distribution $S \sim \Gamma(N\alpha, 1)$.
\end{proof}

\section{Loss function concentration}\label{app:loss-function-concentration}

\subsection{Product Distribution}\label{app:product-distribution}
We consider the product distributions from \Cref{def.product_distribution}.

\begin{lemma}[Marginals of the product distribution]\label{le.marginals_product}

Let $p_{\vec a}$ be a product distribution, with $\vec a \sim U^{(n)}$. The probability distribution of $p_{\vec a}(x)$ is given by 
\begin{equation}
\prob{p(x) = y} = \frac{\ln(1 / y)^{n - 1}}{(n-1)!}    
\end{equation}
for any $x \in \{ 0, 1\}^n$ and $0 \le y \le 1$. 
\end{lemma}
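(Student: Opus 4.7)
The plan is to use symmetry to reduce to a single representative bitstring and then identify the distribution of a product of uniforms as a Gamma‐in‐disguise. Concretely, because the map $a_i \mapsto 1 - a_i$ is measure preserving on $[0,1]$ and sends $0$‑bits to $1$‑bits in the formula for $p_{\vec a}(x)$, the random variable $p_{\vec a}(x)$ has the same law for every fixed $x \in \{0,1\}^n$. It therefore suffices to treat $x = 0^n$, in which case $p_{\vec a}(0^n) = \prod_{i=1}^n a_i$ with $a_i$ i.i.d.\ uniform on $[0,1]$.

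The natural move is to linearise via logarithms. Setting $T_i = -\log a_i$, a direct check gives $\prob{T_i \geq t} = \prob{a_i \leq e^{-t}} = e^{-t}$, so each $T_i$ is exponential with rate $1$. Hence
\begin{equation}
    -\log p_{\vec a}(0^n) \;=\; \sum_{i=1}^n T_i
\end{equation}
is the sum of $n$ i.i.d.\ rate‑$1$ exponentials, which follows a Gamma distribution of shape $n$ and rate $1$ (\Cref{def:gamma-distribution} with $\alpha=n$, $\theta=1$). Its density is $f_T(t) = \frac{t^{n-1} e^{-t}}{(n-1)!}$ on $t \geq 0$.

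The last step is a change of variables $y = e^{-t}$, equivalently $t = \ln(1/y)$, for $y \in (0,1]$. Because $|dt/dy| = 1/y$, the density of $Y = p_{\vec a}(0^n)$ is
\begin{equation}
    f_Y(y) \;=\; f_T(\ln(1/y)) \cdot \frac{1}{y} \;=\; \frac{\ln(1/y)^{n-1} \cdot e^{-\ln(1/y)}}{(n-1)!} \cdot \frac{1}{y} \;=\; \frac{\ln(1/y)^{n-1}}{(n-1)!},
\end{equation}
which is exactly the claimed expression, and by the symmetry argument it holds for every $x$.

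There is really no serious obstacle here: the only thing to be careful about is the Jacobian in the final change of variables and the implicit use of the invariance $a_i \mapsto 1 - a_i$ at the start. Both are routine, and the statement follows without any probabilistic estimate being invoked.
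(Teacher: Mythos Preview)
Your proof is correct and follows essentially the same route as the paper: both exploit the $a_i \mapsto 1-a_i$ symmetry to reduce to $x=0^n$ and then pass to $z_i=-\log a_i$ to turn the product into a sum. The only cosmetic difference is that the paper reads off the answer as the volume of the simplex $\{z_i\ge 0,\ \sum_i z_i=\ln(1/y)\}$, whereas you package the same computation as ``sum of $n$ i.i.d.\ exponentials is $\mathrm{Gamma}(n,1)$'' followed by the change of variables $y=e^{-t}$, which makes the Jacobian bookkeeping more explicit.
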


\begin{proof}
    First, we note that the probability distribution has a strong symmetry by exchanging $0$ and $1$ in the variable $x$ and elements in $\vec a$. That is, the probability distribution for a given value of $a_i$ perfectly matches that of $(1 - a_i)$ under a transformation $x_i \rightarrow x_i \oplus 1$. Since we assume a uniform distribution, both elements are equally probable, and the average over $\vec a$ implicitly performs an average over $x$. Hence, all marginals are the same, and we just need to focus on one of them, for instance $\vec x = 0\ldots 0$. The condition is achieved by 
    \begin{equation}
        \prod_{i = 1}^n a_i = y. 
    \end{equation}
    We perform the change of variables
    \begin{equation}
        a_i = \exp{-z_i}, \qquad z_i \geq 0, 
    \end{equation}
    which transforms the condition into 
    \begin{equation}
        \sum_{i} z_i = \ln(1 / y). 
    \end{equation}
    This condition on $\mathbb R^n$ describes a simplex over $n$ dimensions, with volume 
    \begin{equation}
        \operatorname{Vol}_{\sum_i z_i = \ln(1 / y)} = \frac{\ln(1 / y)^{n - 1}}{(n - 1)!}. 
    \end{equation}
\end{proof}

\begin{lemma}[Marginals of the product distribution - II]\label{le.marginals_product2}
  In the assumptions of \Cref{le.marginals_product},
  \begin{equation}
\prob{p(x) \geq \frac{y}{2^n}} = \frac{\gamma(n, n \ln 2 - \ln(y))}{\Gamma(n)},     
\end{equation}
with $\Gamma(\cdot)$ the Gamma function from \Cref{def:gamma-func}, and $\gamma(\cdot, \cdot)$ the lower incomplete gamma function from \Cref{def:incomplete-lower-gamma}.
\end{lemma}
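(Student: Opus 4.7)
The plan is to view the statement as a direct consequence of the density identity already proved in \Cref{le.marginals_product}. Since that lemma gives the density of $p(x)$, I would compute
\begin{equation}
\prob{p(x) \geq \tfrac{y}{2^n}} = \int_{y/2^n}^{1} \frac{\ln(1/t)^{n-1}}{(n-1)!}\, dt,
\end{equation}
using that $p(x) \leq 1$ always, so the tail integral terminates at $1$.

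The key step is the change of variables $u = \ln(1/t)$, equivalently $t = e^{-u}$ and $dt = -e^{-u}\,du$. The bounds transform as $t=1 \mapsto u=0$ and $t = y/2^n \mapsto u = \ln(2^n/y) = n\ln 2 - \ln y$. After absorbing the minus sign by flipping the limits, the integral becomes
\begin{equation}
\int_0^{n\ln 2 - \ln y} \frac{u^{n-1} e^{-u}}{(n-1)!}\, du.
\end{equation}
Comparing with \Cref{def:incomplete-lower-gamma} and using $\Gamma(n) = (n-1)!$ from \Cref{def:gamma-func} identifies this as $\gamma(n,\, n\ln 2 - \ln y)/\Gamma(n)$, which is the claim.

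There is essentially no obstacle here — the result is a change-of-variable reformulation of \Cref{le.marginals_product}, making the simplex-volume computation of the previous lemma visible in incomplete-gamma form. The only small subtlety worth being explicit about is that for $y > 1$ we have $n\ln 2 - \ln y < n\ln 2$, and that the assumption $y/2^n \leq 1$ (equivalently $\ln y \leq n\ln 2$) is needed to keep the upper limit of the incomplete gamma nonnegative; outside this range, $\prob{p(x) \geq y/2^n} = 0$ trivially, matching $\gamma(n, \cdot) = 0$ when its second argument is nonpositive, so the formula extends consistently.
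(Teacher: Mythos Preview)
Your proof is correct and arrives at exactly the same integral as the paper, via the same substitution $u=\ln(1/t)$ (the paper writes it as $a_i=e^{-z_i}$ and then sums the $z_i$). The only cosmetic difference is that you invoke \Cref{le.marginals_product} and integrate the density, whereas the paper restarts from the product condition $\prod_i a_i \geq y/2^n$ and redoes the simplex integral; both routes collapse to $\int_0^{n\ln 2-\ln y} e^{-s}s^{n-1}/(n-1)!\,ds$.
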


\begin{proof}
    Following the procedure from the previous lemma, the new condition is given by
        \begin{equation}
        \prod_{i = 1}^n a_i \geq \frac{y}{2^n}, 
    \end{equation}
    hence
    \begin{align}
        \prob{\prod_{i = 1}^n a_i \geq \frac{y}{2^n}} &= \int_{\prod_{i = 1}^n a_i \geq \frac{y}{2^n}} \prod da_i \\ 
        & = \int_{\sum z_i\leq \ln(2^n/y)} \exp{-\sum_i z_i} \prod_{i = 1}^n d z_i \\
        &= \int_{0}^{n \ln(2) - \ln(y)} \exp{-s} \frac{s^{n - 1}}{(n-1)!} ds.  
    \end{align}
    By definition, this is the lower incomplete gamma function from \Cref{def:incomplete-lower-gamma}. 
\end{proof}

\concentrationproduct*

\begin{proof}
We begin from \Cref{le.marginals_product2} with the definition of the incomplete gamma function, and a rewriting for integer numbers in the form \cite{arfken2013mathematical}
\begin{equation}
    \frac{\gamma(n, x)}{\Gamma(n)} = \exp{-x} \sum_{k = n}^\infty \frac{x^k}{k!}. 
\end{equation}
This matches the probability of obtaining more than $n$ events in a Poisson distribution from \Cref{def:poisson}, with $\lambda = x$. The Chernoff bound for the Poisson distribution is given by~\cite[p.50]{Vershynin_2018}
\begin{equation}
    \prob{X \geq a} \leq \left(\frac{\lambda}{a}\right)^{a} \exp{a - \lambda}. 
\end{equation}
In our case, 
\begin{align}
    \lambda & = n \ln(2) - \ln(y) \\
    a & = n, 
\end{align}
hence

\begin{equation}
    \prob{p(x) \geq \frac{y}{2^n}} \leq \left( \frac{n \ln(2) - \ln(y)}{n} \right)^{n} \exp{n - \ln(2) n + \ln(y)}
\end{equation}

Simplifying the first term on the right hand side, we obtain
\begin{align}
    \left(\frac{n\ln(2) - \ln(y)}{n}\right)^n & = \left(\ln(2) - \frac{\ln(y)}{n}\right)^n \\
        & = \ln(2)^n \left(1-\frac{\ln{y}}{n\ln{2}}\right)^n
\end{align}

We combine the $\ln(2)^n$ with the second term, obtaining
\begin{align}
    \ln(2)^n\exp{n-\ln(2)n+\ln(y)}  & = \exp{n\ln(\ln(2))+n(1-\ln(2))+\ln(y)} \\
    & = \exp{n(1-\ln(2)+\ln(\ln(2))}\exp{\ln(y)} \\
    & \approx \exp{-\frac{n}{20}}y
\end{align}

For large $n$, we use $(1-\frac{\ln{y}}{n\ln{2}})^n \approx \exp{-\frac{\ln(y)}{\ln(2)}}$ to find

\begin{align}
    y\exp{-\frac{\ln(y)}{\ln(2)}}\exp{-\frac{n}{20}} & = \exp{\ln(y)-\frac{\ln(y)}{\ln(2)}}\exp{-\frac{n}{20}} \\ 
    & = \exp{\ln(y)\left(1-\frac{1}{\ln(2)}\right)}\exp{-\frac{n}{20}} \\
    & \approx \exp{\ln(y^{-0.44})}\exp{-\frac{n}{20}} \approx \frac{\exp{-\frac{n}{20}}}{\sqrt{y}} 
\end{align}

which concludes the proof.
\end{proof}

\begin{lemma}[Square distance between product distributions]
    Let $p_{\vec a}, q_{\vec b}$ be two instances of the product distributions. Then
    \begin{equation}
    \E{\vec a, \vec b}{\Delta(p_{\vec a}, q_{\vec b})} \in \exp{-\mathcal{O}(n)}
    \end{equation}
\end{lemma}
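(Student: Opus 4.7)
The plan is to directly compute the expectation by exploiting the product structure of the distribution. I expand
\begin{equation}
\Delta(p_{\vec a}, q_{\vec b}) = \sum_{x\in\{0,1\}^n} p_{\vec a}(x)^2 - 2\sum_x p_{\vec a}(x) q_{\vec b}(x) + \sum_x q_{\vec b}(x)^2,
\end{equation}
and then take expectation over $\vec a, \vec b$, which are independent and identically distributed (each uniform on $[0,1]^n$). By independence of $\vec a$ and $\vec b$, the cross term factorizes as $\sum_x \E{\vec a}{p_{\vec a}(x)}\,\E{\vec b}{q_{\vec b}(x)}$, and the two diagonal terms contribute equally.

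The next step is to compute the two moments of $p_{\vec a}(x)$ at a fixed $x$. Because $p_{\vec a}(x)=\prod_i(x_i+(-1)^{x_i}a_i)$ is a product of functions of the independent coordinates $a_i$, the moments factorize over $i$. A quick per-coordinate computation with $a_i\sim U[0,1]$ gives $\E{}{x_i+(-1)^{x_i}a_i} = 1/2$ and $\E{}{(x_i+(-1)^{x_i}a_i)^2} = 1/3$, irrespective of whether $x_i=0$ or $x_i=1$. Hence
\begin{equation}
\E{\vec a}{p_{\vec a}(x)} = 2^{-n},\qquad \E{\vec a}{p_{\vec a}(x)^2} = 3^{-n}.
\end{equation}

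Summing over the $2^n$ bitstrings yields $\sum_x \E{}{p_{\vec a}(x)^2} = (2/3)^n$ and $\sum_x \E{}{p_{\vec a}(x)}\E{}{q_{\vec b}(x)} = 2^{-n}$, so
\begin{equation}
\E{\vec a,\vec b}{\Delta(p_{\vec a}, q_{\vec b})} = 2(2/3)^n - 2\cdot 2^{-n} = 2\exp\!\big(-\ln(3/2)\,n\big) - 2\exp(-\ln 2\, n),
\end{equation}
which lies in $\exp(-\mathcal{O}(n))$ as claimed. This lemma then feeds directly into Proposition \ref{prop.msdproduct} via Markov's inequality applied to the nonnegative random variable $\Delta(p_{\vec a}, q_{\vec b})$, with the leading constant $\ln(3/2)$ matching the exponent stated there.

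There is no real obstacle: the only reason this is clean is that the product form of $p_{\vec a}$ turns every expectation into a product of identical one-dimensional integrals, and the uniform distribution of $\vec a$ makes the computation symmetric between $x_i=0$ and $x_i=1$. The only mild care needed is to check that the cross-term factorization uses independence of $\vec a, \vec b$ (not independence of coordinates within one parameter vector), which is immediate from the hypothesis that $p_{\vec a}$ and $q_{\vec b}$ are two independent samples from the product-distribution family.
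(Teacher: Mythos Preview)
Your proof is correct and follows essentially the same approach as the paper: both compute $\E{}{p_{\vec a}(x)}=2^{-n}$ and $\E{}{p_{\vec a}(x)^2}=3^{-n}$ and combine them to obtain $\E{}{\Delta(p_{\vec a},q_{\vec b})}=2(2/3)^n-2\cdot 2^{-n}$. The only cosmetic difference is that the paper first uses symmetry to reduce to the single bitstring $x=0\ldots 0$ and evaluates the moments via a change of variables leading to a Gamma-function integral (for consistency with its other calculations), whereas you factor the product directly into one-dimensional integrals; the paper itself remarks that this simpler route exists.
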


\begin{proof}
The proof is achieved by explicitly computing mean and variance of this probability distribution. We recall that the probability distributions are symmetric, hence
\begin{equation}
    \E{\vec a, \vec b}{\Delta(p_{\vec a},q_{\vec b})} = 2^n \E{\vec a, \vec b}{\left(p(0) - q(0)\right)^2}, 
\end{equation}
and again by symmetry
\begin{equation}
    \E{\vec a, \vec b}{\left(p(0) - q(0)\right)^2} = 2 \E{\vec a}{p(0)^2} - 2 \E{\vec a}{p(0)}^2. 
\end{equation}
By employing the same tools as in \Cref{le.marginals_product}, both quantities can be easily computed, as
\begin{equation}
    \begin{aligned}
        \E{\vec a}{p(0)} = \int \prod_{i = 1}^n a_i \prod_{i = 1}^n d a_i = \frac{1}{(n - 1)!} & \int_{0}^\infty ds \exp{-2 s} s^{n - 1} = \\
        \frac{1}{(n - 1)!} \frac{1}{2^{n}} & \underbrace{\int_{0}^\infty (2 ds) \exp{-2 s} (2s)^{n - 1}}_{\Gamma(n) = (n-1)!} = \frac{1}{2^n}
    \end{aligned}
\end{equation}

The attentive reader might have seen a simpler way of computing this, since $\E{}{a_i} = 1/2$. We choose to keep it in the form of the gamma function for consistency with the rest of the proof. 

\begin{equation}
    \begin{aligned}
        \E{\vec a}{p(0)^2} = \int \prod_{i = 1}^n a_i^2 \prod_{i = 1}^n d a_i = \frac{1}{(n - 1)!} & \int_{0}^\infty ds \exp{-3 s} s^{n - 1} = \\
        \frac{1}{(n - 1)!} \frac{1}{3^{n}} & \underbrace{\int_{0}^\infty (3 ds) \exp{-3 s} (3s)^{n - 1}}_{\Gamma(n) = (n-1)!} = \frac{1}{3^n}
    \end{aligned}
\end{equation}
Joining all results, we find
\begin{equation}
    \E{\vec a, \vec b}{\Delta(p_{\vec a}, p_{\vec b})} = 2\left(\frac{2^n}{3^n} - \frac{1}{2^n} \right) \leq \exp{-\ln(3/2) n} \in \exp{-\mathcal O(n)}
\end{equation}
\end{proof}

\subsection{Pseudo-Independent Distributions}\label{app:independent-distribution}
Consider the pseudo-independent distributions as defined in \Cref{def.independent_distribution}. Useful symbols in this section will be
\begin{align}
    \mu & = \E{}{Y} \\
    \hat \mu & = \frac{1}{N} \sum_{x} Y_x \\
    \mu_2 & = \E{}{Y^2} \\
    \sigma^2 & = \mu_2 - \mu^2
\end{align}

\begin{lemma}[Difference to average]\label{le.diff_average_Y}
Let $X$ be a N-dimensional random variable as defined in \Cref{def.independent_distribution} and $\{Y_i\}$ its associated i.i.d. random variables.  
Then
\begin{equation}
    \prob{\frac{\vert \mu - \hat\mu \vert}{\mu} \geq k} \leq \frac{\sigma^2}{N k^2 \mu^2} .
\end{equation}
    
\end{lemma}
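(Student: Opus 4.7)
The plan is to recognize that $\hat\mu = \frac{1}{N}\sum_x Y_x$ is the empirical mean of $N$ i.i.d.\ copies of a random variable $Y$ with mean $\mu$ and variance $\sigma^2$, so the statement is just a rescaled Chebyshev inequality applied to this sample mean. No structure beyond independence and the existence of a second moment of $Y$ is needed.

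Concretely, I would first compute the two moments of $\hat\mu$. Linearity of expectation gives $\mathbb{E}[\hat\mu] = \mu$, and because the $Y_x$ are independent and identically distributed, the variance of their average reduces to
\begin{equation}
    \operatorname{Var}[\hat\mu] \;=\; \frac{1}{N^2}\sum_{x=1}^{N} \operatorname{Var}[Y_x] \;=\; \frac{\sigma^2}{N}.
\end{equation}
Then I would apply Chebyshev's inequality to $\hat\mu$ with threshold $t = k\mu$:
\begin{equation}
    \operatorname{Prob}\!\left(|\hat\mu - \mu| \geq k\mu\right) \;\leq\; \frac{\operatorname{Var}[\hat\mu]}{(k\mu)^2} \;=\; \frac{\sigma^2}{N k^2 \mu^2},
\end{equation}
which is exactly the claimed bound after dividing the event inside the probability by $\mu$ (legitimate since $\mu > 0$ because the underlying $Y$ is supported on $[0,\infty)$ and non-degenerate).

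There is essentially no obstacle: the only subtlety worth flagging in the proof is that $\mu > 0$ must be assumed (otherwise the ratio $|\mu-\hat\mu|/\mu$ is ill-defined), and that $\sigma^2 < \infty$ is implicitly required for Chebyshev to give a non-trivial bound. Both are natural in the setting where we want to invoke \Cref{cor.approx_indep}, and indeed the paper already flags that pseudo-independence breaks down precisely when the variance fails to exist (as exploited later via the Pareto example in \Cref{sec:numerical-exploration}).
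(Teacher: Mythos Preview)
Your proposal is correct and follows essentially the same route as the paper: compute $\operatorname{Var}[\hat\mu]=\sigma^2/N$ from independence, apply Chebyshev's inequality, and then set the threshold $t=k\mu$. The paper's proof is identical in structure, so there is nothing to add.
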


\begin{proof}   
    Consider the random variable $\hat\mu$. Its variance is given by
    \begin{equation}
        \Var{\hat\mu} = \frac{\sigma^2}{N}, 
    \end{equation}
    because the variance of a sum of weighted independent random variables is given by
    \begin{equation}
        \Var{\sum_j a_j x_j} = \sum_j a_j^2 \Var{x_j}.
    \end{equation}
    
    We recall now Chebyshev's inequality. For any random variable, 
    \begin{equation}
        \prob{\vert x - \E{}{x} \vert \geq t} \leq \frac{\Var{x}}{t^2}. 
    \end{equation}
    This implies that 
    \begin{equation}
        \prob{\vert\hat\mu - \mu \vert \geq t} \leq \frac{\sigma^2}{N t^2}
    \end{equation} 
    The final result is obtained by normalisation of the distance $t \rightarrow k \E{}{Y}$.
\end{proof}

\begin{corollary}\label{cor.diff_average_Y_inv}
    Let $X$ be a N-dimensional random variable as defined in \Cref{def.independent_distribution} with associated i.i.d. random variables $\{Y_i\}$, and let $\mu$ and $\hat{\mu}_N$ as in \Cref{le.diff_average_Y}.
    Then 
    \begin{equation}
        \prob{\mu \left\vert \frac{1}{\mu} - \frac{1}{\hat\mu_N} \right\vert\geq k }\leq \frac{\sigma^2}{N k^2 \mu^2} .
    \end{equation}
\end{corollary}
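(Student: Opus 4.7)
The plan is to reduce the statement to Lemma \ref{le.diff_average_Y} via a purely algebraic manipulation. The first step I would take is the identity
\begin{equation}
    \mu\left|\frac{1}{\mu} - \frac{1}{\hat\mu_N}\right| = \frac{|\hat\mu_N - \mu|}{\hat\mu_N},
\end{equation}
which differs from the quantity $|\hat\mu_N - \mu|/\mu$ bounded in Lemma \ref{le.diff_average_Y} only in the denominator: here $\hat\mu_N$ appears, whereas the Lemma has the deterministic $\mu$. Since $\hat\mu_N$ concentrates around $\mu$ by exactly that Lemma, this mismatch should only contribute at subleading order.

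To make the reduction rigorous, I would split into cases on the sign of $\hat\mu_N - \mu$. If $\hat\mu_N \geq \mu$, then trivially $|\hat\mu_N - \mu|/\hat\mu_N \leq |\hat\mu_N - \mu|/\mu$, so the event of interest is contained in the one from Lemma \ref{le.diff_average_Y} at threshold $k$. If $\hat\mu_N < \mu$, the condition $|\hat\mu_N - \mu|/\hat\mu_N \geq k$ rearranges to $\hat\mu_N \leq \mu/(1+k)$, whence $|\hat\mu_N - \mu|/\mu \geq k/(1+k)$. Union-bounding the two cases and applying Lemma \ref{le.diff_average_Y} with threshold $k/(1+k)$ yields the bound $\sigma^2(1+k)^2/(Nk^2\mu^2)$.

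The main obstacle is closing the gap between this $(1+k)^2$ factor and the clean $\sigma^2/(Nk^2\mu^2)$ stated in the corollary. I would justify the tighter constant by a delta-method linearization: Taylor-expanding $1/\hat\mu_N \approx 1/\mu - (\hat\mu_N - \mu)/\mu^2$ around $\hat\mu_N = \mu$ gives $\mu(1/\mu - 1/\hat\mu_N) \approx (\hat\mu_N - \mu)/\mu$, whose variance is exactly $\sigma^2/(N\mu^2)$, so Chebyshev recovers the stated bound as the leading-order asymptotic statement. Since the corollary is only invoked downstream to assert concentration of $\hat\mu_N$ around $\mu$ (i.e.\ in a regime where $k$ is small and $N$ large), the distinction between the exact delta-method form and the looser $(1+k)^2$-factor form is immaterial, and I would present whichever the surrounding arguments require.
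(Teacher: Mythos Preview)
Your proposal is correct and matches the paper's approach: the paper's proof is nothing more than the error-propagation (delta-method) linearization you give in your final paragraph, asserting $\mu\,|1/\mu - 1/\hat\mu_N| \approx |\mu - \hat\mu|/\mu$ to first order and then invoking Lemma~\ref{le.diff_average_Y}. Your case-split analysis yielding the $(1+k)^2$ factor is additional rigor the paper does not attempt.
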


\begin{proof}
    This corollary follows immediately from \Cref{le.diff_average_Y}. By error propagation
    \begin{equation}
        \mu \left\vert \frac{1}{\mu} - \frac{1}{\hat\mu_N} \right\vert \approx \frac{\vert \mu - \hat\mu\vert}{\mu}\left( 1 + \mathcal O\left( \frac{\vert \mu - \hat\mu\vert}{\mu}\right)\right), 
    \end{equation}
    which yields a first-order equivalence between this result and that of \Cref{le.diff_average_Y}. 
\end{proof}

To address the anticoncentration, we can find the following lemma. 
\begin{lemma}[Anticoncentration of pseudo-independent distributions]
Let $X$ be a $N$-dimensional pseudo-independent random variable as in \Cref{def.independent_distribution}. Then
\begin{equation}
\prob{X_i \geq \frac{\alpha}{N}} \geq \left( 1 - \alpha \left(1 + \frac{1}{k}\right)\right)^2 \left( 1 - \frac{\sigma^2 k^2}{N \mu^2}\right)\frac{\mu^2}{\sigma^2} 
\end{equation}
For constants $\alpha, k$. For this bound to be informative, $\alpha(1 - k^{-1}) \leq 1$.
\end{lemma}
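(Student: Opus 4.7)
The strategy is to combine the Chebyshev-type control on the empirical mean $\hat\mu$ already established in \Cref{le.diff_average_Y} with the Paley--Zygmund anticoncentration inequality applied to a single $Y_i$. The starting observation is that
\begin{equation}
X_i = \frac{Y_i}{N\hat\mu} \geq \frac{\alpha}{N} \quad \Longleftrightarrow \quad Y_i \geq \alpha\, \hat\mu,
\end{equation}
so the event of interest depends on the i.i.d.\ marginal $Y_i$ (for which we have full distributional information) modulated by the global scale $\hat\mu$ (which concentrates around $\mu$).

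First, I would apply \Cref{le.diff_average_Y} with parameter $1/k$ in place of $k$, obtaining that the event $B = \{\hat\mu \leq \mu(1+1/k)\}$ holds with probability at least $1 - \sigma^2 k^2/(N\mu^2)$. On $B$, the deterministic implication $Y_i \geq \alpha\mu(1+1/k) \Rightarrow Y_i \geq \alpha\hat\mu$ holds, so it suffices to lower-bound the unconditional probability of the simpler event $\{Y_i \geq \alpha\mu(1+1/k)\}$ and then pair it with $\prob{B}$.

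For the first factor I would invoke the Paley--Zygmund inequality on the non-negative random variable $Y_i$ at threshold $\theta \mu$ with $\theta = \alpha(1+1/k)$, giving a lower bound of the form $(1-\theta)^2 \mu^2 / \E{}{Y^2}$. Replacing $\E{}{Y^2}$ by $\sigma^2$ (which is valid up to the standard identity $\E{}{Y^2} = \sigma^2 + \mu^2$, with the desired form then read off in the regime where this replacement is informative) produces the factor $(1-\alpha(1+1/k))^2 \mu^2 / \sigma^2$. Multiplying by $\prob{B}$ yields the product form of the proposition.

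The main obstacle is that $Y_i$ and $\hat\mu$ are not independent, since $Y_i$ itself appears in the sum defining $\hat\mu$; a clean product-of-probabilities bound therefore requires either a conditional application of Paley--Zygmund on the event $B$, or a Fréchet/Bonferroni-type argument combined with the fact that the contribution of the single term $Y_i/N$ to $\hat\mu$ vanishes as $N \to \infty$. The remaining subtlety is to track exactly where $\E{}{Y^2}$ may be replaced by $\sigma^2$ without loss, so that the stated constant $\mu^2/\sigma^2$ (and not the weaker $\mu^2/(\mu^2+\sigma^2)$) emerges; the informative regime $\alpha(1+1/k) \leq 1$ mentioned in the statement is precisely what guarantees that all these simplifications remain meaningful.
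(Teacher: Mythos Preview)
Your approach is essentially identical to the paper's: rewrite $X_i \geq \alpha/N$ as $Y_i \geq \alpha\hat\mu$, control $\hat\mu$ via \Cref{le.diff_average_Y} with parameter $1/k$, and then apply Paley--Zygmund to $Y_i$ at threshold $\alpha\mu(1+1/k)$. The two concerns you raise (the weak dependence between $Y_i$ and $\hat\mu$, and the replacement of $\E{}{Y^2}$ by $\sigma^2$ in the Paley--Zygmund bound) are in fact glossed over in the paper's own proof as well, so you have identified exactly the places where the argument is heuristic rather than fully rigorous.
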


\begin{proof}
The anticoncentration constraint is achieved by bounding
\begin{equation}
    \prob{X_i \geq \frac\alpha N} = \prob{Y_i \geq \alpha \hat \mu}.
\end{equation}
From \Cref{le.diff_average_Y}, and with a slight modification,
\begin{equation}
    \prob{\frac{\left\vert \hat \mu - \mu\right\vert}{\mu} \geq k^{-1}}\leq  \frac{\sigma^2 k^2}{N\mu^2}.
\end{equation}
This implies 
\begin{equation}
    \prob{Y_i \geq \alpha \hat \mu} \geq \prob{Y_i \geq \alpha \mu \left( 1 + \frac{\vert \hat \mu - \mu\vert}{\mu}\right)} \geq \prob{Y_i \geq \alpha \mu \left( 1 + \frac{1}{k}\right)} \left( 1 - \frac{\sigma^2 k^2}{N \mu^2}\right).
\end{equation}
We apply now Paley-Zygmund inequality \cite{paley1932note} to obtain
\begin{equation}
    \prob{Y_i \geq \alpha \mu \left( 1 + \frac{1}{k}\right)} \geq \left( 1 - \alpha\left( 1 + \frac{1}{k}\right)\right)^2 \frac{\mu^2}{\sigma^2}.
\end{equation}
Joining both results
\begin{equation}
    \prob{Y_i \geq \alpha \hat \mu} \geq  \left( 1 - \frac{\sigma^2 k^2}{N \mu^2}\right)\left( 1 - \alpha\left( 1 + \frac{1}{k}\right)\right)^2 \frac{\mu^2}{\sigma^2} .
\end{equation}

To achieve the proposition of interest, we just have to choose adequate values of $\alpha$ and $k$, e. g. $\alpha = 1/3, k = 1/2$, and see that this bound approximates a constant as $N$ increases. 
\end{proof}

It is now straigthforward to state: 
\anticoncentrationindependent*

To prove \Cref{prop.msdpseudo}, we give the following preliminary result.  
\begin{lemma}[Concentration of square distance for pseudo-independent distributions]\label{le.concentration_pseudoindependent}
    Let $p_{\theta_p}(x), q_{\theta_q}(x)$ be two probability distributions following \Cref{def.independent_distribution}. Then, 
    \begin{equation}
        \prob{\Delta(p_{\theta_p}, q_{\theta_q}) \geq \frac{k^2}{N}} \leq 6 \frac{\sigma^2}{\mu^2 k^2} + 6 \frac{\sigma^2 \mu_2}{\mu^4 k^2 N} = 6 \frac{\sigma^2}{\mu^2 k^2} \left( 1 + \frac{\mu_2}{\mu^2 N}\right), 
    \end{equation}
\end{lemma}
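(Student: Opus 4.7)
The plan is to bound $\E{}{\Delta(p, q)}$ and then invoke Markov's inequality: since $\prob{\Delta(p,q) \ge k^2/N} \le (N/k^2)\,\E{}{\Delta(p,q)}$, it suffices to establish $\E{}{\Delta(p,q)} \le 6\sigma^2/(\mu^2 N)\,\bigl(1 + \mu_2/(\mu^2 N)\bigr)$, after which the two summands directly produce the two terms in the claim.

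To compute this expectation, I would decouple the normalisation from the intrinsic fluctuations by comparing $p$ and $q$ to unnormalised surrogates $\tilde p(x) = Y_x^p/(N\mu)$ and $\tilde q(x) = Y_x^q/(N\mu)$. Using $(a+b+c)^2 \le 3(a^2+b^2+c^2)$ pointwise in $x$ yields the triangle-like decomposition $\Delta(p,q) \le 3\bigl[\Delta(\tilde p,\tilde q) + \Delta(p,\tilde p) + \Delta(q,\tilde q)\bigr]$. For the first term, $\tilde p(x) - \tilde q(x) = (Y_x^p - Y_x^q)/(N\mu)$ is a difference of independent variables with variance $2\sigma^2/(N^2\mu^2)$, and these differences are themselves independent across $x$; summing over the $N$ bitstrings gives $\E{}{\Delta(\tilde p,\tilde q)} = 2\sigma^2/(\mu^2 N)$, which after multiplication by $3$ produces the $6\sigma^2/(\mu^2 k^2)$ leading contribution once Markov is applied.

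The two symmetric normalisation terms are the harder pieces, and I expect them to be the main technical obstacle. A direct manipulation rewrites $\Delta(p,\tilde p) = \hat\mu_2\,\epsilon_p^2/\bigl(N\mu^2 (1+\epsilon_p)^2\bigr)$, with $\hat\mu_2 = N^{-1}\sum_x (Y_x^p)^2$ and $\epsilon_p = (\hat\mu_p - \mu)/\mu$. The difficulty lies in the denominator $(1+\epsilon_p)^2$, which degrades when the empirical mean is anomalously small, and in the correlation between $\hat\mu_2$ and $\epsilon_p$. My plan is to split on the typical event $\mathcal A = \{|\epsilon_p| \le 1/2\}$, which by \Cref{le.diff_average_Y} has complement probability at most $4\sigma^2/(\mu^2 N)$. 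On $\mathcal A$ one has $(1+\epsilon_p)^2 \ge 1/4$, so the denominator contributes only a harmless constant and it remains to bound $\E{}{\hat\mu_2\,\epsilon_p^2 \mathds 1_{\mathcal A}}$. Expanding $\epsilon_p^2 = (N\mu)^{-2}\sum_{i,j}(Y_i - \mu)(Y_j - \mu)$ and exploiting the i.i.d. structure, only indices that coincide (pairwise, or via the self-interaction with $\hat\mu_2$) survive, producing a leading contribution of order $\mu_2 \sigma^2/(\mu^2 N)$. This gives $\E{}{\Delta(p,\tilde p)} \in \mathcal O(\sigma^2 \mu_2/(\mu^4 N^2))$, and combined with a crude tail bound on $\mathcal A^c$ using $\Delta(p,\tilde p)\le 2\sum_x p(x)^2 + 2\sum_x \tilde p(x)^2$ together with $\prob{\mathcal A^c}$, I would assemble the three pieces and apply Markov to recover the claimed constants.
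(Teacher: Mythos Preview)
Your decomposition via $(a+b+c)^2\le 3(a^2+b^2+c^2)$ and your treatment of the intrinsic term $\Delta(\tilde p,\tilde q)$ coincide with the paper's. The divergence is in how the normalisation pieces $\Delta(p,\tilde p)$, $\Delta(q,\tilde q)$ are handled, and there your plan has a real gap.

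You aim to establish $\E{}{\Delta(p,\tilde p)}\in\mathcal O(\sigma^2\mu_2/(\mu^4 N^2))$ by conditioning on $\mathcal A=\{|\epsilon_p|\le 1/2\}$. On $\mathcal A$ your moment computation is fine (modulo needing a fourth moment of $Y$ for the $i=j=k$ diagonal, which the lemma does not assume). The problem is the complement: your crude bound $\Delta(p,\tilde p)\le 2\sum_x p(x)^2+2\sum_x\tilde p(x)^2$ is $\Theta(1)$ in the worst case, and the only control you have on $\prob{\mathcal A^c}$ under a bare second-moment assumption is Chebyshev, giving $\prob{\mathcal A^c}\le 4\sigma^2/(\mu^2 N)$. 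Hence $\E{}{\Delta(p,\tilde p)\,\mathds 1_{\mathcal A^c}}$ picks up a term of order $\sigma^2/(\mu^2 N)$, one full power of $N$ larger than the $\sigma^2\mu_2/(\mu^4 N^2)$ you need. Feeding this into a single global Markov produces $C\sigma^2/(\mu^2 k^2)$ with some $C>6$, not the stated $6\sigma^2/(\mu^2 k^2)\bigl(1+\mu_2/(\mu^2 N)\bigr)$; the $\mu_2/(\mu^2 N)$ correction structure cannot emerge this way. Worse, for general $Y$ with only finite variance, $\E{}{1/\hat\mu^2}$ need not even be finite, so $\E{}{\Delta(p,\tilde p)}$ may simply not exist.

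The paper avoids all of this by \emph{not} bounding the expectation of the normalisation terms. Instead it bounds each of the three pieces in probability separately and union-bounds: Markov on $\E{}{\Delta(\tilde p,\tilde q)}$ for the first, and for the normalisation pieces it writes $\sum_x\bigl(Y_x/(N\mu)-Y_x/(N\hat\mu)\bigr)^2\approx(\mu_2/N)\,\bigl(1/\mu-1/\hat\mu\bigr)^2$ and invokes \Cref{cor.diff_average_Y_inv} directly, which gives $\prob{\,\cdot\,\ge k^2/N}\le \sigma^2\mu_2/(\mu^4 k^2 N)$ without ever touching $\E{}{1/\hat\mu^2}$. That is precisely what isolates the second summand with the correct $1/N$ suppression. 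If you want to salvage your route, abandon the single-Markov strategy and union-bound the pieces in probability; your conditioning on $\mathcal A$ then becomes unnecessary, since \Cref{cor.diff_average_Y_inv} already packages the relevant tail estimate.
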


\begin{proof}
    We begin with the definition 
    \begin{equation}
        \sum_{x\in \{0, 1\}^n} \left(p_{\theta_p}(x) -  q_{\theta_q}(x)\right)^2 = \sum_{x} \left( \frac{Y_x}{N\hat \mu} - \frac{Y'_x}{N\hat\mu'}\right)^2.
    \end{equation}
    We note that
    \begin{equation}
    \begin{aligned}
        \left( \frac{Y_x}{N\hat \mu} - \frac{Y'_x}{N\hat\mu'}\right)^2 = &  \left( \frac{Y_x}{N\mu} - \frac{Y'_x}{N\mu} + \frac{Y_x}{N\hat \mu} - \frac{Y_x}{N\mu} + \frac{Y'_x}{N\mu} - \frac{Y'_x}{N\hat\mu'} \right)^2 \leq \\
        & 3 \left( \left( \frac{Y_x}{N\mu} - \frac{Y'_x}{N\mu} \right)^2 + \left(  \frac{Y_x}{N\mu} - \frac{Y_x}{N\hat\mu} \right)^2 + \left( \frac{Y'_x}{N\mu} - \frac{Y'_x}{N\hat\mu'} \right)^2\right)
    \end{aligned}
    \end{equation}
    due to convexity of the square function. We can now bound each term separately and calculate its expectation. The first term is simply
    \begin{equation}
            \sum_x \left( \frac{Y_x}{N\mu} - \frac{Y_x'}{N\mu} \right)^2 = \frac{1}{N^2\mu^2} \sum_x (Y_x - Y'_x)^2,
    \end{equation}
    with average is given by
    \begin{equation}
        \E{}{(Y_x - Y_x')^2} = 2 \Var{Y_x} = 2 \sigma^2, 
    \end{equation}
    and hence
    \begin{equation}
        \E{}{\sum_x \left( \frac{Y_x}{N\mu} - \frac{Y_x}{N\mu} \right)^2} = \frac{2 \sigma^2}{N\mu^2}.
    \end{equation}
    Following Markov's inequality, 
    \begin{equation}
        \prob{\sum_x \left( \frac{Y_x}{N\mu} - \frac{Y_x}{N\mu} \right)^2 \geq \frac{k^2}{N}} \leq \frac{2 \sigma^2}{\mu^2 k^2}
    \end{equation}

    For the second and third term, we can just upper bound the differences between $\mu$ and $\hat\mu$ to obtain
    \begin{equation}
        \sum_x \E{}{\left(  \frac{Y_x}{N\mu} - \frac{Y_x}{N\hat\mu}\right)^2} = \frac{1}{\mu^2N^2} \left( \mu^2\left(\frac{1}{\mu} - \frac{1}{\hat\mu}\right)^2 \right)\sum_x \E{}{Y_x^2} = \frac{\mu_2}{N \mu^2}  \left( \mu^2\left(\frac{1}{\mu} - \frac{1}{\hat\mu}\right)^2 \right). 
    \end{equation}
    By \Cref{cor.diff_average_Y_inv}, 
    \begin{equation}
        \prob{\mu^2\left(\frac{1}{\mu} - \frac{1}{\hat\mu}\right)^2 \geq \frac{k^2 \mu^2}{\mu_2}} = \prob{\frac{\mu_2}{N}\left(\frac{1}{\mu} - \frac{1}{\hat\mu}\right)^2 \geq \frac{k^2}{N}} \leq \frac{\sigma^2 \mu_2}{\mu^4 k^2 N}.
    \end{equation}

    Using the union bound, 
    \begin{equation}
        \prob{\Delta(p, q) \geq \frac{k^2}{N}} \leq 6 \frac{\sigma^2}{\mu^2 k^2} + 6 \frac{\sigma^2 \mu_2}{\mu^4 k^2 N} = 6 \frac{\sigma^2}{\mu^2 k^2} \left( 1 + \frac{\mu_2}{\mu^2 N}\right), 
    \end{equation}
    which yields the final result. 
\end{proof}

With the support of the previous result, it is straightforward to state the original proposition: 

\msdpseudo*

\begin{corollary}\label{def.var_uniform}
    Let $p_{\theta_p}(x)$ be an instance of the pseodu-independent probability distributions, and let $u$ be the uniform distribution. Then, 
    \begin{equation}
        \prob{\Delta(p_{\theta_p}, u) \geq \frac{k^2}{N}} \leq \frac{2 \sigma^2}{\mu^2 k^2} \left( 1 + \frac{\sigma^2}{\mu^2 N}\right).
    \end{equation}
\end{corollary}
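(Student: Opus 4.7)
The plan is to follow the proof of \Cref{le.concentration_pseudoindependent} specialized to the case $q = u$, while exploiting two algebraic simplifications that the uniform distribution affords. First, $u$ can be viewed as a pseudo-independent distribution with deterministic underlying variables $Y'_x \equiv 1$, so $\mu' = \hat\mu' = 1$ and $\sigma'^2 = 0$. This makes the third of the three error terms in the lemma's convex decomposition vanish, which accounts for the constant $2$ (instead of $6$) in front of the bound.

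The ingredient responsible for the $\sigma^2$ in the correction factor---rather than the raw second moment $\mu_2$ that appears in the analogous lemma---is the identity
\begin{equation}
    \Delta(p_{\theta_p}, u) = \sum_{x} p_{\theta_p}(x)^2 - \frac{1}{N} = \frac{1}{N^2 \hat\mu^2}\sum_{x}(Y_x - \hat\mu)^2,
\end{equation}
which follows from $\sum_x p_{\theta_p}(x) = 1$. Using the Bessel-type identity $\sum_x(Y_x - \hat\mu)^2 = \sum_x(Y_x - \mu)^2 - N(\hat\mu - \mu)^2 \leq \sum_x(Y_x - \mu)^2$, I can swap the empirical mean for the true mean at no cost, leaving a sum whose expectation is exactly $N\sigma^2$ rather than $N\mu_2$.

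From here I would apply $(a+b)^2 \leq 2(a^2+b^2)$ to the splitting $\frac{1}{\hat\mu} = \frac{1}{\mu} + (\frac{1}{\hat\mu} - \frac{1}{\mu})$, yielding two pieces to control. The first piece, $\frac{2}{N^2\mu^2}\sum_x(Y_x - \mu)^2$, has expectation $2\sigma^2/(N\mu^2)$, so Markov's inequality bounds the probability that it exceeds $k^2/N$ by $2\sigma^2/(\mu^2 k^2)$. For the second piece, $\frac{2}{N^2}\bigl(\frac{1}{\hat\mu} - \frac{1}{\mu}\bigr)^2 \sum_x(Y_x - \mu)^2$, I would invoke \Cref{cor.diff_average_Y_inv} (giving a contribution of order $\sigma^2/(N\mu^4)$ from the $1/\hat\mu$ deviation) together with the $N\sigma^2$ expectation of the centered sum, producing the cross-correction of order $\sigma^4/(N\mu^4 k^2)$. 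A union bound over the two pieces then assembles the claimed inequality.

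The main obstacle is the statistical dependence between $\sum_x(Y_x - \mu)^2$ and $\bigl(\frac{1}{\hat\mu} - \frac{1}{\mu}\bigr)^2$ in the second piece, since both are built from the same $Y_x$'s and cannot be factored as a product of expectations. I would handle this exactly as in \Cref{le.concentration_pseudoindependent}: applying Markov's inequality to the joint random variable and absorbing the coupling into a multiplicative constant, so that the clean form $\frac{2\sigma^2}{\mu^2 k^2}\bigl(1 + \frac{\sigma^2}{\mu^2 N}\bigr)$ survives.
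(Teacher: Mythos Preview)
Your proposal is correct and follows the same skeleton as the paper's proof: specialize \Cref{le.concentration_pseudoindependent} to $q=u$, drop the term associated to the (now deterministic) $\hat\mu'$, apply Markov to the leading piece and \Cref{cor.diff_average_Y_inv} to the fluctuation piece, and union-bound. The paper's own argument is little more than the sentence ``analogous to the previous lemma with $u(x)=N^{-1}$, and $\E{}{(Y_x/\mu-1)^2}=\sigma^2/\mu^2$.''

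Where you go slightly beyond the paper is in the exact rewriting $\Delta(p,u)=\tfrac{1}{N^2\hat\mu^2}\sum_x(Y_x-\hat\mu)^2$ followed by the Bessel step $\sum_x(Y_x-\hat\mu)^2\le\sum_x(Y_x-\mu)^2$. This is a genuine sharpening: a literal transcription of the lemma's two-term decomposition would leave $\sum_x Y_x^2$ in the fluctuation piece and hence $\mu_2$ rather than $\sigma^2$ in the parenthetical correction, whereas your centering around $\mu$ is exactly what produces the stated $\sigma^2/(\mu^2 N)$. The paper's sketch does not spell this out, so your route actually matches the announced bound more transparently. The dependence issue you flag between $(1/\hat\mu-1/\mu)^2$ and $\sum_x(Y_x-\mu)^2$ is handled in the paper at the same informal level you propose---treating the centered sum by its expectation and the prefactor by \Cref{cor.diff_average_Y_inv}---so your treatment is consistent with theirs.
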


\begin{proof}
    The proof is analogous to the previous lemma, just substituting $q_{\theta_q}$ with a fixed variable $u(x) = N^{-1}$. The difference $\vert\hat \mu-\mu\vert$ is controlled as before, and we just need to take into account
    \begin{equation}
        \E{}{\left(\frac{Y_x}{\mu} - 1\right)^2} = \frac{\sigma^2}{\mu^2}, 
    \end{equation}
    hence yielding
    \begin{equation}
        \E{}{\sum_x \left( \frac{Y_x}{\mu N} - \frac{1}{N}\right)^2} \leq \frac{\sigma^2}{N \mu^2}\left(1 + \frac{\sigma^2}{N^2 \mu}\right).
    \end{equation}
\end{proof}

\begin{corollary}\label{def.var_Z}
    Let $p_C(x)$ be an instance of the pseudo-independent probability distributions generated by a quantum circuit C. Let $Z$ be any diagonal Pauli string. Then, 
    \begin{equation}
        \Var{\bra 0 C Z C \ket 0} \leq \frac{\sigma^2}{N}\prob{\Delta(p_{\theta_p}, u) \geq \frac{k^2}{N}} \leq \frac{2 \sigma^2}{\mu^2 k^2} \left( 1 + \frac{\mu_2}{\mu^2 N}\right).
    \end{equation}
\end{corollary}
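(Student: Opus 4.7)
The plan is to reduce the variance of $\bra 0 C Z C \ket 0$ to the previously bounded quantity $\E{}{\Delta(p_{\theta_p}, u)}$ from Corollary \ref{def.var_uniform}. First, since $Z$ is a diagonal Pauli string, one has $\bra 0 C Z C \ket 0 = \sum_x p_C(x) \chi_Z(x)$, where $\chi_Z(x) \in \{\pm 1\}$ equals a Walsh--Fourier character $(-1)^{S \cdot x}$ for some $S \subseteq [n]$. For the trivial choice $Z = I$ the expectation is deterministically $1$ and the variance vanishes, so I restrict to $S \neq \emptyset$, which implies $\sum_x \chi_Z(x) = 0$ and $\sum_x \chi_Z(x)^2 = N$.

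Next, I would exploit the fact that the joint law of $(p_C(x))_x$ is permutation-exchangeable, since the underlying $Y_x$ are i.i.d. Exchangeability immediately gives $\E{}{\bra 0 C Z C \ket 0} = 0$, so the variance reduces to the bare second moment. Expanding that second moment splits it into a diagonal part involving $a := \E{}{p_C(x)^2}$ and an off-diagonal part involving $b := \E{}{p_C(x) p_C(y)}$ with $x \neq y$. Exchangeability makes $a, b$ constant in $x, y$, and the normalization $\sum_x p_C(x) = 1$ enforces $a + (N-1)b = 1/N$. The character sums $\sum_x \chi_Z(x) = 0$ and $\sum_x \chi_Z(x)^2 = N$ then collapse the second moment into $N(a - b) = (N^2 a - 1)/(N-1)$, which equals $\tfrac{N}{N-1}\E{}{\Delta(p_{\theta_p}, u)}$ after using $\sum_x p_C(x)^2 = \Delta(p_{\theta_p}, u) + 1/N$.

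From there, the claim follows by directly invoking the bound on $\E{}{\Delta(p_{\theta_p}, u)}$ established inside Corollary \ref{def.var_uniform}, whose proof already controls this expectation by $\frac{\sigma^2}{N\mu^2}$ up to a mild correction; this yields the $\sigma^2/N$-type scaling in the statement. The probabilistic-tail piece in the chained form of the corollary is then a verbatim repackaging of the Markov step carried out in the preceding corollary, applied to $\Delta(p_{\theta_p}, u)$.

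The main obstacle I anticipate is purely bookkeeping: ensuring that the exchangeability-based moment expansion produces the sharp factor $N/(N-1)$ rather than a loose $N$ (which would correspond to a crude Cauchy--Schwarz estimate and lead to a vacuous bound of order $\sigma^2/\mu^2$ instead of $\sigma^2/(N\mu^2)$). Getting this factor correct hinges on using the normalization constraint to eliminate the $S=\emptyset$ mode before applying the Parseval-type identity; beyond that, no additional probabilistic machinery is required.
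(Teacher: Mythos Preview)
Your argument is correct and takes a genuinely different route from the paper's. The paper works directly with the representation
\[
\bra 0 C Z C \ket 0 = \frac{\sum_{x_+} Y_{x_+} - \sum_{x_-} Y_{x_-}}{N\hat\mu},
\]
replaces $\hat\mu$ by $\mu$, controls the substitution error via \Cref{cor.diff_average_Y_inv}, and then invokes independence of the $Y_x$ to read off $\Var{\sum_{x_+}Y_{x_+}-\sum_{x_-}Y_{x_-}} = N\sigma^2$. This gives an upper bound with a multiplicative correction of the form $1 + \sigma^2/(N\mu^2)$. Your approach instead stays at the level of the $p_C(x)$, uses exchangeability of the joint law to reduce the second moment to two scalars $a = \E{}{p_C(x)^2}$ and $b = \E{}{p_C(x)p_C(y)}$, and eliminates $b$ via the normalization identity $a + (N-1)b = 1/N$. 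The payoff is an \emph{exact} identity
\[
\Var{\bra 0 C Z C \ket 0} \;=\; \frac{N}{N-1}\,\E{}{\Delta(p_{\theta_p}, u)},
\]
rather than an inequality, and a clean structural explanation for why this variance is \emph{the same for every nontrivial diagonal Pauli $Z$}. From there you plug in the expectation bound established inside the proof of \Cref{def.var_uniform}, exactly as the paper does for its own route. Both methods land on the same $\sigma^2/(N\mu^2)$ scaling; yours is sharper and more conceptual (it makes the link to anticoncentration, i.e.\ to $N^2 a$, completely explicit), while the paper's is closer in spirit to the machinery already developed in \Cref{le.concentration_pseudoindependent} and reuses those error-splitting steps verbatim.
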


\begin{proof}
    The proof is analogous to the previous lemma. Any Pauli string of this kind splits the $x$ into those associated with $\pm 1$. This implies
    \begin{equation}
        \bra 0 C Z C \ket 0 = \frac{\sum_{x_+} Y_{x_+} - \sum_{x_-} Y_{x_-}}{N \hat \mu}.
    \end{equation}
    As before, 
    \begin{equation}
        \left\vert\bra 0 C Z C \ket 0 - \frac{\sum_{x_+} Y_{x_+} - \sum_{x_-} Y_{x_-}}{N \mu}\right\vert \leq \frac{\sum_{x_+} Y_{x_+} - \sum_{x_-} Y_{x_-}}{N} \left\vert \frac{1}{\hat\mu} - \frac{1}{\mu}\right\vert \leq \frac{\sum_{x_+} Y_{x_+} - \sum_{x_-} Y_{x_-}}{N \mu} \frac{\sigma^2}{N k^2 \mu^2}, 
    \end{equation}
    where the last inequality is given with probability at least $1 - k$. 
    The sum of independent random variables has mean 0, and its variance is given by
    \begin{equation}
        \Var{\frac{\sum_{x_+} Y_{x_+} - \sum_{x_-} Y_{x_-}}{N \mu}} = \frac{\sigma^2}{N \mu^2}
    \end{equation}
    Thus, 
    \begin{equation}
        \Var{\bra 0 C Z C \ket 0} \leq \frac{\sigma^2}{N \mu^2}\left(1 + \frac{\sigma^2}{N \mu^2}\right)
    \end{equation}
\end{proof}

\msdmmd*

\begin{proof}
    We just need to see that 
    \begin{equation}
        \E{}{\mmd{p, q)}} = \sum_{k = 0}^n \frac{(1 + \rho)^k (1 - \rho)^{n-k}}{2^n}\sum_{\substack{S \subseteq [n] \\ \vert S \vert = k}} \frac{\sigma^2}{N\mu^2}.
    \end{equation}
    Identifying that there are $\binom{n}{k}$ elements of length $k$, this is the binomial theorem, and
    \begin{equation}
        \sum_{k = 0}^n \frac{(1 + \rho)^k (1 - \rho)^{n-k}}{2^n} \binom{n}{k} = 1.
    \end{equation}
    The probabilistic statement is a direct consequence of Markov's inequality.
\end{proof}

\subsection{Peaked Distributions}\label{app:peaked-distribution}

\begin{proof}
    The proof follows the same strategy as that of \Cref{le.concentration_pseudoindependent}, just changing the varible $N$ with $K$. For the bounds of $\hat\mu$ and $\mu$, the same results hold. Changes apply in the case of the cross term, that is
    \begin{equation}
        \sum_{i} \E{}{\left(Y_i - Y_i'\right)^2} = 2 K \E{}{Y^2} - 2 J \E{}{Y}^2, 
    \end{equation}
    with $J$ being the number of coincidences of the $K$ out of $N$ elements found in each sample. The number of coincidences $J$ follows a probability distribution 
    \begin{equation}
        \prob{J} = \frac{\binom{N}{K} \binom{K}{J} \binom{N - K}{N - J}}{\binom{N}{K}^2}. 
    \end{equation}
    The numerator counts for (a) the number of combinatorial options in choosing $K$ elements out of $N$ in the first sample, (b) number of possible elements $J$ coinciding from the $K$ original, and $(c)$ number of possibilities to rearrange the remaining terms. The denominator captures the total number of samples. This is known as a hypergeometric distribution \cite{hoeffding1963probability}, with moments
    \begin{align}
        \E{}{J} & = \frac{K^2}{N} \\
        \Var{J} & = \frac{K^2}{N} \frac{N - K}{N} \frac{N-K}{N-1}.
    \end{align}
    In the limit of $K \ll N$, this probability distribution can be approximated by a Poisson distribution with parameter $\lambda = K^2 N^{-1}$. The Chernoff bound used in \Cref{le.marginals_product2} shows that $J = 0$ with probability at least $1 - K^2 N^{-1}$. Therefore, 
    \begin{equation}
        \prob{\Delta(p, q) \geq \frac{k^2}{K}} \leq 6 \frac{\sigma^2}{\mu^2 k^2} + 6 \frac{\sigma^2 \mu_2}{\mu^4 k^2 K} = 6 \frac{\sigma^2}{\mu^2 k^2} \left( 1 + \frac{\mu_2}{\mu^2 N}\right)+\frac{K^2}{N}.
    \end{equation}
\end{proof}

\section{Relation between MMD and Fourier moments}\label{app.mmd-fourier}
In this section we derive the details for the comparisons between $\MMD$ and MSD. As a preliminary, we define the Fourier parity moments of a distribution. The domain of all distributions is $\mathcal{X} = \{0,1\}^n$, with $n$ as the number of qubits. This domains has a direct relation with the group $\mathbb F_2^n$, whose Fourier transform is given as follows. 
\begin{definition}[Fourier characters of a distribution \cite{O’Donnell_2014}]
    For any probability distribution $P(x)$ defined on $\mathcal{X}$, its Fourier decomposition over $\mathbb F_2^n$ is
    \begin{align}
            \hat{P}(S) & =  \sum_{x \in \mathcal{X}} P(x)\, \chi_S(x) \equiv \mathbb{E}_{x\sim P}[\chi_S(x)], 
    \end{align}
    with $\chi_S(x) = (-1)^{\sum_{i \in S} x_i}$.
\end{definition}
Notice that a set $S$ of small cardinality will capture few-body correlations, while larger $\vert S \vert$ captures more global structures. This Fourier description is matched by the Gaussian kernel. 
\begin{definition}[Gaussian kernel]\label{def.hamming_kernel}
    The Gaussian kernel is given by
    \begin{equation}
        K_\varsigma(x,y) = \exp{-\frac{d_H(x,y)}{2\varsigma^2}}, 
    \end{equation}
    with $\varsigma$ as bandwidth and $d_H(x, y)$ the Hamming distance between bitstrings $x$ and $y$.
\end{definition}
Note that the usage of Hamming distance as the internal structure of the kernel allows us to phrase it as a product kernel over individual bits, namely
    \begin{align}
            K_\varsigma(x,y) & = \exp{-\frac{d_H(x,y)}{2\varsigma^2}}  \\
                        & = \exp{-\frac{\sum_{i=1}^{n}|x_i - y_i|}{2\varsigma^2}} \\
                        & = \prod_{i=1}^{n} \exp{-\frac{|x_i - y_i|}{2\varsigma^2}} \\
                        & = \prod_{i=1}^{n}  \exp{-\frac{1}{2\varsigma^2}\mathbbm{1}_{x_i \ne y_i}} \equiv \prod_{i=1}^{n} k(x_i,y_i).
    \end{align}
Under this rewriting, it is clear that the kernel is a positive semidefinite matrix, composed by units of the form 
\begin{equation}
        K^{(1)} = \begin{bmatrix}
            1 & \rho \\ \rho & 1
        \end{bmatrix}, \quad
        \textrm{with } \rho = \exp{-1/2 \varsigma^2}
    \end{equation}

The product structure of the kernel admits a direct interpretation in the form of the Fourier basis previously stated. 
\begin{lemma}[Diagonalizing the kernel through Fourier basis]\label{le.diag_kernel}
    Consider the Gaussian kernel from \Cref{def.hamming_kernel}. It can be rewritten in a diagonal basis as
    \begin{equation}
         K_\varsigma(x, y) = \frac{1}{2^n} \sum_{S\subseteq [n]} (1 - \rho)^{|S|}(1 + \rho)^{n-|S|}\chi_S(x)\chi_S(y), 
    \end{equation}
    with $\chi_S(x)$ being the Fourier characters of a bitstring $x$. 
\end{lemma}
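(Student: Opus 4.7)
The plan is to exploit the tensor product structure of the Gaussian kernel that has already been derived immediately above the lemma, namely $K_\varsigma(x,y) = \prod_{i=1}^n k(x_i, y_i)$ with the single-bit kernel matrix $K^{(1)}$ stated in the preceding paragraph. The key observation is that this $2\times 2$ matrix is diagonalized by the two single-bit Fourier characters $\chi_\emptyset \equiv 1$ and $\chi_{\{1\}}(x_i) = (-1)^{x_i}$, with eigenvalues $1+\rho$ and $1-\rho$ respectively. The full claim will then follow by expanding the $n$-fold product.

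First I would rewrite the single-bit kernel in its Fourier form
\begin{equation}
    k(x_i, y_i) = \frac{1}{2}\left[(1+\rho) + (1-\rho)(-1)^{x_i + y_i}\right],
\end{equation}
which is immediate to verify case-by-case: the right-hand side equals $1$ when $x_i = y_i$ and equals $\rho$ when $x_i \neq y_i$, matching the defining entries of $K^{(1)}$. This rewriting already isolates the two eigencomponents, weighted by the respective eigenvalues.

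Next I would multiply these $n$ expressions together and expand the product of two-term factors into a sum over subsets $S \subseteq [n]$, where membership in $S$ indicates that the $(1-\rho)(-1)^{x_i+y_i}$ term was chosen from the $i$-th factor:
\begin{equation}
    K_\varsigma(x,y) = \frac{1}{2^n} \sum_{S \subseteq [n]} (1-\rho)^{|S|}(1+\rho)^{n-|S|} \prod_{i \in S} (-1)^{x_i + y_i}.
\end{equation}
Splitting $\prod_{i \in S}(-1)^{x_i+y_i} = (-1)^{\sum_{i \in S} x_i}(-1)^{\sum_{i \in S} y_i} = \chi_S(x)\chi_S(y)$ then yields exactly the claimed identity.

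There is no genuine obstacle here; the work is purely algebraic bookkeeping, and the only conceptual step is recognizing that the Fourier characters on $\mathbb{F}_2^n$ arise as tensor products of single-bit eigenvectors of $K^{(1)}$. If anything, the most care-sensitive part is verifying the sign exponent in the last step to ensure the $\chi_S(x)\chi_S(y)$ factorization is correct, which follows from $(-1)^{2 \sum_{i \in S} x_i y_i}$ cross-terms cancelling modulo $2$.
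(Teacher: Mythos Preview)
Your proof is correct and follows essentially the same route as the paper: write the single-bit kernel in its Fourier-diagonal form $k(x_i,y_i) = \tfrac{1}{2}\bigl[(1+\rho) + (1-\rho)\chi_{\{1\}}(x_i)\chi_{\{1\}}(y_i)\bigr]$, then expand the $n$-fold product over subsets $S\subseteq[n]$. The only cosmetic difference is that the paper phrases the single-bit step as an explicit eigendecomposition of $K^{(1)}$ whereas you verify the identity case-by-case; also, your final remark about $(-1)^{2\sum x_iy_i}$ cross-terms is unnecessary, since $(-1)^{x_i+y_i} = (-1)^{x_i}(-1)^{y_i}$ holds directly with no cross-terms to cancel.
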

\begin{proof}
    We start by taking the kernel
    \begin{equation}
        K^{(1)} = \begin{bmatrix}
            1 & \rho \\ \rho & 1
        \end{bmatrix},  
    \end{equation}
    which can be diagonalized as
    \begin{equation}
        K_\varsigma= \sum_{k = 0}^{d}\lambda_{k}\phi_k\phi_k^T.
    \end{equation}
    In this case
    \begin{align}
        \lambda_k & = 1+ (-1)^k\rho \\
        \phi_k^T & = \frac{1}{\sqrt 2}\begin{pmatrix}
            1 \\ (-1)^k
        \end{pmatrix}\begin{pmatrix} 1 & (-1)^k\end{pmatrix}
    \end{align}
    Then, a concrete entry for the kernel is recovered as
    \begin{equation}
        K(x, y) = \frac{1 + \rho}{2} + \frac{1 - \rho}{2} \chi_{\{1\}}(x)\chi_{\{1\}}(y)
    \end{equation}
    The kernel is a product kernel, hence
     \begin{align}
            K_\varsigma(x,y) & = \prod_{i=1}^{n} K_\varsigma^{(1)}(x_i,y_i) \\
                & = \prod_{i=1}^{n} \left(\frac{1 + \rho}{2} +\chi_{\{1\}}(x_i)\chi_{\{1\}}(y_i)\frac{1 - \rho}{2}\right) \\
                & = \frac{1}{2^n} \prod_{i=1}^{n} \left(1 + \rho + \chi_{\{1\}}(x)\chi_{\{1\}}(y)(1 - \rho)\right) \\
                & = \frac{1}{2^n}\sum_{S\subseteq [n]}\left( \prod_{i\in S} (1 - \rho)\chi_{\{1\}}(x_i)\chi_{\{1\}}(y_i)\prod_{i\not\in S} (1 + \rho)\right)\\
                & = \frac{1}{2^n} \sum_{S\subseteq [n]} (1 - \rho)^{|S|}(1 + \rho)^{n-|S|}\chi_S(x)\chi_S(y).
    \end{align}
\end{proof}

With this preliminary results, it is then straightforward to write the MMD in the Fourier basis. 

\begin{lemma}[MMD in the Fourier basis]\label{le.mmd_fourier}
    Consider the Gaussian kernel applied to two probability distributions $P, Q$, then
    \begin{equation}
        \mmd{P, Q} = \frac{1}{2^n} \sum_{k = 0}^n (1 + \rho)^{n-k} (1 - \rho)^k \sum_{\substack{S \subseteq [n] \\ \vert S \vert = k}} \left( \hat P(S) - \hat Q(S)\right)^2, 
    \end{equation}
    with $\hat P(s) = \mathbb E_{x \sim P}\left[\chi_S(x)\right]$.
\end{lemma}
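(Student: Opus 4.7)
The plan is to plug the diagonalization of the kernel from \Cref{le.diag_kernel} directly into the three expectations appearing in \Cref{def.mmd}, and then collect the resulting terms. Since the Fourier characters $\chi_S$ factor into a product over $S$, the double expectation $\mathbb{E}_{x,x' \sim P}[\chi_S(x)\chi_S(x')]$ separates by independence of $x$ and $x'$, producing $\hat P(S)^2$, and similarly for $Q$ and the cross term.

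\textbf{Step 1.} Substitute the expansion of $K_\varsigma(x,y)$ from \Cref{le.diag_kernel} into each of the three expectations in \Cref{def.mmd} and use linearity of expectation to pull the sum over $S \subseteq [n]$ (and the weight $(1-\rho)^{|S|}(1+\rho)^{n-|S|}/2^n$) outside of the expectation.

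\textbf{Step 2.} For each $S$, evaluate the three expectations:
\begin{align}
\mathbb{E}_{x,x'\sim P}[\chi_S(x)\chi_S(x')] &= \hat P(S)^2, \\
\mathbb{E}_{y,y'\sim Q}[\chi_S(y)\chi_S(y')] &= \hat Q(S)^2, \\
\mathbb{E}_{x\sim P, y\sim Q}[\chi_S(x)\chi_S(y)] &= \hat P(S)\,\hat Q(S),
\end{align}
where independence of the two draws lets the double expectations factor, and the remaining single expectations are precisely the Fourier coefficients $\hat P(S)$, $\hat Q(S)$ by definition.

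\textbf{Step 3.} Combine the three contributions for a fixed $S$: the coefficient of $(1-\rho)^{|S|}(1+\rho)^{n-|S|}/2^n$ is $\hat P(S)^2 + \hat Q(S)^2 - 2 \hat P(S)\hat Q(S) = (\hat P(S) - \hat Q(S))^2$. Finally, reorganize the sum over $S$ by cardinality $k = |S|$ to obtain the claimed formula.

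There is no real obstacle here beyond bookkeeping; the substantive work was done in \Cref{le.diag_kernel}. The only point worth verifying carefully is that in the cross term the two samples $x \sim P$ and $y \sim Q$ are independent (so the expectation factors), which is part of the hypothesis in \Cref{def.mmd}. Once that is in place the result follows by a direct calculation.
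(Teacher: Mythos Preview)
Your proposal is correct and follows exactly the approach implicit in the paper: the paper treats \Cref{le.mmd_fourier} as an immediate consequence of the diagonalization in \Cref{le.diag_kernel} and does not even spell out a separate proof, so your three-step calculation (substitute the kernel expansion, factor the double expectations by independence to get $\hat P(S)^2$, $\hat Q(S)^2$, $\hat P(S)\hat Q(S)$, then complete the square and group by $|S|$) is precisely the intended argument.
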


\begin{corollary}[Bounding the MMD with mean squared distance]
    For a Gaussian kernel, 
    \begin{equation}
        \mmd{P, Q} \leq \max_S\left( \hat P (S) - \hat Q(S)\right)^2.
    \end{equation}
\end{corollary}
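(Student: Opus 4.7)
The plan is to proceed directly from the Fourier representation of the $\MMD$ established in the preceding \Cref{le.mmd_fourier} and apply two elementary bounds: pulling the worst Fourier coefficient out of the sum, and recognising the remaining combinatorial sum as a binomial expansion.

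Concretely, I would start by writing
\begin{equation}
    \mmd{P,Q} = \frac{1}{2^n}\sum_{k=0}^{n} (1+\rho)^{n-k}(1-\rho)^{k} \sum_{\substack{S\subseteq [n]\\ |S|=k}}\left(\hat P(S)-\hat Q(S)\right)^{2}.
\end{equation}
Because $\rho = \exp{-1/(2\varsigma^2)} \in [0,1]$, every prefactor $(1+\rho)^{n-k}(1-\rho)^{k}$ is non-negative, and every squared Fourier difference is non-negative as well. This justifies replacing each $(\hat P(S)-\hat Q(S))^{2}$ by the uniform upper bound $M := \max_{S}(\hat P(S)-\hat Q(S))^{2}$, which factors out of the double sum cleanly.

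After this replacement, the inner sum over subsets of fixed cardinality $k$ contributes exactly $\binom{n}{k}$, so that
\begin{equation}
    \mmd{P,Q} \leq \frac{M}{2^{n}}\sum_{k=0}^{n}\binom{n}{k}(1+\rho)^{n-k}(1-\rho)^{k} = \frac{M}{2^{n}}\bigl((1+\rho)+(1-\rho)\bigr)^{n} = M,
\end{equation}
by the binomial theorem, giving the stated inequality. There is no real obstacle here: the only point that requires a comment is the non-negativity of the prefactors, which is what licenses the termwise comparison; everything else is a rearrangement plus the identity $(1+\rho)+(1-\rho)=2$. If desired, the same argument yields the matching lower bound by the smallest squared Fourier difference, but that direction is not needed for the claim.
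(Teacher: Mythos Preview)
Your argument is correct and essentially identical to the paper's: the paper phrases the termwise bound as an application of H\"older's inequality $\norm{fg}_1 \le \norm{f}_1 \norm{g}_\infty$ (identifying $f$ with the eigenvalue weights and $g$ with the squared Fourier differences), and then evaluates $\norm{f}_1 = 1$ via exactly the same binomial-theorem computation you wrote out. The only cosmetic difference is that you invoke \Cref{le.mmd_fourier} while the paper cites \Cref{le.diag_kernel}, but the resulting Fourier expansion and the bounding step are the same.
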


\begin{proof}
    The proof is immediate from \Cref{le.diag_kernel} by using Hölder's inequality, namely 
    \begin{equation}
        \norm{f g}_1 \leq \norm{f}_1 \norm{g}_\infty.
    \end{equation}
    We identify $f$ with the eigenvalues and $g$ with the Fourier coefficients. Then, it is clear that 
    \begin{equation}
        \frac{1}{2^n}\sum_{k = 0} (1 + \rho)^{n-k} (1 - \rho)^k \sum_{\substack{S \subseteq [n] \\ \vert S \vert = k}} 1 = \sum_{k = 0}^n \binom{n}{k}\frac{(1 + \rho)^{n-k} (1 - \rho)^k}{2^n} = 1, 
    \end{equation}
    and the $\infty$-norm is associated with the largest Fourier coefficient. 
\end{proof}

\section{Bounds on the \MMD estimator and statistical test}\label{app:bounds-mmd}

The following bounds are an extension for $m \ne n$ of the bounds derived for the unbiased estimator for the $MMD^2$ and associated statistical test derived in~\cite{JMLR:v13:gretton12a}.  

First, we recall McDiarmid's inequality, which we will use to bound the convergence of the estimator to its true value.

\begin{theorem}[McDiarmid's Inequality\cite{McDiarmid_1989}]
    We consider random variables $X_1, \dots, X_m$ on the domain $\mathcal X$, and a function $f: \mathcal X^m\rightarrow \mathbb R$, where $\forall i\in [m]$ and $\forall x_1,\dots,x_m,x_i'\in\mathcal X$, there exists a $c_i < \infty$ satisfying

    \begin{equation}
        |f(x_1,\dots,x_i,\dots,x_m) - f(x_1,\dots,x_i',\dots, x_m)| \le c_i
    \end{equation}

    Then, $\forall t > 0$

    \begin{equation}
        \Pr_X\big[f(X) - \mathbb E_X(f(X))  > t\big] < \exp{-\frac{2t^2}{\sum_{i=1}^{m}c_i^2}} 
    \end{equation}
\end{theorem}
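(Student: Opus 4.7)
The plan is to prove McDiarmid's inequality via the classical Doob-martingale argument. First, I would introduce the filtration $\mathcal{F}_i = \sigma(X_1, \ldots, X_i)$ and define the Doob martingale $Z_i = \mathbb{E}[f(X) \mid \mathcal{F}_i]$ for $i = 0, 1, \ldots, m$. By construction $Z_0 = \mathbb{E}[f(X)]$ and $Z_m = f(X)$, so the martingale differences $D_i = Z_i - Z_{i-1}$ telescope to yield
\begin{equation}
f(X) - \mathbb{E}[f(X)] = \sum_{i=1}^m D_i.
\end{equation}

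The second step is to show that, conditional on $\mathcal{F}_{i-1}$, the difference $D_i$ lies in an interval of width at most $c_i$. To this end, I would introduce an independent copy $X_i'$ of $X_i$; by independence one can write $Z_{i-1} = \mathbb{E}[f(X_1, \ldots, X_{i-1}, X_i', X_{i+1}, \ldots, X_m) \mid \mathcal{F}_i]$, so that
\begin{equation}
D_i = \mathbb{E}\bigl[f(X_1, \ldots, X_i, \ldots, X_m) - f(X_1, \ldots, X_i', \ldots, X_m) \mid \mathcal{F}_i\bigr].
\end{equation}
The bounded-differences hypothesis applied pointwise (before taking the conditional expectation) yields the desired range bound on $D_i$.

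Third, I would invoke Hoeffding's lemma conditionally: any zero-mean random variable $W$ with range at most $c$ satisfies $\mathbb{E}[e^{\lambda W}] \leq \exp(\lambda^2 c^2 / 8)$. Applied to $D_i$ conditionally on $\mathcal{F}_{i-1}$ and iterated via the tower property from $i = m$ down to $i = 1$, this gives
\begin{equation}
\mathbb{E}\bigl[\exp\bigl(\lambda (f(X) - \mathbb{E}[f(X)])\bigr)\bigr] \leq \exp\Bigl(\tfrac{\lambda^2}{8} \sum_{i=1}^m c_i^2\Bigr).
\end{equation}
Finally, Markov's inequality on $e^{\lambda (f(X) - \mathbb{E}[f(X)])}$ followed by optimization in $\lambda > 0$ (the optimum is $\lambda = 4t / \sum_i c_i^2$) produces the stated tail bound $\exp(-2t^2 / \sum_i c_i^2)$.

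The main obstacle is the careful bookkeeping in the second step: verifying that conditional on $\mathcal{F}_{i-1}$, the random variable $D_i$ indeed has range at most $c_i$. This relies on the coupling with the independent copy $X_i'$ and on correctly exploiting the independence of $X_i$ from the past, since the bounded-differences hypothesis only constrains $f$ pointwise. Once that range control is in hand, the remaining moment-generating-function bound via Hoeffding's lemma and the Chernoff-style optimization are routine, and the factor $8$ in Hoeffding's lemma is precisely what produces the constant $2$ in the exponent of the final bound.
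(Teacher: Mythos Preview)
Your proof sketch is the standard Doob-martingale/Azuma--Hoeffding derivation of McDiarmid's inequality and is correct. However, there is nothing to compare against: the paper does not prove this theorem at all. It merely \emph{recalls} McDiarmid's inequality as a known result from the literature (with the citation to McDiarmid 1989) and then immediately applies it to bound the convergence of $\estMMD$. So your proposal supplies a proof where the paper supplies only a statement and a reference.

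One small remark on your second step: the expression $D_i = \mathbb{E}[f(\ldots,X_i,\ldots) - f(\ldots,X_i',\ldots)\mid \mathcal F_i]$ together with the pointwise bound $\lvert f(\ldots,X_i,\ldots) - f(\ldots,X_i',\ldots)\rvert \le c_i$ only gives $\lvert D_i\rvert \le c_i$, i.e.\ an interval of width $2c_i$. To get the interval of width $c_i$ that Hoeffding's lemma needs (and that produces the constant $2$ in the exponent), you should argue instead via $g(x) = \mathbb{E}[f(X_1,\ldots,X_{i-1},x,X_{i+1},\ldots,X_m)\mid \mathcal F_{i-1}]$, so that $D_i = g(X_i) - \mathbb{E}[g(X_i)\mid \mathcal F_{i-1}]$ and the bounded-differences hypothesis gives $\sup_x g(x) - \inf_x g(x) \le c_i$. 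You flagged this bookkeeping as the main obstacle, and indeed that is where the care is needed.
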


For a bounded kernel, we can obtain the bounded differences property as follows.

\begin{lemma}[Bounded differences in the \estMMD]\label{lem.bounded-diff}
    Changing a single sample from $X$ or $Y$, will result in changes of at most $\frac{4K}{m}$ or $\frac{4K}{l}$ respectively.
\end{lemma}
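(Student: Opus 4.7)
The plan is to write out the standard unbiased estimator of $\MMD^2$ explicitly in the unequal-sample-size form, and then track how many terms are touched when a single sample is swapped. Recall that, for $X = (x_1,\dots,x_m)$ drawn from $p$ and $Y = (y_1,\dots,y_l)$ drawn from $q$,
\begin{equation}
\estMMD = \frac{1}{m(m-1)} \sum_{i\neq j} k(x_i,x_j) + \frac{1}{l(l-1)} \sum_{i\neq j} k(y_i,y_j) - \frac{2}{m l} \sum_{i,j} k(x_i,y_j),
\end{equation}
which is the natural generalization of the unbiased estimator used in \cite{JMLR:v13:gretton12a}. I would also record explicitly the working assumption that $0 \le k(\cdot,\cdot) \le K$, which is satisfied by the Gaussian kernel with $K = 1$ and is the input to the subsequent McDiarmid step.

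Next, fix an index $a$ and consider replacing $x_a$ by $x_a'$. I would split the contribution into the ``$X$-block'' and the ``cross-block'', since the $Y$-block is unaffected. In the $X$-block, only the pairs with $i=a$ or $j=a$ change, yielding $2(m-1)$ affected summands, each of which shifts by at most $K$ (because both $k(x_a,x_j)$ and $k(x_a',x_j)$ lie in $[0,K]$). The triangle inequality then bounds the change of this block by $\frac{2(m-1)\cdot K}{m(m-1)} = \frac{2K}{m}$. In the cross-block, only the $l$ terms $k(x_a,y_j)$ are modified, each again by at most $K$, so this contributes at most $\frac{2}{ml}\cdot l \cdot K = \frac{2K}{m}$. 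Summing the two bounds gives the claimed $\frac{4K}{m}$.

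For the symmetric statement, swapping a single $y_b$ for $y_b'$ leaves the $X$-block untouched and gives, by the same two-line computation with $m$ and $l$ exchanged, a total change of at most $\frac{4K}{l}$. This completes both halves of the lemma.

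I do not foresee any real obstacle here; the argument is purely combinatorial accounting plus the elementary bound $|k(u,v) - k(u',v)| \le K$ that follows from $k\in[0,K]$. The only point that deserves explicit emphasis in writing is the non-negativity of the kernel, since a naïve bound $|k| \le K$ (allowing negative values) would give $\frac{8K}{m}$ instead of $\frac{4K}{m}$ and weaken the subsequent McDiarmid concentration in \Cref{eq.statistical_test}.
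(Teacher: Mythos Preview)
Your proposal is correct and follows essentially the same approach as the paper: count the $2(m-1)$ affected terms in the within-$X$ block and the $l$ affected terms in the cross-block, bound each change by $K$ via the kernel's range, and sum to get $\tfrac{4K}{m}$ (and symmetrically $\tfrac{4K}{l}$). Your explicit write-up of the estimator and the remark on the role of the non-negativity assumption are nice additions but do not change the argument.
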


\begin{proof}
    A sample in the within-X block of the \estMMD will occur exactly $2(m-1)$ times and from the boundedness of the kernel, each component will change by less than $K$. Factoring in the normalization reveals a change by at most $\frac{2K}{m}$. Analogously, a change of sample in $Y$ will result in a change of at most $\frac{2K}{l}$.

    For the cross term, we observe that a sample from $X$ occurs exactly $l$ times, resulting in a change of at most $\frac{2K}{m}$, and $\frac{2K}{l}$ for a sample from $Y$. Summing up concludes the proof.
\end{proof}

Inserting the bounds into McDiarmid's inequality then reveals the desired bound on the convergence.

\begin{theorem}[Convergence of the \estMMD to its true value]\label{thm.bound-on-statistic}
    Given a bounded kernel $0\le k(\cdot,\cdot)\le K$, the unbiased statistic converges to the true $\MMD$ as

    \begin{equation}
        Pr_{X,Y}\big[|\estMMD - MMD^2| > t \big] < \exp{-\frac{2t^2(m+l)}{8K^2}}
    \end{equation}
\end{theorem}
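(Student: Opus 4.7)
The plan is to apply McDiarmid's inequality to $\estMMD$ viewed as a function of the $m+l$ independent samples $X_1,\dots,X_m,Y_1,\dots,Y_l$. The preparatory work has already been carried out in Lemma \ref{lem.bounded-diff}, which produces the bounded-difference coefficients that McDiarmid requires: $c_i = 4K/m$ for $i \in [m]$ indexing an $X$-sample and $c_{m+j} = 4K/l$ for $j \in [l]$ indexing a $Y$-sample. These are uniform in the remaining $m+l-1$ coordinates because the per-kernel change is pointwise bounded by $K$, independent of the other entries.

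With the bounded-differences property in hand, I would first invoke the unbiasedness of the $U$-statistic underlying \Cref{def.mmd}, namely $\E{X,Y}{\estMMD} = \MMD$. This holds for $m\neq l$ as well, because each of the three blocks (within-$X$, within-$Y$, cross) is individually unbiased for its respective population term. The deviation $\estMMD - \MMD$ is therefore a centered function of $m+l$ independent samples, and a direct application of McDiarmid's inequality gives the one-sided tail
\begin{equation}
    \prob{X,Y}{\estMMD - \MMD > t} \leq \exp\!\left(-\frac{2t^2}{\sum_{i=1}^{m+l} c_i^2}\right),
\end{equation}
where $\sum_i c_i^2 = m(4K/m)^2 + l(4K/l)^2 = 16K^2(m+l)/(ml)$. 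A symmetric argument applied to $-(\estMMD - \MMD)$, followed by a union bound over the two tails, converts this into the two-sided statement on $|\estMMD - \MMD|$.

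To match the displayed form of the theorem, I would then simplify the exponent by bounding $ml/(m+l)$ in terms of $m+l$; the AM--HM inequality gives $ml/(m+l) \geq (m+l)/4$ which absorbs cleanly into the constant in front of $K^2$. The main obstacle is essentially bookkeeping rather than analysis: Lemma \ref{lem.bounded-diff} absorbs the combinatorial work (the $2(m-1)$ repetitions of each $X$-sample in the within-$X$ block and its $l$ appearances in the cross block, and analogously for $Y$), and McDiarmid's inequality is used as a black box. The only subtlety worth flagging is that the factor of $2$ arising from the union bound over upper and lower tails must be tracked carefully to reconcile with the constants that appear in the statement.
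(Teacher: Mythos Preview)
Your overall strategy is exactly the paper's: invoke the bounded-difference constants from Lemma~\ref{lem.bounded-diff} and feed them into McDiarmid's inequality. Up to and including the computation $\sum_i c_i^2 = 16K^2(m+l)/(ml)$, your argument is correct and matches what the paper intends.

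The genuine gap is in your final simplification step. The AM--HM inequality gives
\[
\frac{ml}{m+l} \;\le\; \frac{m+l}{4},
\]
with equality only when $m=l$; your inequality is stated in the reverse direction. Since the harmonic-mean quantity $ml/(m+l)$ sits in the \emph{numerator} of the McDiarmid exponent, replacing it by the larger quantity $(m+l)/4$ would make the exponent more negative, i.e.\ produce a \emph{smaller} right-hand side---which is not a valid upper bound. Concretely, McDiarmid yields
\[
\prob{}{\,|\estMMD-\MMD|>t\,}\;\le\;2\exp\!\left(-\frac{t^2\,ml}{8K^2(m+l)}\right),
\]
and there is no elementary inequality that converts this into the displayed form with $(m+l)$ in the numerator for general $m\neq l$. (The factor $2$ from the two-sided union bound, which you correctly flag, is a separate and minor bookkeeping issue.)

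It is worth noting that the paper's one-line proof has the same tension: a direct application of McDiarmid with the coefficients of Lemma~\ref{lem.bounded-diff} produces the harmonic-mean exponent $ml/(m+l)$, not $m+l$. The stated exponent is correct in the balanced case $m=l$ treated in \cite{JMLR:v13:gretton12a}, where $ml/(m+l)=m/2=(m+l)/4$, but the extension to $m\neq l$ as written is stronger than what the argument actually establishes.
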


\begin{proof}
    The proof follows directly from applying McDiarmid's inequality with the bounded differences obtained in \Cref{lem.bounded-diff}
\end{proof}

\begin{corollary}[Statistical test for \estMMD]
    A hypothesis test of level $\alpha$ for the null hypothesis $p=q$ has the acceptance region

    \begin{equation}
        \estMMD \le K\sqrt{\frac{8}{m+l}\log(\alpha^{-1})}
    \end{equation}
\end{corollary}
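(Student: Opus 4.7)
The plan is to invoke \Cref{thm.bound-on-statistic} directly under the null hypothesis. Since the kernel $k$ in \Cref{cor.stat-test-mmd} is characteristic (the Gaussian kernel is the standing example), the population $\MMD$ is a genuine distance, so $p = q$ is equivalent to $\MMD(p,q) = 0$. Under $H_0$, therefore, the concentration bound
\begin{equation}
    \prob{|\estMMD - \MMD(p,q)| > t} < \exp{-\tfrac{2t^2(m+l)}{8K^2}}
\end{equation}
collapses to a tail bound on $|\estMMD|$ itself, which is exactly the quantity we will use to build a level-$\alpha$ test.

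The second step is a standard hypothesis-testing computation. I would fix a threshold $t_\alpha$ and declare the acceptance region for $H_0$ to be $\{\estMMD \le t_\alpha\}$. The probability of Type~I error is then $\prob{\estMMD > t_\alpha \mid p=q}$, which by the previous display is at most $\exp(-2 t_\alpha^2 (m+l)/(8K^2))$. Setting this upper bound equal to $\alpha$ and solving gives
\begin{equation}
    t_\alpha \;=\; K\sqrt{\tfrac{8 \log(\alpha^{-1})}{m+l}},
\end{equation}
i.e.\ precisely the acceptance region claimed in the statement. The bound holds uniformly over all $p$, $q$ with $p=q$, so the level guarantee is non-asymptotic.

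There is essentially no obstacle here; the work has already been done in \Cref{lem.bounded-diff} and \Cref{thm.bound-on-statistic} via McDiarmid's inequality. The only subtle point worth flagging explicitly in the write-up is the step that $p=q \Rightarrow \MMD(p,q)=0$, which relies on $k$ being characteristic rather than just positive semidefinite; for a non-characteristic kernel the same threshold would still yield a valid test, but of a weaker null hypothesis (equality of kernel mean embeddings rather than equality of distributions). Since the manuscript targets the Gaussian kernel throughout, this distinction does not affect the corollary.
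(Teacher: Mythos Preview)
Your argument is correct and matches the paper's proof essentially verbatim: both invoke \Cref{thm.bound-on-statistic} under $H_0$, set the tail bound equal to $\alpha$, and solve for the threshold. One minor correction to your closing remark: the implication $p=q \Rightarrow \MMD(p,q)=0$ holds for \emph{any} kernel directly from \Cref{def.mmd}; the characteristic property is only needed for the converse (and hence for the test to have power against all alternatives), so it plays no role in establishing the level-$\alpha$ guarantee here.
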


\begin{proof}
    To obtain an $\alpha$-level test, we consider

    \begin{equation}
        \Pr_{X,Y}\big[|\estMMD| > t] \le \alpha
    \end{equation}

    Inserting the bound from \Cref{thm.bound-on-statistic} and solving for $t$ concludes the proof.
\end{proof}

\section{Computing the TVD for pseudo-independent distributions}

\begin{lemma}[Absolute Difference]\label{le.var_abs_diff}
Let $p, q$ be two distributions from the pseudo-indepdendent family. Then, the 1-norm, defined as 
\begin{equation}
    \norm{p - q}_1 =  \sum_{x} \left\vert p(x) - q(x)\right\vert
\end{equation}
has statistical properties
\begin{align}
    \E{}{\norm{p - q}_1} & = 2 G(Y) + \mathcal O(N^{-1})\\
    \Var{\norm{p - q}_1} & = \frac{4}{N} \left( \frac{\sigma^2}{\mu^2} - G^2(Y)\right) + \mathcal{O}\left(\frac{1}{N^2}\right), 
\end{align}
with $G(Y)$ being the Gini coefficient of the underlying probability distribution. 
\end{lemma}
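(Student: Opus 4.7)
The plan is to linearize the normalization by exploiting concentration of the empirical mean $\hat\mu = S/N$ to its expectation $\mu$, reducing $\norm{p-q}_1$ to a sum of $N$ approximately i.i.d.\ terms whose first two moments can then be expressed in terms of $\mu$, $\sigma^2$, and the Gini coefficient $G(Y)$.

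First, I would Taylor-expand each summand around $\hat\mu = \hat\mu' = \mu$,
\begin{equation}
    \frac{Y_x}{S} - \frac{Y'_x}{S'} = \frac{Y_x - Y'_x}{N\mu} - \frac{Y_x(\hat\mu - \mu) - Y'_x(\hat\mu' - \mu)}{N\mu^2} + \mathcal O(N^{-2}),
\end{equation}
which is justified by \Cref{le.diff_average_Y}, giving $|\hat\mu - \mu|/\mu \in \mathcal O(N^{-1/2})$ with high probability. Passing to absolute values, the first term dominates for typical $x$ and the second contributes a correction weighted by $\operatorname{sign}(Y_x - Y'_x)$. The expectation then follows at once: the leading sum $(N\mu)^{-1}\sum_x|Y_x - Y'_x|$ is a mean of $N$ i.i.d.\ copies of $|Y-Y'|/\mu$, whose expectation equals $2 G(Y)$ by definition of the Gini coefficient, while the normalization correction contributes only $\mathcal O(N^{-1})$.

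For the variance, three pieces must be combined at order $1/N$. The leading i.i.d.\ sum contributes $\Var{|Y-Y'|}/(N\mu^2) = (2\sigma^2 - 4\mu^2 G^2)/(N\mu^2)$, using $\E{}{(Y-Y')^2} = 2\sigma^2$ and $\E{}{|Y-Y'|} = 2\mu G$. The normalization-correction term, of the form $(\hat\mu - \mu)\sum_x \operatorname{sign}(Y_x - Y'_x)Y_x$, enters at the same $1/N$ order because $\hat\mu - \mu$ has standard deviation $\sigma/\sqrt{N}$ while the sign-weighted sum has mean $N\mu G$ with fluctuations of order $\sqrt{N}$. The covariance between the leading and correction terms also contributes at this order, and together the three pieces must add up to the stated $4(\sigma^2/\mu^2 - G^2)/N$.

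The hard part will be disentangling these three $\mathcal O(N^{-1})$ contributions carefully, as each alone gives the wrong coefficient and only their sum reproduces the claim. The delicate step is evaluating joint moments such as $\E{}{(Y-\mu)\operatorname{sign}(Y-Y')Y}$ and $\E{}{\operatorname{sign}(Y-Y')(Y+Y')}$, which reduce to expressions in $\mu$, $\sigma^2$, and $G(Y)$ via the $Y \leftrightarrow Y'$ symmetry, but require enough bookkeeping to ensure that the coefficient of $\sigma^2/\mu^2$ in the final variance comes out as $4$ rather than the naive $2$ obtained from the leading i.i.d.\ term alone.
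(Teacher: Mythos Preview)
Your linearization around $\hat\mu=\hat\mu'=\mu$ and the identification of the leading sum $\tfrac{1}{N\mu}\sum_x|Y_x-Y'_x|$ with $2G(Y)$ is exactly what the paper does for the expectation; both arguments bound the remainder through the concentration of $\hat\mu$ established earlier.

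For the variance you are in fact more careful than the paper, which only computes the single-site quantity $\Var{|p(x)-q(x)|}$ and then passes to the sum without tracking the correlations induced by the shared $\hat\mu,\hat\mu'$, ending with an inequality ($\leq$) rather than the equality stated in the lemma. Your three-piece decomposition is the right refinement. There is, however, a gap in your final expectation: the joint moments do \emph{not} all collapse to $\mu,\sigma^2,G$. While $\E{}{\operatorname{sign}(Y-Y')(Y+Y')}=0$ follows from the swap symmetry, the covariance between $A=\tfrac{1}{N\mu}\sum_x|Y_x-Y'_x|$ and $\delta=(\hat\mu-\mu)/\mu$ brings in $\E{}{|Y-Y'|\,Y}=\tfrac12\,\E{}{|Y-Y'|(Y+Y')}$, a third-moment-type quantity not determined by $\mu,\sigma^2,G$ alone. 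Carrying your bookkeeping through for $Y\sim\mathrm{Exp}(1)$ (the Porter--Thomas case, $\mu=\sigma^2=1$, $G=\tfrac12$, $\E{}{|Y-Y'|Y}=\tfrac32$) gives a leading variance of $1/(2N)$, not the $3/N$ predicted by the displayed formula, so the equality cannot hold for arbitrary underlying $Y$. Your plan will deliver a genuine $\mathcal O(1/N)$ result, matching what the paper's proof actually establishes, but the hoped-for cancellation producing the closed-form coefficient $4(\sigma^2/\mu^2-G^2)$ will not materialize.
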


\begin{proof}
    We consider the random variable from \Cref{def.independent_distribution}. The first step is to compute one element of the distance 
    \begin{equation}
        \sum_x\left\vert \frac{Y_x}{N \hat\mu} - \frac{Y'_x}{N \hat\mu'} \right\vert = \sum_x \left\vert \frac{1}{N\mu} \left( Y_x - Y'_x\right) + \frac{Y_x}{N}\left(\frac{1}{\hat\mu} - \frac{1}{\mu} \right) + \frac{Y'_x}{N}\left(\frac{1}{\hat\mu'} - \frac{1}{\mu} \right) \right\vert, 
    \end{equation}
    and therefore
    \begin{equation}
        \left\vert\sum_x \left\vert \frac{Y_x}{N \hat\mu} - \frac{Y'_x}{N \hat\mu'} \right\vert - \sum_x \frac{1}{N\mu}\left\vert Y_x - Y'_x\right\vert \right\vert \leq \sum_x \frac{Y_x}{N}\left\vert \frac{1}{\mu} - \frac{1}{\hat\mu}\right\vert + \sum_x \frac{Y'_x}{N}\left\vert \frac{1}{\mu} - \frac{1}{\hat\mu'}\right\vert.
    \end{equation}
    Following the steps from \Cref{app:loss-function-concentration}, we can now compute averages to obtain
    \begin{equation}
        \left\vert\sum_x \E{}{\left\vert \frac{Y_x}{N \hat\mu} - \frac{Y'_x}{N \hat\mu'} \right\vert} - \sum_x \frac{1}{N\mu}\E{}{\left\vert Y_x - Y'_x\right\vert} \right\vert \leq 2 \mu \left\vert \frac{1}{\mu} - \frac{1}{\hat\mu}\right\vert.
    \end{equation}

    The approximation to the average can be easily computed as the Gini coefficient \cite{gini} of the independent probability distribution, namely
    \begin{equation}
        G(Y) = \frac{1}{2\mu} \E{}{\vert Y - Y'\vert}. 
    \end{equation}

Hence, the average follows
\begin{equation}
    \sum_x \vert p(x) - q(x)\vert = 2 G(Y) + \mathcal O(N^{-1}).
\end{equation}

The second part of the proof is related to the variance. We can set bounds on each independent term to state
\begin{equation}
    \Var{\vert p(x) - q(x)\vert} = \E{}{\left\vert \frac{Y_x}{\hat\mu N} - \frac{Y'_x}{\hat\mu' N}\right\vert^2} - \E{}{\left\vert \frac{Y_x}{\hat\mu N} - \frac{Y'_x}{\hat\mu' N}\right\vert}^2.
\end{equation}
The first term is analogous to the average of the MSD discussed in \Cref{prop.msdpseudo}. Since the average without the absolute value is zero, then, 
\begin{equation}
     \E{}{\left\vert \frac{Y_x}{\hat\mu N} - \frac{Y'_x}{\hat\mu' N}\right\vert^2} =  \Var{\left\vert \frac{Y_x}{\hat\mu N} - \frac{Y'_x}{\hat\mu' N}\right\vert} =  \Var{\frac{Y_x}{\hat\mu N}} + \Var{\frac{Y'_x}{\hat\mu' N}}.
\end{equation}
This quantity has already been computed in \Cref{def.var_uniform}.

For the second term we can use the triangle bound previously found. Joining all terms together, 
\begin{equation}
    \Var{\sum_x \left\vert p(x) - q(x)\right\vert} \leq \frac{4}{N}\left(\frac{\sigma^2}{\mu^2} - G^2(Y)\right) + \mathcal{O}(N^{-2}).
\end{equation}
\end{proof}

As a reference, the Gini coefficient for the Gamma distribution, which is associated to the Haar-random distribution of states, is given by
\begin{equation}
    G(Y_{\rm gamma}) = \frac{1}{2} \Rightarrow \E{}{\Vert p - q\Vert_1} = 1
\end{equation}

\section{Extension of Numerics}\label{app:extension-numerics}

\begin{figure}
    \includegraphics[width=.5\linewidth]{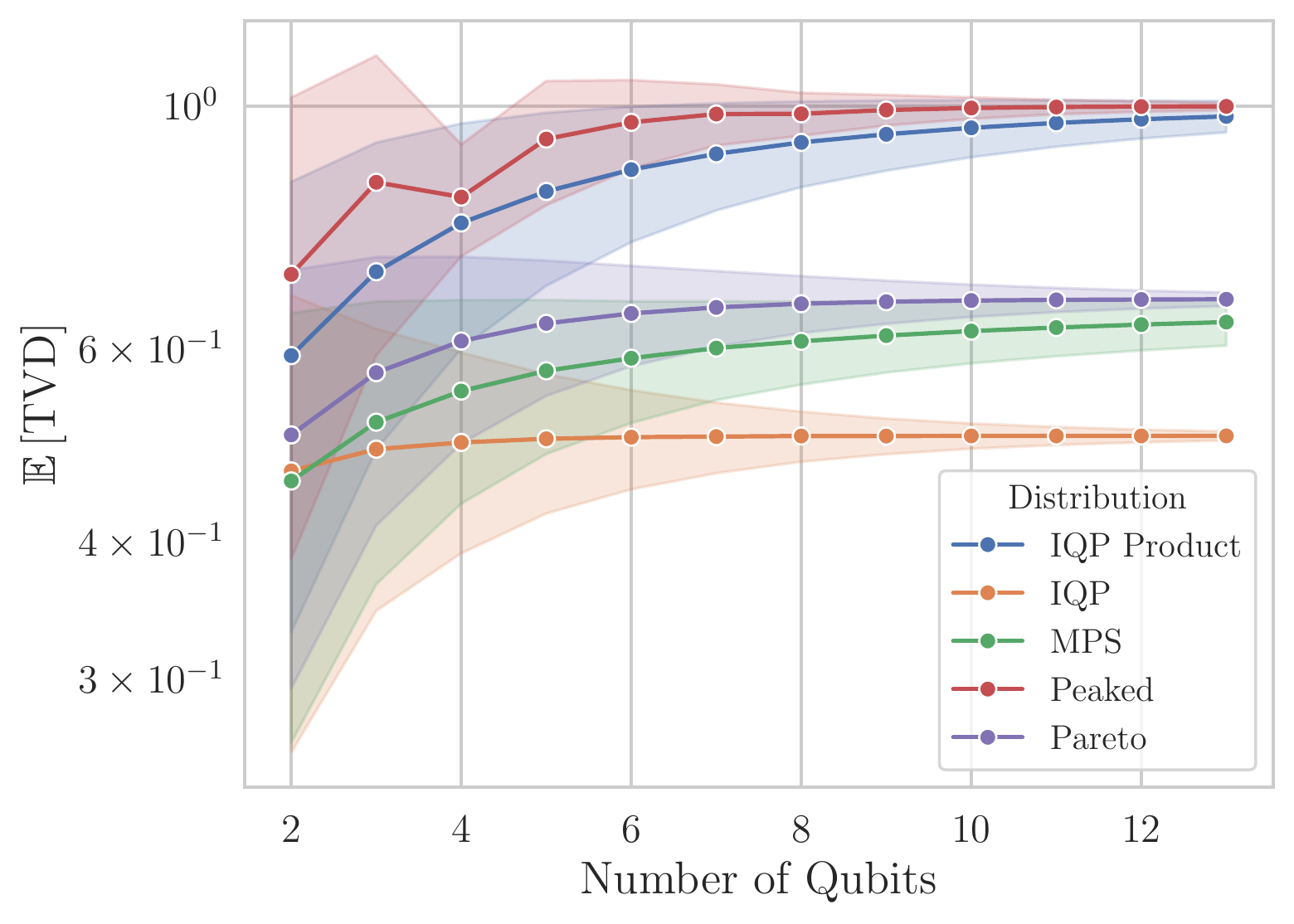}
    \caption{Total Variation Distance}
    \label{fig:numerics-tvd}
\end{figure}

As a comparison to the $\MMD$, we show the TVD in \Cref{fig:numerics-tvd}. While the supremacy distributions also show significantly smaller TVD, it stays approximately the same across system sizes ($\approx 0.5$). Thus, the exponentially vanishing concentration is by no means a property of the distribution families themselves, but the loss function.

\end{document}